\newcommand\SANSCOMMENTAIRE[1]{#1}
\newcommand\shorter[2][]{}
\title{A characterization of functions over the integers computable in polynomial time using discrete  differential equations~\thanks{This work was funded by the ANR - project Difference}
}
\author{Olivier Bournez\thanks{Laboratoire d'Informatique de l'X (LIX), CNRS, Ecole Polytechnique,  Institut
	  Polytechnique de Paris, 91128 Palaiseau
	 Cedex, France
}
\and Arnaud Durand\thanks{Université Paris Cité, CNRS, IMJ-PRG, Paris, France
}
}
\newcommand\vectorl[1]{{\mathbf#1}}
\newcommand\vx{\vectorl{x}}
\newcommand\R{\mathbb{R}}
\newcommand\Q{\mathbb{Q}}
\newcommand\N{\mathbb{N}}
\newcommand\Z{\mathbb{Z}}
\newcommand\tZ{\tilde{\Z}}
\newcommand\Greg{Grzegorczyk}
\newcommand\E{\mathcal{E}}
\newcommand\REC{{\rm REC}}
\newcommand\SMIN{{\rm SMIN}}
\newcommand\limREC{{\rm BR}}
\newcommand\BRN{{\rm BRN}}
\newcommand\BSUM{{\rm BSUM}}
\newcommand\BPROD{{\rm BPROD}}
\newcommand\BSUMs{{\rm BSUM_{<}}}
\newcommand\BPRODs{{\rm BPROD_{<}}}
\newcommand\ODE{{\rm ODE}}
\newcommand\LI{{\rm LI}}
\newcommand\Elem{\mathcal{E}}
\newcommand\Gregn{\mathcal{E}_n}
\newcommand\PR{\mathcal{P}\mathcal{R}}
\newtheorem{theorem}{Theorem}[section]
\newtheorem{lemma}[theorem]{Lemma}
\newtheorem{corollary}[theorem]{Corollary}
\newtheorem{definition}[theorem]{Definition}
\newtheorem{example}[theorem]{Example}
\newtheorem{remark}[theorem]{Remark}
\def\propositions{proposition}
\newenvironment{oureqnarray}{\begin{equation}\begin{array}{lll}}{\end{array}\end{equation}}
\newenvironment{oureqnarrayd}{\begin{equation}\begin{array}{lllllll}}{\end{array}\end{equation}}
\newenvironment{oureqnarrayde}{\begin{equation*}\begin{array}{lllllll}}{\end{array}\end{equation*}}
\newcommand\dint[4]{\int_{#1}^{#2}{#3}{\delta #4}}
\newcommand\onehalf{\frac{1}{2}}
\newcommand\fallingexp[1]{\overline{2}^{#1}}
\newcommand{\arnaud}[2][]{\SANSCOMMENTAIRE{\todo[inline,color=green!20,caption={2do}, #1]{\begin{minipage}{\textwidth-4pt}
Arnaud:			#2\end{minipage}}}}
\newcommand{\arnaudmargin}[1]{\SANSCOMMENTAIRE{\todo[color=green!40]{A: #1}}}
\newcommand{\olivier}[2][]{\SANSCOMMENTAIRE{\todo[inline,color=red!40,caption={2do}, #1]{\begin{minipage}{\textwidth-4pt}
Olivier:			#2\end{minipage}}}}
\newcommand{\sabrina}[2][]{\SANSCOMMENTAIRE{\todo[inline,color=blue!20!white,caption={2do},#1]{\begin{minipage}{\textwidth-4pt}
Sabrina:			#2\end{minipage}}}}
\newcommand{\fonction}[1]{\textsf{#1}}
\newcommand\projection[2]{\mathbf{\pi}_{#1}^{#2}}
\newcommand{\sucs}{\mathbf{s}}
\newcommand\myominus{\mathbf{\ominus}}
\newcommand\plus{\mathbf{+}}
\newcommand\minus{\mathbf{-}}
\newcommand\gE{\mathbf{E}}
\newcommand{\succun}[1]{\mathbf{1}({#1})}
\newcommand{\succzero}[1]{\mathbf{0}({#1})}
\def\moins{\mathrel{\dot{-}}}
\newcommand{\zero}{\mathbf{0}}
\newcommand{\sign}[1]{\fonction{sg}(#1)}
\newcommand{\signn}[1]{\fonction{sg}_{\N}(#1)}
\newcommand{\signcomp}[1]{\bar{\fonction{sg}}(#1)}
\newcommand{\signcompn}[1]{\bar{\fonction{sg}_\N}(#1)}
\newcommand{\logicalif}[3]{\fonction{if}(#1,#2,#3)}
\newcommand{\cond}[3]{\fonction{if}(#1,#2,#3)}
\newcommand{\condn}[3]{\fonction{if}_\N(#1,#2,#3)}
\newcommand{\case}{\fonction{case}}
\newcommand{\tu}[1]{\mathbf{#1}}
\newcommand{\dderiv}[2]{\frac{\partial #1}{\partial #2}}
\newcommand{\dderivL}[1]{\frac{\partial #1}{\partial \lengt}}
\newcommand{\dderivl}[1]{\frac{\partial #1}{\partial \ell}}
\newcommand{\lengt}{\mathcal{L}}
 \newcommand\lengthnotation{\ell}
\newcommand{\length}[1]{\mathrm{\lengthnotation}(#1)}
\newcommand{\degre}[1]{\mathrm{deg}(#1)}
\newcommand{\inst}{\mathsf{inst}}
\newcommand{\Search}{\mathsf{searchpast}}
\newcommand{\derivlength}{\mathbb{DL}}
\newcommand{\linearderivlength}{\mathbb{LDL}}
\newcommand{\sll}{\mathbb{SLL}}
\newcommand{\nsll}{\mathbb{NSLL}}
\newcommand{\spl}{\mathbb{SP{\ell}}}
\newcommand{\nextI}{\mathsf{next}}
\newcommand{\cp}[1]{\mathbf{#1}}
\newcommand{\Ptime}{\cp{PTIME}} 
\newcommand{\classP}{\cp{P}}      
\newcommand{\NP}{\cp{NP}}
\newcommand{\FNP}{\cp{FNP}}
\newcommand{\FPtime}{\cp{FPTIME}}
\newcommand{\FP}{\cp{FP}}
\newcommand{\cPspace}{\cp{\sharp PSPACE}}
\newcommand{\Pspace}{\cp{PSPACE}}
\newcommand{\FPspace}{\cp{FPSPACE}}
\newcommand{\FPspaceN}{\cp{FPSPACE}}
\newcommand{\myFPspace}{\mathcal{F}_\cp{PSPACE}}
\newcommand{\suffix}{\textsf{suffix}}
\begin{document}
  
  \maketitle

\begin{abstract}
  This paper studies the expressive and computational power of
  discrete Ordinary Differential Equations (ODEs), a.k.a. (Ordinary) Difference Equations.  It presents a new
  framework using these equations   as a central tool for computation and
  algorithm design. We present the general theory of discrete ODEs for
  computation theory, we illustrate this with various examples of
  algorithms, and we provide several implicit characterizations of
  complexity and computability classes.

The proposed framework presents an original point of view on
complexity and computation classes. It unifies several constructions
that have been proposed for characterizing these classes including
classical approaches in implicit complexity using restricted recursion
schemes, as well as recent characterizations of computability and
complexity by classes of continuous ordinary differential equations.
It also helps understanding the relationships between analog
computations and classical discrete models of computation theory.


At a more technical point of view, this paper points out the
fundamental role of linear (discrete) ODEs
and classical ODE tools such as changes of variables to capture
computability and complexity measures, or as a tool for programming
many algorithms.


\end{abstract}





 
\paragraph{Note} The current article is a journal version
 of \cite{DBLP:conf/mfcs/BournezD19}. 
 
\section{Introduction}

Since the beginning of the foundations of computer science, the classification of the
difficulty of problems, with
various models of computation, 
either by their complexity or by their computability properties, is a
thriving field. 
Nowadays, classical 
computer science problems
also deal with continuous data coming from different areas and modeling
involves the use of tools like numerical analysis, probability theory
or differential equations. Thus new characterizations related to
theses fields have been proposed.
On a dual way,  the quest for
new types of computers recently led to revisit the power of some
models for analog machines based on differential equations, and to
compare them to modern digital models.
In both contexts, when discussing the related computability or
complexity issues, one has to
overcome the fact that today's (digital) computers are in essence
discrete machines while the objects under study are continuous and
naturally correspond to Ordinary Differential Equations (ODEs).

We consider here an original approach in
between the two worlds: discrete based computation with
difference equations.

ODEs  
appear to be a natural way of expressing properties and
are intensively used, in particular in applied science. 
The theory of classical (continuous) ODEs has an abundant literature (see
e.g. \cite{Arn78, BR89,CL55}) and is rather well
understood under many aspects. 
We are interested here in a discrete counterpart of
classical continuous ODEs: discrete ODEs, also known as difference equations.  Its associated derivative notion, called  \emph{finite differences},  has been widely studied in numerical optimization for function approximation \cite{gelfand1963calcul} and  in \emph{discrete calculus} 
\cite{graham1989concrete,gleich2005finite,izadi2009discrete,urldiscretecalculuslau} for  combinatorial   analysis (remark that similarities between discrete and
continuous statements have also been historically observed, under the
terminology of  \emph{umbral} or \emph{symbolic calculus} as early as in the $19$th century).  However, even if the underlying computational content of finite differences theory is clear and has been pointed out many times, no fundamental connections with algorithms and complexity  have been exhibited so far.
%
%
%

\shorter{
 Indeed, one widely used framework in which many systems are described
 by ODEs. They  
appear to be a natural way of expressing properties and
are intensively used, in particular in applied science. 
The theory of classical (continuous) ODEs is rather very well
understood, broadly studied and taught with a plethora of literature, see
e.g. \cite{Arn78, BR89, CL55}.
    %
\olivier{déjà amendé, mais probablement encore à amender le paragraphe qui suit.}
We are interested here in a discrete counterpart of
classical continuous ODEs: discrete ODEs, whose theory is sometimes
called \emph{discrete calculus}, \emph{finite differences} or
\emph{finite calculus}. 
While one can find some results 
usually mostly
motivated by very particular applications in fields like
combinatorics  in references like
\cite{graham1989concrete,gleich2005finite,izadi2009discrete,urldiscretecalculuslau}, the theory of discrete ODE is indeed surprisingly not
developed  as a whole in recent books. It turns out that a rather old
literature however exist, often motivated by similarities between discrete statements and
continuous statements that have been historically observed, under the
terminology of  \emph{umbral
  calculus}. However, as far as we know, this theory have never
considered statements from a complexity point of view, as we do in
the current article.
}

In this article, our goal is to demonstrate that discrete ODEs are a
very natural tool for algorithm design and to  prove that  complexity and computability notions  can be  elegantly and simply captured using
discrete ODEs. We illustrate this by providing a characterization of $\FPtime$, the class of polynomial time computable functions. 
To this aim,  we will also demonstrate  how some notions from the 
analog world such as linearity of differential equations or derivatives along some particular functions (i.e. changes of variables) are representative of a certain computational hardness and can be used to solve
efficiently some (classical, digital) problems.

As far as we know, this is the first time that computations with
discrete ODEs and their related complexity aspects are considered. 
By
contrast, complexity results have been recently obtained
about classical (continuous) ODEs for various classes of functions,
mostly in the framework of computable analysis. The hardness of
solving continuous ODEs has been intensively discussed: for example
\cite{Ko83} establishes some bases of the complexity aspects of ODEs
and more recent works like \cite{kawamura2009lipschitz} or
\cite{collins2008effectivesimpl} establish links between complexity
or effective aspects of such differential equations.
%
%
We believe that investigating the expressive power of discrete ODE, can help to better understand complexity of computation for both the discrete and continuous settings. 
Indeed, on the one hand, our results offers a new machine independent perspective on classical
discrete computations, i.e. computations that deal
with bits, words, or integers. 
And, on the other hand, it relates classical
(discrete) complexity classes to analog computations,
i.e. computations over the reals, as analog computation have been
related in various ways to continuous ordinary differential equations,
and as discrete ODEs provide clear hints
about their continuous counterparts. A mid-term goal of this line of research  is also to bring insights from complexity
theory to the problem of solving ODE (discrete and, hopefully
also, numerical). A descriptive approach such as the one initiated
in the paper could help classifying large classes of ODEs by their
computational hardness and bring some uniformity to methods of
this field.

\shorter{
	In this paper, we build a new bridge between two distinct areas of
	theoretical computer science. We prove that peculiar discrete ordinary
	differential equations describe recursion schemes from the implicit
	complexity theory. We believe the current work brings a clear new
	perspective to understand the links between all these approaches. In
	particular, it provides simple(r) explanations to various statements
	that have been established for analog computations. As an example, we
	believe our characterizations of the level of the \Greg{} hierarchy
	provide a simpler explanation of the reasons of why results in the
	spirit of those obtained in \cite{Cam01,CMC02} hold. In particular, it
	helps to clearly point out which aspects of the statements are related
	to continuous computations versus discrete computations.
}


\paragraph*{From restricted recursion scheme to discrete differential equations}
Recursion schemes constitutes a major approach of computability theory and to some extent of complexity theory.
A foundational result in that spirit is due to Cobham, who gave in
\cite{cob65} a characterization of functions computable in polynomial time through the notion of \textit{bounded recursion on notations} (BRN, for short) (see Section  \ref{sec:cobham} for a review). 
Later, notions such as safe recursion \cite{bs:impl} or
ramification (\cite{LM93,Lei94} have allowed syntactical
characterizations of polynomial time or other classes \cite{lm:pspace}
that do not require the use of an explicit bound but at the expense of
rather sophisticated schemes. These works have therefore been at the
origin of the very vivid field of \textit{implicit complexity} at the
interplay of logic and theory of programming.

A  discrete function can also be described by its (right discrete) derivative i.e.  the value 
$ \tu f(x+1,\tu y) - \tu f(x,\tu y)$. It is straightforward to rewrite
primitive recursion  in this setting, though one may have to cope with
possibly negative values. To capture more fine grained time and space
measures we come up in the proposed context of discrete ODEs with at
least two original concepts which are very natural in the continuous setting:
 \begin{itemize}
 	\item
          deriving along a function: when such a function is
          suitably chosen this allows to control the number of steps
          in the computation;
	\item
          linearity that permits to control object sizes.
\end{itemize}

By combining these two approaches, we provide a  characterization of $\FPtime$ that does not require to specify an
explicit bound in the recursion, in contrast to Cobham's work, nor to assign
a specific role or type to variables, in contrast to safe recursion or ramification. The characterization happens to be very simple from a syntactical point of view using only natural notions from the world of ODE.   

This characterization is also a first step to convince the reader that  deep connections between complexity and ODE solving do exist and that these connections are worth to be further studied.

\paragraph*{Related works on analog computations}
%
As many historical or even possibly futuristic analog machines are
naturally described by (continuous) ODEs, the quest of understanding
how the computational power of analog models compares to classical
digital ones have led to several results relating classical
complexity to various classes of continuous ODEs. In particular, a
series of papers has been devoted to study various classes of the
so-called $\R$-recursive functions, after their introduction in
\cite{Moo95b} as a theoretical model for computations over the reals.
At the complexity level, characterizations of complexity classes such
as $\Ptime$ and $\NP$ using $\R$-recursive algebra have been
obtained \cite{MC06}, motivated in particular by the idea of
transferring classical questions from complexity theory to the context
of real and complex analysis \cite{LCM07b, MC06,MC05Eatcs}. But this
has been done with the addition of limit schemata and with a rather
different settings. 

More recently, revisiting the model of General Purpose Analog Computer
of Claude Shannon, it is has been proved that polynomial differential
equations can be considered as a very simple and elegant model in
which computable functions over the reals and polynomial time
computable functions over the reals can be defined without any
reference to concepts from discrete computation theory 
\cite{JournalACM2017,TheseAmaury}. We believe the current work is a
substantial step to understand the underlying power and theory of such
analog models, by providing concepts, definitions and results relating
the two worlds.

Refer to
\cite{DBLP:journals/corr/BournezGP16} for an up to date
survey about various works on analog computations, in particular in a
computation theory perspective. 

%
%

%


\paragraph{How to read the paper} As this article is mixing considerations coming from various fields, we tried to write it as much as possible in a self-contained manner. In particular,
\begin{itemize}
\item  for a reader familiar with discrete differences, a.k.a. discrete ODEs, Section~\ref{sec:discrete differentiability} can be mostly skipped, as it is mainly reformulating some known results. The only point with respect to classical literature is that we often write $f^{\prime}$ instead of $\Delta f$ to help readers not familiar with that field to grasp the intuition of some of the statements with respect to their usual continuous counterpart. Our concept of falling exponential (Definition \ref{def:fallingexp}) is just a notation, and seems possibly non-standard, even if clearly inspired from the very standard falling power. 
\item A reader familiar will computability theory may mostly skip Section~\ref{sec:reminder}, as it is mostly only recalling some well-known notions and definition from computability theory. The only point is that we present all the various classes in some algebraic fashion, but all the presented statements are well-known in computability theory. We later observe in Section~\ref{sec:Computability and Discrete ODEs} that all these schemas can be seen as particular natural discrete ODEs schemas. Maybe the most original observation is about the role played by linear ODEs in that framework, and its relations with the \Greg{} hiearchy.
\item For a reader familiar with complexity theory, Subsection \ref{sec:cobham} is pointing out that not only computability classes but also complexity classes can be characterized algebraically. This might be less-known, but we are only reviewing here some basic results from so-called implicity complexity theory.
\end{itemize}
The truly original parts of this article are about the characterization of complexity classes, starting from Section \ref{sec:restrict}. Another clear originality is related to our discussion on various 
ways to program with discrete ODEs, with several examples: In particular discussions in Sections \ref{sec:f:programming with ODE}  and \ref{sec:prog}.

\paragraph*{Structure of the paper}
In Section~\ref{sec:reminder}, we review some basics of the literature of computability and complexity theory. 
In Section~\ref{sec:discrete differentiability} we provide a short introduction to discrete difference equations and some general basics definitions and results. 
In Section~\ref{sec:Computability and Discrete ODEs} we provide by an illustration, through examples, of the programming ability of discrete ODE.
We then provide formal definitions of discrete ODE schemas together with
characterizations of classical computability classes.  
From Section \ref{sec:restrict}, the focus is put on complexity theory. 
We introduce the notion  of length-ODE which is central, together with the notion  of (essentially) linear differential equation,  for the characterization
of $\FPtime$ (Section~\ref{sec:A characterization of polynomial
	time}).
 A conclusion is given in Section~\ref{sec:extension} where we also discuss some possible extensions of the results. 

\paragraph{Note}  A preliminary version coauthored with Sabrina Ouazzani is available on \url{https://arxiv.org/abs/1810.02241}.

\section{Computability and complexity: a quick introduction}
\label{sec:reminder}










\subsection{Computability theory and bounded schemes}

Computable functions, that is functions computable by Turing machines or equivalent models have  originally been seen through the prism of classical recursion theory. 
In this approach, instead of words, one deals with functions over integers, that is
to say with functions $\tu f: \N^p \to \N^d$ for some positive integers
$p,d$. 

It is well known that all main classes of classical recursion theory can then be
characterized as closures of a set of 
basic functions by a finite number of  basic
rules (operations) to build new functions: See
e.g. \cite{Rose,Odi92,clote2013boolean}. 

In that context, the smallest set of functions containing functions $f_{1}, f_{2}, \dots, f_{k}$ that is closed under operations $op_{1}$, $op_{2}$, \dots $op_{\ell}$ is often denoted as $$[f_{1}, f_{2}, \dots, f_{k}; op_{1}, op_{2},\dots,op_{\ell}].$$

For example, we have.

%

\begin{theorem}[Total computable functions]
The set $\mathcal{C}$ of total computable functions correspond to $$\mathcal{C}=[\zero, \projection{i}{p}, \sucs; composition, primitive~recursion, safe~minimization]:$$
That is to say, a  total function over the integers is computable if and only if it belongs
  to the smallest set of functions that contains constant function
  $\zero$, the projection functions $\projection{i}{p}$, the functions successor $\sucs$,
  that is closed under composition, primitive recursion and
  safe minimization.  \end{theorem}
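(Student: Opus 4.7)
The plan is to prove the two inclusions separately, using Kleene's normal form theorem as the main tool for the harder direction.

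For the soundness direction (every function in the class is total computable), I would proceed by structural induction on the construction of the class. The basic functions $\zero$, $\projection{i}{p}$, and $\sucs$ are obviously total and computable (by trivial Turing machines). Composition preserves both totality and computability in a routine way. Primitive recursion preserves both properties as well: if $\tu g$ and $\tu h$ are total computable, then the function $\tu f$ defined by $\tu f(0, \tu y) = \tu g(\tu y)$ and $\tu f(x+1, \tu y) = \tu h(x, \tu y, \tu f(x, \tu y))$ can be computed by iterating a Turing machine $x$ times, and totality follows by induction on $x$. For safe minimization $\mu y.[f(\tu x, y) = 0]$, the ``safe'' qualifier means we only apply the operator under the hypothesis that for every $\tu x$ there exists such a $y$; under this guarantee the unbounded search halts and defines a total computable function.

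For the completeness direction (every total computable function belongs to the class), I would invoke Kleene's normal form theorem: there exists a primitive recursive predicate $T(e, \tu x, y)$ (Kleene's $T$-predicate) and a primitive recursive function $U$ such that every partial computable function $\tu f$ with index $e$ satisfies
\[
\tu f(\tu x) = U(\mu y.\, T(e, \tu x, y) = 0).
\]
Since $T$ and $U$ are primitive recursive, they lie in $[\zero, \projection{i}{p}, \sucs;\ \text{composition}, \text{primitive recursion}]$, hence a fortiori in our class. When $\tu f$ is total, by definition of the $T$-predicate, for every input $\tu x$ some halting computation exists, so there is always a witness $y$ with $T(e, \tu x, y) = 0$. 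Thus the minimization is \emph{safe}, and the composition with $U$ places $\tu f$ in the class.

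The only real technical obstacle is setting up Kleene's normal form theorem carefully: one must show that the step-by-step simulation of a Turing machine is primitive recursive, which requires encoding configurations as integers (via a pairing function) and verifying that the one-step transition and the halting predicate are primitive recursive. These constructions are entirely standard and can be found in any textbook on recursion theory (e.g.\ \cite{Rose,Odi92}), so I would simply cite them rather than reproving the encoding machinery. Once the normal form is in hand, the inclusion is immediate, and together with the soundness direction this yields the equality of classes stated in the theorem.
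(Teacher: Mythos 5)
Your proposal is correct, and it is essentially the argument the paper relies on: the paper does not prove this theorem at all, but states it as a classical fact with citations to \cite{Rose,Odi92,clote2013boolean}, where the proof given is precisely the one you outline (structural induction for soundness, Kleene's normal form with a safe/total minimization for completeness). Indeed, the paper later records the completeness half in sharpened form as its Theorem on normal forms (every total recursive $f$ equals $g(\SMIN(h))$ with $g,h$ elementary), which is exactly the normal-form step of your argument.
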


In this statement, $\zero$, $\projection{i}{p}$ and $\sucs$ are
respectively the functions from $\N \to \N$, 
$\N^p \to \N$ and $\N \to \N$, defined as  $n \mapsto 0$,
$(n_1,\dots,n_p) \mapsto n_i$, and $n \mapsto n+1$. 
The above statement relies on two important notions, primitive recursion and safe minimization whose definitions are recalled below. 

\begin{definition}[Primitive recursion] \label{def:classique}
Given functions $g: \N^p \to \N$ and $h: \N^{p+2} \to \N$, function
$f=\REC(g,h)$ defined by primitive recursion from $g$ and $h$ is the function $\N^{p+1} \to \N$ satisfying
\begin{eqnarray*}
f(0,\tu y) &=& g(\tu y) \\
f(x+1,\tu y)&=&h(f(x,\tu y),x,\tu y). \\
\end{eqnarray*}
\end{definition}

\begin{definition}[(Safe) Minimization] \label{def:classiqued}
	Given function $g: \N^{p+1} \to \N$, such that for all $x$ there
	exists $\tu y$ with $g(x,\tu y)=0$,  function $f = \SMIN(g)$ defined
	by (safe) minimization
	from $g$ is the (total) function $\N^{p} \to \N$ satisfying $
	\SMIN(g) : \tu y \mapsto \min\{x; g(x,\tu y)=0\}$. 
	\end{definition}

This latter definition of safe
minimization is considered here instead of classical minimization as we focus in this
article only on total functions. 
Forgetting minimization, one obtains the following well-known class.

\begin{definition}[Primitive recursive functions]
We denote by $$\PR=[\zero, \projection{i}{p}, \sucs; composition, primitive~recursion]$$ the class of primitive recursive functions: 
A function over the integers is primitive recursive if and only if it belongs
  to the smallest set of functions that contains constant function
  $\zero$, the projection functions $\projection{i}{p}$, the functions successor $\sucs$,
  that is closed under composition and primitive recursion. 
\end{definition}

Primitive recursive functions have been stratified into various subclasses. 
We recall here the \Greg{} hierarchy in the rest of this subsection.

\begin{definition}[Bounded sum] \label{def:bsumclassique}
Given functions $g(\tu y): \N^p \to \N$,
\begin{itemize}
\item function $f=\BSUM(g): \N^{p+1} \to \N$ is defined by
  $f : (x,\tu y) \mapsto \sum_{z \le x} g(z,\tu y)$.
\item function
  $f=\BSUMs(g): \N^{p+1} \to \N$ is defined by
  $f : (x,\tu y) \mapsto \sum_{z < x} g(z,\tu y)$ for $x \neq 0$, and
  $0$ for $x=0$.
\end{itemize}

\end{definition}

\begin{definition}[Bounded product] \label{def:bprodclassique}
Given functions $g: \N^p \to \N$,
\begin{itemize}
\item function $f=\BPROD(g) : \N^{p+1} \to \N$ is defined by
$f: (x,\tu y) \mapsto \prod_{z \le x} g(z,\tu y)$.
\item function $f=\BPRODs(g)$ is defined by
$f: (x,\tu y) \mapsto \prod_{z < x} g(z,\tu y)$ for $x \neq 0$, and
$1$ for $x=0$.
\end{itemize}
\end{definition}

We have 
\begin{eqnarray*}
\BSUM(g)(x,\tu y) &=&\BSUMs(g)(x,\tu y)+g(x,\tu y) \\
\BPROD(g)(x,\tu y)&=&\BPRODs(g)(x,\tu y) \cdot g(x,\tu y).\\
\end{eqnarray*}


\begin{definition}[Elementary functions]
We denote by $$\Elem=[\zero,\projection{i}{p},\sucs, \plus, \myominus; composition, \BSUM, \BPROD]$$ the class of elementary functions:
A function over the integers is elementary if and only if
it belongs to the smallest set of functions that contains constant
function $\zero$, the projection functions $\projection{i}{p}$, the functions
successor $\sucs$, addition $\plus$, limited subtraction $\myominus: (n_1,n_2) \mapsto
max(0,n_1-n_2)$, and that is closed under composition, bounded sum
and
bounded product. 

\end{definition}

The class $\Elem$ contains many classical functions.
In particular:

\begin{lemma}[{\cite[Lemma 2.5, page 6]{Rose}}] $(x,y)\mapsto \lfloor
  x/y \rfloor$ is in $\Elem$. \label{lem:derose}
\end{lemma}
\begin{lemma}[{\cite{Rose}}] $(x,y) \mapsto x \cdot y$ is in $\Elem$.
\end{lemma}



%



The following normal form is also well-known. 

\begin{theorem}[Normal form for computable functions
  \cite{Kal43,Rose}] \label{normalform}
  Any total recursive function $f$ can be written as $f = g (
    \SMIN(h))$ for some elementary functions $g$ and $h$.
\end{theorem}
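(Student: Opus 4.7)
The plan is to reproduce Kleene's classical normal form construction while verifying that the two coding functions it produces actually sit inside $\Elem$. Fix a Turing machine $M$ computing the total function $f : \N^p \to \N$. The first task is to equip configurations of $M$ (state, tape content, head position) with an elementary encoding as integers, together with an elementary pairing and an elementary coding $x = \langle c_0, c_1, \dots, c_t \rangle$ of finite sequences of configurations. Standard candidates, such as the Cantor pairing function or a prime-power Gödel coding, are built from successor, addition, limited subtraction, multiplication and integer quotient, hence belong to $\Elem$ by Lemma~\ref{lem:derose} and the multiplication lemma preceding it. The one-step transition map $\nextI$ of $M$ is a case distinction over a finite alphabet acting on these codes, so $\nextI$ is elementary as well.

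Next, define $h(x, \tu y)$ to be $0$ exactly when $x$ codes a sequence $\langle c_0, \dots, c_t \rangle$ such that $c_0$ is the initial configuration of $M$ on $\tu y$, $c_{i+1} = \nextI(c_i)$ for every $i<t$, and $c_t$ is a halting configuration, and to be $1$ otherwise. Each conjunct is the value of an elementary predicate, and the universal quantifier over $i<t$ translates into a bounded product over $i \le \length{x}$, expressible with $\BPROD$; hence $h \in \Elem$. Define $g(x)$ to decode the integer written on the tape in the last configuration of the sequence coded by $x$: again elementary. Since $f$ is total, for every input $\tu y$ the machine halts, so some $x$ with $h(x,\tu y)=0$ exists, $\SMIN(h)$ is defined on the whole of $\N^p$, and by construction $f(\tu y) = g(\SMIN(h)(\tu y))$.

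The main obstacle is the systematic bookkeeping needed to keep every step inside $\Elem$, whose only closure operations are composition, $\BSUM$ and $\BPROD$. The decisive point that makes this work is that every quantifier needed to describe a syntactically correct computation history is bounded by the length of the candidate code $x$, so it becomes a bounded product rather than an unbounded search; the only unbounded search surviving after the construction is $\SMIN(h)$ itself, which is exactly the shape requested by the statement.
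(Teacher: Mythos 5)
Your proposal is correct and follows precisely the classical Kleene/Kalm\'ar normal-form argument that the paper invokes by citation to \cite{Kal43,Rose} without reproducing a proof: arithmetize configurations and computation histories by elementary codings, observe that verifying a halting history requires only quantifiers bounded by the code (hence $\BPROD$), and isolate the single unbounded search as $\SMIN(h)$, with totality of $f$ guaranteeing that the minimization is safe. Nothing in your construction deviates from that standard route, so there is no gap to report.
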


Consider the family of functions $\gE_n$  defined by induction
as follows. When $f$ is a function, $f^{[d]}$ denotes its $d$-th
iterate: $f^{[0]}(\vx)=x$, $f^{[d+1]}(\vx)=f(f^{[d]}(\vx))$:
\begin{eqnarray*}
\gE_0(x) &=& s(x) =  x+1, \\
\gE_1(x,y) &=& x+y, \\
\gE_2(x, y) &=& (x+1) \cdot (y+1),\\
\gE_3(x) &=& 2^x,\\
\gE_{n+1}(x)&=& \gE_n^{[x]}(1) \mbox{ for $n \geq 3$.}
\end{eqnarray*}


\begin{definition}[Bounded recursion] \label{def:limclassique}
Given functions $g(\tu y): \N^p \to \N$,  $h(f,x,\tu
y): \N^{p+2} \to \N$ and $i(x,\tu y): \N^{p+1} \to \N$, the function
$f=\limREC(g,h,i)$ defined by bounded recursion from $g$, $h$ and $i$  is defined as the function $\N^{p+1} \to \N$
verifying
\begin{eqnarray*}
f(0,\tu y) &=&  g(\tu y) \\
f(x+1,\tu y) &=&h(f(x,\tu y),x,\tu y) \\
\mbox{
under the condition that: } \\
f(x,\tu y) &\le& i(x, \tu y).
\end{eqnarray*}
\end{definition}

\begin{definition}[\Greg{} hierarchy (see \cite{Rose})]
  Class $$\E_{0}=[\zero,\projection{i}{p},\sucs;composition, \limREC]$$ denotes the class that contains the constant function
  $\zero$, the projection functions $\projection{i}{p}$, the successor
  function $\sucs$, and that is closed under composition and bounded
  recursion.

  For every $n \geq 1$, the class $$\E_{n}=[\zero,\projection{i}{p},\sucs,\gE_{n};composition, \limREC]$$ 
  
  is defined similarly except that
  functions max and $\gE_n$ are added to the list of initial functions.
\end{definition}

\begin{\propositions}[Odi92,Campagnolo,CamMooCos02]  \label{rqBsumbprod} Let $n
  \geq 3 $. 
$$\Gregn= [\zero,\projection{i}{p},\sucs, \plus, \myominus, \gE_{n}; composition, \BSUM, \BPROD].$$
\end{\propositions}

The above proposition means that closure under bounded recursion is
equivalent to using both closure under bounded sum and closure under
bounded product. Indeed, as explained in chapter $1$ of \cite{Rose}
(see Theorem 3.1 for details), bounded recursion can be expressed as a
minimization of bounded sums and bounded products, itself being
expressed as a bounded sum of bounded products.
The following facts are known:

\begin{\propositions}[\cite{Rose,Odi92,clote2013boolean}]
\begin{eqnarray*}
\Elem_3 &=& \Elem \subsetneq \PR 
\\ 
\Elem_{n} &\subsetneq& \Elem_{n+1} \mbox{ for $n \geq 3$} \\
\PR &=& \bigcup_{i} \Elem_i 
\end{eqnarray*}
\end{\propositions}

%
%

\subsection{Complexity theory and bounded schemes}
\label{sec:cobham}


We suppose the reader familiar with the basics of complexity theory (see e.g. \cite{Sip97}): Complexity theory is a finer theory whose aim is to discuss the resources such as time or space that are needed to compute a given function. In the context, of functions over the integers similar to the framework of previous discussion, the complexity of a function is measured in terms of the length (written in binary) of its arguments.

As usual, we denote by  $\Ptime$ (also denoted $\classP$ in the literature), resp. $\NP$, resp. 
$\Pspace$, the classes of decision problems decidable in deterministic polynomial time, resp. non deterministic polynomial time, resp polynomial space, on Turing machines. We denote by $\FPtime$ (or, shorter, $\FP$), resp. $\FPspaceN$ the classes of functions, $f:\N^k\rightarrow \N$ with $k\in \N$, computable in polynomial time, resp. polynomial space, on deterministic Turing machines. Note that while $\FPtime$ is closed by composition, it is not the case of $\FPspaceN$ since the size of the output can be exponentially larger than the size of the input.

It turns out that the main complexity classes have  also been characterized
algebraically, by restricted form of recursion scheme.
A foundational result in that spirit is due to Cobham, who gave in
\cite{cob65} a characterization of function computable in polynomial time. The idea is to consider schemes similar to primitive
recursion, but with restricting the number of
recursion steps.

Let $\succzero{.}$ and $\succun{.}$ be the successor functions defined by
$\succzero{x}=2.x$
and $\succun{x}=2.x+1$.

\begin{definition}[Bounded recursion on notations]
  A function $f$ is defined by bounded recursion scheme on notations
  from $g, h_0, h_1, k$, denoted by $f=\BRN(g,h_0,h_1,k)$, if 
\begin{eqnarray*}
 f(0, \tu y) &=& g(\tu y)\\
 f(\succzero{x}, \tu y) &=& h_0(f(x, \tu y) , x, \tu y) \mbox{ for $x \neq 0$} \\
f(\succun{x}, \tu y) &=& h_1(f(x, \tu y), x, \tu y) \\
\mbox{
under the condition that: } \\
  f(x, \tu y) &\le & k(x, \tu y) 
\end{eqnarray*}
for all $x,\tu y$. 
\end{definition}

Based on this scheme, Cobham proposed the following class of functions:  We write $\length{x}$ for the length (written in binary) of $x$. 

\begin{definition}[$\mathcal{F}_p$]
  Define $$\mathcal{F}_p=[\zero,\projection{i}{p},\succzero{x},\succun{x},\#; composition, \BRN]: $$ this is the smallest class of primitive
  recursive functions containing $\zero$, the projections $\projection{i}{p}$, the
  successor functions $\succzero{x}=2.x$ and $\succun{x}=2.x+1$, the function
  $\# $ defined by $x \# y = 2 ^{\length{x} \times \length{y}}$ and closed by
  composition and by bounded recursion scheme on notations.
\end{definition}

This class turns out to be a characterization of polynomial time:

\begin{theorem}[\cite{cob65}, see \cite{Clo95} for a proof]
  $$\mathcal{F}_p = \FPtime.$$
\end{theorem}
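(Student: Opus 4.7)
The plan is to prove the two inclusions $\mathcal{F}_p \subseteq \FPtime$ and $\FPtime \subseteq \mathcal{F}_p$ separately, following the classical route going back to Cobham and nicely exposed by Clote.

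For the easy direction $\mathcal{F}_p \subseteq \FPtime$, I would proceed by structural induction on the construction of a function $f \in \mathcal{F}_p$, simultaneously establishing two invariants: (i) there is a polynomial $p_f$ such that $\length{f(\tu x)} \le p_f(\length{x_1},\dots,\length{x_k})$, and (ii) $f$ is computable in polynomial time (in the binary lengths of its inputs) by a deterministic Turing machine. The base cases are trivial: $\zero$, $\projection{i}{p}$, $\succzero{x}$, $\succun{x}$ satisfy both invariants immediately, and for the smash function one has $\length{x \# y} = \length{x}\cdot\length{y}+1$, which is polynomial in the lengths. Composition preserves both invariants in the standard way (a polynomial in polynomials is a polynomial, and one composes polynomial-time algorithms on inputs of polynomial length). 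For $\BRN(g,h_0,h_1,k)$, the bound $k$ gives invariant (i) for free, since the recursion is explicitly majorized by a polynomially bounded function; for invariant (ii), the recursion performs exactly $\length{x}$ iterations, at each step computing $h_0$ or $h_1$ on arguments whose lengths are polynomial (thanks to the explicit bound), so the total time is polynomial.

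For the hard direction $\FPtime \subseteq \mathcal{F}_p$, the plan is to encode the step-by-step simulation of a polynomial-time Turing machine $M$ inside $\mathcal{F}_p$. Fix $f \in \FPtime$ computed by some deterministic $M$ running in time bounded by a polynomial $p(n)$. The first preliminary step is to build up inside $\mathcal{F}_p$ a toolkit of basic numerical and word-manipulation primitives: pairing and projections, bit extraction, concatenation, the length function $\length{\cdot}$, conditionals, bounded truncation, and arithmetic operations such as addition and multiplication, each shown to lie in $\mathcal{F}_p$ by giving an explicit $\BRN$ definition together with an explicit polynomial bound $k$. The crucial role of the smash function $\#$ is to \emph{produce polynomial clocks}: iterating $\#$ a constant number of times starting from the input yields a number whose length exceeds $p(n)$, and this number can then serve as the recursion variable $x$ in a $\BRN$ step (whose recursion length is $\length{x}$), giving access to polynomially many iteration steps — this is the key point that makes the $\#$ function indispensable.

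Once the clock is available, I would represent a configuration of $M$ (tape contents, head position, state) as a single integer $c$ whose length is bounded by $p(n)+\mathcal{O}(1)$, and define a one-step transition function $\nextI(c)$ in $\mathcal{F}_p$ using the toolkit above, together with an explicit polynomial bound on $\length{\nextI(c)}$. The full simulation is then obtained by one application of $\BRN$: starting from the initial configuration $c_0$ associated to the input, iterate $\nextI$ for $\length{T}$ steps where $T$ is the polynomial clock built via $\#$, under the explicit polynomial bound on configuration length. A final decoding function (again in $\mathcal{F}_p$) extracts the output of $M$ from the final configuration, which gives $f$.

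The main obstacle is the second inclusion, and within it the bookkeeping rather than any single deep idea: for every auxiliary function used in the simulation one must exhibit both a $\BRN$ definition and an explicit polynomial majorant $k$, since $\BRN$ refuses to close a recursion without such a bound. The elegant way to manage this is to build the toolkit in a carefully chosen order, with the smash function systematically used to produce sufficiently large bounding functions, so that each new definition can invoke the polynomial majorant provided by a previous one.
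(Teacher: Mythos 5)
The paper never proves this statement itself: it is quoted as a known result, with the proof explicitly deferred to Cobham's original paper and to Clote's survey. Your outline is precisely the classical argument contained in those references --- structural induction carrying the simultaneous polynomial-length and polynomial-time invariants for $\mathcal{F}_p \subseteq \FPtime$, and for the converse a clocked Turing-machine simulation in which iterates of $\#$ furnish the polynomial bounds and a single $\BRN$ application drives the step function --- so your proposal is sound and follows essentially the same route as the proof the paper cites.
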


Cobham's result opened the way to various characterizations of
complexity classes, or various ways to 
control recursion schemes. This includes the famous
characterization of $\Ptime$ from Bellantoni and Cook
in~\cite{bs:impl} and by Leivant in~\cite{Lei94}. Refer to
\cite{Clo95,clote2013boolean} for monographies presenting a whole
series of results in that spirit. 


The task to capture $\FPspaceN$ is less easy since the principle of such characterizations is to use classes of functions closed by composition. However, for functions with a reasonable output size some characterizations have been obtained.
Let us denote by $\myFPspace$, the subclass of $\FPspaceN$ of functions of polynomial growth i.e. of functions $f:\N^k \rightarrow \N$, such that, for all $\tu x \in \N^k$, $\length{f(\tu x)}=O(\max_{1\leq i \leq k} \length{x_i})$. The following then holds:

\begin{theorem}[{\cite{thompson1972subrecursiveness},\cite[Theorem 6.3.16]{clote2013boolean}}]
$$\myFPspace= [\zero, \projection{i}{p}, \sucs, \#; composition,  \limREC].$$
That is to say, 
a function over the integers is in $\myFPspace$ if and only if it belongs
  to 
  the smallest set of functions that contains the constant function
  $\zero$, the projection functions $\projection{i}{p}$, the functions
  successor $\sucs$, function $\#$ 
  that is closed under composition and bounded recursion. 
\end{theorem}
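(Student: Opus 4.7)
I will prove the two inclusions separately. For the direction $[\zero, \projection{i}{p}, \sucs, \#; composition, \limREC]\subseteq \myFPspace$, I first verify that the initial functions lie in $\myFPspace$ (trivial for $\zero,\projection{i}{p},\sucs$; for $\#$, note that $x\#y$ has length $\length{x}\length{y}+1$ and is computed in polynomial time), and check that composition preserves both polynomial-space computability and polynomial growth in the standard way. The substantial case is $\limREC$: if $g,h,i\in\myFPspace$ and $f=\limREC(g,h,i)$, then the bound $i$ forces $\length{f(x,\tu y)}$ to be polynomial in $\length{(x,\tu y)}$, giving polynomial growth of $f$, and $f(x,\tu y)$ is evaluated by iteratively computing $f(0,\tu y),f(1,\tu y),\dots,f(x,\tu y)$ while storing only the current value---a polynomial amount of information---and re-invoking the polynomial-space procedure for $h$ at each step.

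For the converse inclusion, I would let $f\in\myFPspace$ be computed by a Turing machine $M$ using space $p(n)$ on inputs of length $n$, and hence in time at most $2^{q(n)}$, for some polynomials $p,q$. The proof would proceed in three stages. First, I build inside the class the usual arithmetical and coding machinery---addition, truncated subtraction, multiplication, comparison, pairing and sequence coding---from the initial functions via composition and bounded recursion, the smash $\#$ iterated a bounded number of times supplying all the exponential-length bounds required to apply $\limREC$. Second, I encode a configuration of $M$ on input $\tu x$ as a natural number of length $O(p(\length{\tu x}))$ and realise the one-step transition $\nextI_M$ as a function in the class, by a routine combinatorial definition using the primitives of stage one. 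Third, I iterate $\nextI_M$ by bounded recursion: setting $F(0,\tu x)=\mathsf{init}_M(\tu x)$ and $F(t+1,\tu x)=\nextI_M(F(t,\tu x),\tu x)$, with a bound dominating every reachable configuration, I obtain $f(\tu x)=\mathsf{decode}(F(2^{q(\length{\tu x})},\tu x))$, where both the iteration count $2^{q(\length{\tu x})}$ and the dominating bound are produced from $\#$ and basic arithmetic.

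The main obstacle lies in the first stage of the converse direction: each arithmetical or coding function must be defined within a class whose only recursion scheme is \emph{bounded} recursion, so every definition must come equipped with a witnessing polynomial-length bound that itself belongs to the class. These bounds have to be carefully laddered so that polynomial growth is preserved, while also being large enough to accommodate the exponential bound needed to iterate $\nextI_M$ for $2^{q(\length{\tu x})}$ steps---this is precisely where the presence of $\#$ among the initial functions is indispensable, as $\sucs$ alone would not provide fast-enough-growing bounds to support the simulation.
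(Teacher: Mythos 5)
The paper offers no proof of this theorem: it is stated purely as a quotation from the literature (Thompson 1972; Clote--Kranakis, Theorem 6.3.16), so there is no in-paper argument to compare yours against. Your outline is essentially the standard proof given in those references: for the inclusion of the algebra in $\myFPspace$, verify the initial functions and closure under composition, and evaluate $\limREC$ by an exponentially long but constant-storage iteration whose intermediate values stay polynomially short thanks to the bounding function $i$; for the converse, encode configurations of a space-$p(n)$ machine as integers of polynomial length, define the one-step transition inside the algebra, and iterate it $2^{q(n)}$ times by bounded recursion, with iterated $\#$ supplying both the iteration count and the bounds that $\limREC$ demands. Your closing remark about why $\#$ is indispensable (plain successor cannot generate the exponential bounds needed to run the simulation) is also the correct diagnosis, and it is exactly the role $\#$ plays in the cited proofs.

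One point needs fixing, because as written one of your assertions is false under the paper's literal definitions. The paper defines $\myFPspace$ as the subclass of $\FPspaceN$ of functions satisfying $\length{f(\tu x)}=O(\max_{i} \length{x_i})$, i.e.\ \emph{linear} length growth. Under that reading $\#$ itself is not in $\myFPspace$: $\length{x\# y}=\length{x}\cdot\length{y}+1$ is quadratic in the input lengths, so the right-hand algebra would not even be contained in the left-hand class and the theorem would be false as stated. The definition that makes the statement true, and the one Clote--Kranakis actually use, requires $\length{f(\tu x)}$ to be bounded by a \emph{polynomial} in $\max_i\length{x_i}$; your claim that $\#\in\myFPspace$, and the growth bookkeeping in both of your directions, tacitly assume this polynomial reading, so you should state explicitly that you are working with it (the slip here is the paper's, but your proof must not inherit it silently). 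With that correction, the remaining gaps in your write-up are only the routine ones the cited proofs also elide: the coding machinery of your first stage, and the standard convention that halting configurations are fixed points of $\nextI_M$, so that running the recursion for exactly $2^{q(\length{\tu x})}$ steps, possibly past the actual halting time, is harmless.
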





\section{Discrete difference equations and discrete ODEs}\label{sec:discrete differentiability}


From now on, on this section, we review some basic notions of discrete calculus to
help intuition in the rest of the paper (refer to
\cite{jordan1965calculus,gelfand1963calcul} for a more complete review).
%
%
Discrete derivatives, also known as discrete differences, are usually intended to concern functions over the
integers of type $\tu f: \N^p \to \Z^d$, for some integers $p,d$, but the statements and concepts
considered in our discussions  are also valid more generally for functions
of type $\tu f: \Z^p \to \Z^d$ or even functions
$\tu f: \R^p \to \R^d$.
The basic idea is to consider the following concept of
derivative:

\begin{remark}
  We first discuss the case where $p=1$, i.e. functions
  $\tu f: \N \to \Z^d$. We will later on consider more general
  functions, with partial derivatives instead of derivatives.
\end{remark}

\begin{definition}[Discrete Derivative] The discrete derivative of
	$\tu f(x)$ is defined as $\Delta \tu f(x)= \tu f(x+1)-\tu
        f(x)$. We will also write $\tu f^\prime$ for
        $\Delta \tu f(x)$ to help readers not familiar with discrete differences to understand
	statements with respect to their classical continuous counterparts. 
	
\end{definition}


Several results from classical derivatives generalize
to the settings of discrete differences: this includes linearity of derivation $(a \cdot
f(x)+ b \cdot g(x))^\prime = a \cdot f^\prime(x) + b \cdot
g^\prime(x)$, formulas for products
and division such as
 $(f(x)\cdot g(x))^\prime =
 f^\prime(x)\cdot g(x+1)+f(x) \cdot g^\prime(x)= f(x+1)  g^\prime(x) +  f^\prime(x)  g(x)$. 
    Notice that, however, there is no simple equivalent of the chain rule. 
 
A fundamental concept is the following:

\begin{definition}[Discrete Integral]
	Given some function $\tu f(x)$, we write $\dint{a}{b}{\tu f(x)}{x}$
	as a synonym for $\dint{a}{b}{\tu f(x)}{x}=\sum_{x=a}^{x=b-1}
        \tu f(x)$ with the convention that it takes value $0$ when $a=b$ and
        $\dint{a}{b}{\tu f(x)}{x}=- \dint{b}{a}{\tu f(x)}{x}$ when $a>b$. 
\end{definition}

The telescope formula yields the so-called Fundamental Theorem of
Finite Calculus: 

\begin{theorem}[Fundamental Theorem of Finite Calculus]
	Let $\tu F(x)$ be some function.
	Then,
	$$\dint{a}{b}{\tu F^\prime(x)}{x}= \tu F(b)-\tu F(a).$$
      \end{theorem}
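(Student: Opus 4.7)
The plan is to reduce the statement to a telescoping sum by unfolding the two definitions. First I would handle the case $a \leq b$: by the definition of the discrete integral,
\[
\dint{a}{b}{\tu F^\prime(x)}{x} \;=\; \sum_{x=a}^{b-1} \tu F^\prime(x) \;=\; \sum_{x=a}^{b-1} \bigl(\tu F(x+1) - \tu F(x)\bigr).
\]
The right-hand side is a classical telescoping sum: writing out the terms for $x=a,a+1,\dots,b-1$, every intermediate value $\tu F(a+1),\tu F(a+2),\dots,\tu F(b-1)$ appears once with a $+$ sign and once with a $-$ sign, so everything cancels except $\tu F(b)$ and $-\tu F(a)$. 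This yields $\tu F(b)-\tu F(a)$, as desired. Formally this can be justified by a short induction on $b-a \geq 0$: the base case $b=a$ gives the empty sum $0 = \tu F(a) - \tu F(a)$ (matching the convention that the integral is $0$ when $a=b$), and the inductive step just uses $\sum_{x=a}^{b}\tu F^\prime(x) = \sum_{x=a}^{b-1}\tu F^\prime(x) + \tu F^\prime(b)$, together with the induction hypothesis and the definition $\tu F^\prime(b)=\tu F(b+1)-\tu F(b)$.

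For the case $a > b$, I would simply invoke the sign convention $\dint{a}{b}{\tu F^\prime(x)}{x} = -\dint{b}{a}{\tu F^\prime(x)}{x}$ from the definition, apply the already-proved case to the right-hand side to obtain $-(\tu F(a)-\tu F(b)) = \tu F(b)-\tu F(a)$, and conclude.

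There is essentially no obstacle here: the only subtlety is bookkeeping the sign convention and the empty-sum convention from the definition of $\dint{a}{b}{\cdot}{x}$, and making sure that the vector-valued nature of $\tu F$ causes no trouble (it does not, since the identity can be applied coordinate by coordinate, each coordinate being a scalar telescoping sum). No properties of $\tu F$ beyond being defined on the relevant integer points are used, which is consistent with the generality of the statement.
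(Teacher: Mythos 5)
Your proof is correct and follows essentially the same route as the paper, which simply invokes the telescope formula: unfolding the definition of the discrete integral into $\sum_{x=a}^{b-1}(\tu F(x+1)-\tu F(x))$ and cancelling intermediate terms. The extra care you take with the $a=b$ and $a>b$ conventions and the coordinatewise treatment of vector-valued $\tu F$ is sound and consistent with the paper's definitions.
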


      As for classical functions, a given function has several
      primitives. These primitives are defined up to some additive
      constant.  Several techniques from the classical settings
      generalize to the discrete settings: this includes the technique
      of integration
      by parts. 

%
%

\shorter{
\ENONCEINTEGRALPARAMETER{\label{derivintegral}}

}

A classical concept in discrete calculus is the one of falling
	power defined as $$x^{\underline{m}}=x\cdot (x-1)\cdot (x-2)\cdots(x-(m-1)).$$
	This notion is  motivated by the fact that it satisfies a derivative formula $
        (x^{\underline{m}})^\prime  = m \cdot x^{\underline{m-1}}$  similar to the classical
        one for powers in the continuous setting.
In a similar spirit, we 
introduce the concept of falling exponential. 


\begin{definition}[Falling exponential] \label{def:fallingexp} 
	Given some function $\tu U(x)$, the expression $\tu U$ to the
	falling exponential $x$,
	denoted by $\fallingexp{\tu U(x)}$, stands
        for  \begin{eqnarray*}
                \fallingexp{\tu U(x)} &=&
                                                                                (1+ \tu U^\prime(x-1)) \cdots
                                        (1+ \tu U^\prime(1)) \cdot (1+ \tu U^\prime(0))   \\
                                         &=&
                   \prod_{t=0}^{t=x-1} (1+ \tu U^\prime(t)),
                    \end{eqnarray*}
	with the convention that $\prod_{0}^{0}=\prod_{0}^{-1}=\tu {id}$, where $\tu
        {id}$ is the identity (sometimes denoted  $1$ hereafter)
\end{definition}

This is motivated by the remarks that 
	$2^x=\fallingexp{x}$, and
        that the discrete
	derivative of a falling exponential is given by
	$$\left(\fallingexp{\tu U(x)}\right )^\prime = \tu U^\prime(x) \cdot
	\fallingexp{\tu U(x)}$$
	%
      for all $x \in \N$.
      
\begin{lemma}[Derivation of an integral with parameters]  \label{derivationintegral}
   Consider $$\tu F(x) = \dint{a(x)}{b(x)} {\tu f(x,t)}{t}.$$
   Then \begin{eqnarray*}
          \tu F'(x) &=& \dint{a(x)}{b(x)}{  \frac{\partial \tu f}{\partial
                        x} (x,t)}{t} 
                     + \dint{0}{-a^\prime(x)}{\tu f(x+1,a(x+1)+t)}{t} 
+ \dint{0}{b'(x)}{ \tu f(x+1,b(x)+t ) } {t}. 
\end{eqnarray*}

\noindent In particular, when $a(x)=a$ and $b(x)=b$ are constant functions, $\tu F'(x) = \dint{a}{b}{  \frac{\partial \tu f}{\partial
     x} (x,t)}{t},$
     and when $a(x)=a$ and $b(x)=x$,
     $\tu F'(x) = \dint{a}{x}{  \frac{\partial \tu f}{\partial
     x} (x,t)}{t} + \tu f(x+1,x)$.

\end{lemma}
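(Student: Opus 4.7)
The plan is to compute $\tu F'(x) = \tu F(x+1) - \tu F(x)$ directly from the definition of the discrete integral and rearrange the resulting telescoping sums so that the three advertised pieces appear. First I would apply the standard ``add and subtract'' trick:
\begin{align*}
\tu F(x+1) - \tu F(x) &= \dint{a(x+1)}{b(x+1)}{\tu f(x+1,t)}{t} - \dint{a(x)}{b(x)}{\tu f(x,t)}{t} \\
&= \underbrace{\dint{a(x)}{b(x)}{[\tu f(x+1,t) - \tu f(x,t)]}{t}}_{(\mathrm{I})} + \underbrace{\Bigl[\dint{a(x+1)}{b(x+1)}{\tu f(x+1,t)}{t} - \dint{a(x)}{b(x)}{\tu f(x+1,t)}{t}\Bigr]}_{(\mathrm{II})}.
\end{align*}
Term $(\mathrm{I})$ is immediately equal to $\dint{a(x)}{b(x)}{\tfrac{\partial \tu f}{\partial x}(x,t)}{t}$ by the definition of the partial discrete derivative, giving the first summand.

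Next I would handle $(\mathrm{II})$ by splitting each integral at the common interval. Using additivity of the discrete integral (which follows from the sum definition together with the convention $\dint{a}{b}=-\dint{b}{a}$), one has, for any values of the bounds,
\[
\dint{a(x+1)}{b(x+1)}{\tu g(t)}{t} - \dint{a(x)}{b(x)}{\tu g(t)}{t} = \dint{a(x+1)}{a(x)}{\tu g(t)}{t} + \dint{b(x)}{b(x+1)}{\tu g(t)}{t},
\]
applied to $\tu g(t)=\tu f(x+1,t)$. Now I would perform the change of summation variable $t = a(x+1)+s$ in the first piece and $t = b(x)+s$ in the second, which, using $a(x)-a(x+1) = -a'(x)$ and $b(x+1)-b(x)=b'(x)$, turns these into
\[
\dint{0}{-a'(x)}{\tu f(x+1,a(x+1)+s)}{s} \quad\text{and}\quad \dint{0}{b'(x)}{\tu f(x+1,b(x)+s)}{s}.
\]
The signed-integral convention makes both expressions valid regardless of whether $a'(x)$ and $b'(x)$ are positive or negative. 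Combining with $(\mathrm{I})$ yields the claimed formula.

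Finally I would read off the two special cases. If $a$ and $b$ are constant, then $a'(x)=b'(x)=0$, so both boundary integrals equal $\dint{0}{0}=0$ and only term $(\mathrm{I})$ survives. If $a(x)=a$ and $b(x)=x$, then $a'(x)=0$ and $b'(x)=1$, so the $a$-term vanishes and the $b$-term collapses to the single summand $\tu f(x+1,x)$, producing exactly $\dint{a}{x}{\tfrac{\partial \tu f}{\partial x}(x,t)}{t} + \tu f(x+1,x)$.

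The only subtle point is bookkeeping the signs when $a(x+1)<a(x)$ or $b(x+1)<b(x)$; I expect this to be the main (though still routine) obstacle, and it is resolved uniformly by consistently using the signed-integral convention $\dint{a}{b}=-\dint{b}{a}$ rather than treating the ``shrinking'' and ``growing'' cases separately.
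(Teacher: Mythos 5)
Your proof is correct and follows essentially the same route as the paper's: both decompose $\tu F(x+1)-\tu F(x)$ into the difference over the common interval $[a(x),b(x))$ (giving the partial-derivative term) plus the two boundary pieces near $a$ and $b$, and then re-index those pieces by the shifts $t\mapsto a(x+1)+t$ and $t\mapsto b(x)+t$. The only cosmetic difference is that you phrase the splitting via the signed-integral (Chasles) convention while the paper writes the same decomposition directly with sums.
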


 \begin{proof}
\begin{eqnarray*}
  \tu F'(x) &=& \tu F(x+1) - \tu F(x) \\
  &=& \sum_{t= a(x+1)}^{b(x+1) - 1} \tu f(x+1,t) -
                          \sum_{t= a(x)}^{b(x) - 1} \tu f(x,t)   \\
  &=& \sum_{t= a(x)}^{b(x)-1} \left( \tu  f(x+1,t) - \tu f(x,t)
      \right)    
     +   \sum_{t=a(x+1)} ^{t= a(x)-1} \tu f(x+1,t) 
                      + \sum_{t=b(x)}
      ^{b(x+1)-1} \tu f(x+1,t) \\
  &=& \sum_{t= a(x)}^{b(x)-1}  \frac{\partial \tu f}{\partial
      x} (x,t)  + \sum_{t=a(x+1)} ^{t= a(x)-1} \tu f(x+1,t) + \sum_{t=b(x)}
      ^{b(x+1)-1} \tu f(x+1,t) \\
  &=& \sum_{t= a(x)}^{b(x)-1}   \frac{\partial \tu f}{\partial
      x} (x,t)  + \sum^{t=-a(x+1)+a(x)-1} _{t=0} \tu f(x+1,a(x+1)+t)
  + \sum_{t=0}
      ^{b(x+1)-b(x)-1} \tu  f(x+1,b(x)+t).
\end{eqnarray*}
\end{proof}


We will focus in this article on discrete Ordinary Difference 
Equations (ODE) on functions with several variables, that is to say
for example on equations of the (possibly vectorial) form:
%
%
%
\begin{equation}\label{lodepv}
\dderiv{\tu f(x,\tu y)}{x}= \tu h(\tu f(x,\tu y),x,\tu y),
\end{equation}
 where $\dderiv{\tu f(x,\tu y)}{x}$ stands as expected for the derivative of
functions $\tu f(x,\tu y)$ considered as a function of $x$, when $\tu y$
is fixed i.e. 
\[
\dderiv{\tu f(x,\tu y)}{x} = \tu f(x+1,\tu y) - \tu f(x,\tu y).
\]
 When some initial value $\tu f(0,\tu y) = \tu g(\tu y)$ is added, this is called an \emph{Initial Value
  Problem (IVP)} or a \emph{Cauchy Problem}\shorter{, also called a
  \emph{Cauchy Problem}}. An IVP can always be put in integral form $$\tu
f(x_{0},\tu y) = \tu f(0,\tu y) + \dint{0}{x_{0}}{\tu h(\tu f(u,\tu y),u,\tu y)}{u}.$$
\shorter{That is to say, we are
given a problem of type:

\begin{oureqnarray}\label{cauchy}
\dderiv{\tu f(x,\tu y)}{x}&=&\tu h(\tu f (x,\tu y),x,\tu y )\\
\tu f(0,\tu y)&=& \tu g(\tu y )
\end{oureqnarray}
\noindent with functions $\tu g,\tu h$ of suitable dimensions and domains:
}


Our 
aim here is to discuss total functions whose domain and range
is either of the form $\mathcal{D}=\N$, $\Z$, or possibly a finite
product $\mathcal{D}= \mathcal{D}_1 \times \dots \times \mathcal{D}_k$
where each $\mathcal{D}_i=\N$, $\Z$.
By considering that $\N \subset \Z$, we  assume that the codomain is
always 
$\Z^d$ for some $d$. 
The concept of solution for such ODEs is  as expected: given $h: \Z^d \times \N \times \Z^{p} \to \Z$
(or $h: \Z^d \times \Z \times \Z^{p} \to \Z$),
a solution  
over $\mathcal{D}$ is a function $f: \mathcal{D}
\times \Z^{p} \to \Z^d$
that satisfies the equations for all $x,\tu y$.

We will only consider well-defined ODEs such as above in this
article (but variants with partially defined functions could be considered as well). 
Observe that an IVP of the form \eqref{lodepv} always admits a
(necessarily unique) solution over $\N$ since $f$ can be defined
inductively with 

\begin{equation}
\begin{split}
\tu f(0,\tu y)&=\tu g(\tu y)\\  \tu f(x+1,\tu
y)&= \tu f(x,\tu y)
+ \tu h(\tu f(x,\tu y),x, \tu y).\\
\end{split}
\end{equation}

\begin{remark}
Notice that this is not necessarily true over $\Z$: As an
example, consider 
$f^\prime(x) = - f(x) + 1$, $f(0) = 0$. 
By definition of $f^\prime(x)$, we must have $f(x+1)=1$ for all $x$,
but if $x = -1$, $f(0)=1 \neq 0$. 

\end{remark}

\begin{remark}[Sign function] \label{signn} It is very instructive to realize that the solution of the above
  IVP 
  over $\N$ is the sign $\signn{x}$ function defined by
  $$\signn{x}= \left\{ \begin{array}{ll} 1 & \mbox{ if $x>0$} \\ 0 & \mbox{otherwise.} \end{array}\right.$$
\end{remark}

\newcommand\approxint{\approx_\N}

\subsection{Discrete linear ODE}

Discrete \emph{linear} (also called \emph{Affine}) ODEs  i.e. discrete ordinary differential equations of the
form $\tu f^\prime(x)=\tu A(x) \cdot \tu f(x) + \tu B(x)$ will
play an important role
in what follows,

\begin{remark}
Recall that the solution of $ f^\prime(x)= a(x) f(x) + b(x)$ for
classical continuous derivatives turns out to be given by (usually
obtained using the method of variation of parameters
): 
$$	f(x) = f(0) e^{\int_0^{x} a(t)
		dt}  + \int_{0}^{x} b(u) e^{\int_u^{x} a(t)
		dt }
	du.$$
%
%
            This 
        generalizes with our definitions to discrete ODEs, and this works even vectorially as shown by the following lemma. 
        \end{remark}




\begin{lemma}[Solution of linear ODE
	] 
	For matrices $\tu A$ and vectors $\tu B$ and $\tu G$ with coefficients in $x$ and $\tu y$,
	the solution of equation $\tu f^\prime(x,\tu y)= \tu A(x,\tu y) \cdot \tu f(x,\tu y)
	+  \tu
	B (x,\tu y)$  with initial conditions $\tu f(0,\tu y)= \tu G(\tu y)$ is
	\begin{eqnarray}\label{soluces}
        \tu f(x,\tu y) 
        =
          \left( \fallingexp{\dint{0}{x}{\tu
                             A(t,\tu y)}{t}} \right) \cdot \tu G (\tu
                             y)  
          +
	\dint{0}{x}{ \left(
		\fallingexp{\dint{u+1}{x}{\tu A(t,\tu y)}{t}} \right) \cdot
              \tu B(u,\tu y)} {u}.
          \end{eqnarray}

        \end{lemma}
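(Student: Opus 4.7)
The plan is to verify by direct substitution that the right-hand side of equation (\ref{soluces}) satisfies both the initial condition at $x=0$ and the difference equation, and then appeal to uniqueness: the IVP has a unique solution over $\N$ by the inductive construction $\tu f(x+1,\tu y) = \tu f(x,\tu y) + \tu A(x,\tu y)\tu f(x,\tu y) + \tu B(x,\tu y)$ recalled just above.

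First, at $x = 0$, both discrete integrals $\dint{0}{0}{\cdot}{t}$ and $\dint{0}{0}{\cdot}{u}$ are empty, and the falling exponential of the empty integral equals $\tu{id}$ by the convention $\prod_{0}^{-1} = \tu{id}$. This immediately gives $\tu f(0,\tu y) = \tu{id}\cdot\tu G(\tu y) + 0 = \tu G(\tu y)$, as required.

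For the differential equation, I would split $\tu f = \tu f_h + \tu f_p$ where $\tu f_h(x,\tu y) = \fallingexp{\dint{0}{x}{\tu A(t,\tu y)}{t}}\cdot \tu G(\tu y)$ is the ``homogeneous'' part and $\tu f_p$ is the ``particular'' integral term. For $\tu f_h$, applying the derivation rule $(\fallingexp{\tu U(x)})' = \tu U'(x)\cdot\fallingexp{\tu U(x)}$ to $\tu U(x) = \dint{0}{x}{\tu A(t,\tu y)}{t}$, whose discrete $x$-derivative is $\tu A(x,\tu y)$ by the Fundamental Theorem of Finite Calculus, immediately yields $\dderiv{\tu f_h}{x}(x,\tu y) = \tu A(x,\tu y)\cdot \tu f_h(x,\tu y)$. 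For $\tu f_p$, I would invoke Lemma~\ref{derivationintegral} in the case $a(x) = 0$, $b(x) = x$, which produces two contributions. The integrated contribution comes from applying the same falling-exponential rule in $x$ to $\fallingexp{\dint{u+1}{x}{\tu A(t,\tu y)}{t}}$ (with $u$ held fixed), producing the factor $\tu A(x,\tu y)$; since this factor does not depend on the integration variable $u$, it can be pulled out of the integral, giving $\tu A(x,\tu y)\cdot \tu f_p(x,\tu y)$. The boundary contribution, coming from the variable upper bound $b(x)=x$, is $\fallingexp{\dint{x+1}{x+1}{\tu A(t,\tu y)}{t}}\cdot \tu B(x,\tu y) = \tu{id}\cdot \tu B(x,\tu y) = \tu B(x,\tu y)$, because the integral from $x+1$ to $x+1$ is empty. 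Summing the contributions of $\tu f_h$ and $\tu f_p$ yields $\tu f'(x,\tu y) = \tu A(x,\tu y)\tu f(x,\tu y) + \tu B(x,\tu y)$.

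The main point requiring care is the matrix-multiplication order in the falling-exponential rule (the factor $\tu A(x,\tu y)$ must appear on the left, consistent with the telescoping identity $\fallingexp{\tu U(x+1)} = (I + \tu U'(x))\cdot \fallingexp{\tu U(x)}$) and the correct identification of the boundary term produced by the shifted upper bound in Lemma~\ref{derivationintegral}. The key collapse $\fallingexp{\dint{x+1}{x+1}{\cdot}{t}} = \tu{id}$ is precisely what produces the inhomogeneous term $\tu B(x,\tu y)$ and closes the verification; everything else is a routine application of the discrete calculus tools already developed.
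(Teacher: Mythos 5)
Your proof is correct and follows essentially the same route as the paper: the paper likewise verifies by direct substitution that the right-hand side satisfies the equation, using the falling-exponential derivative rule together with Lemma~\ref{derivationintegral} (which produces the boundary term $\fallingexp{\dint{x+1}{x+1}{\tu A(t,\tu y)}{t}}\cdot \tu B(x,\tu y)=\tu B(x,\tu y)$), and then concludes by uniqueness of solutions over $\N$ after checking the initial condition. Your split into homogeneous and particular parts is only an organizational difference; the paper differentiates the whole expression at once, but the computations are identical.
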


        \begin{remark} \label{rq:fun}
          Notice that this can be rewritten in the familiar sum and product notation as $$
\tu f(x,\tu y)=\sum_{u=-1}^{x-1}  \left(
\prod_{t=u+1}^{x-1} (1+\tu A(t,\tu y)) \right) \cdot  \tu B(u,\tu y)
$$ with the (not so usual) conventions that for any function $\kappa(\cdot)$,  $\prod_{x}^{x-1} \tu \kappa(x) = 1$ and $\tu
B(-1,\tu y)=\tu G(\tu y)$.
Such equivalent expressions both have a clear computational content. They can
be interpreted as an algorithm unrolling
the computation of $\tu f(x+1,\tu y)$ from the computation of  $\tu
f(x,\tu y), \tu f(x-1,\tu y), \ldots, \tu f(0,\tu y)$.  
        \end{remark}

Actually, we can even prove a generalization of the previous statement. Matrices $\tu A$ and $\tu B$ may indeed be some arbitrary function of $x$. In particular, we observe  that this work even if we assume $\tu A$, $\tu B$ to be some function of $\tu f$ and of some external function $\tu h$. The statement of the previous lemma is clearly a particular case of what follows.

\begin{lemma}[Solution of linear ODE
	] \label{def:solutionexplicitedeuxvariables}
	For matrices $\tu A$ and vectors $\tu B$ and $\tu G$,
	the solution of equation $\tu f^\prime(x,\tu y)= \tu A(\tu f(x,\tu y),\tu h(x, \tu y), x,\tu y) \cdot \tu f(x,\tu y)
	+  \tu
	B (\tu f(x,\tu y), \tu h(x, \tu y),  x,\tu y)$  with initial conditions $\tu f(0,\tu y)= \tu G(\tu y)$ is
	\begin{eqnarray*}\label{soluce}
        \tu f(x,\tu y)  &=&
          \left( \fallingexp{\dint{0}{x}{\tu
                             A(\tu f(t,\tu y),\tu h(t, \tu y), t,\tu y)}{t}} \right) \cdot \tu G (\tu
                             y)  \\
           &&
          +
	\dint{0}{x}{ \left(
		\fallingexp{\dint{u+1}{x}{\tu A(\tu f(t,\tu y),\tu h(t, \tu y), t,\tu y)}{t}} \right) \cdot
              \tu B(\tu f(u,\tu y),\tu h(u, \tu y), u,\tu y)} {u}.
          \end{eqnarray*}

        \end{lemma}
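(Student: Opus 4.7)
The plan is to reduce this generalized statement to the preceding lemma by a simple substitution trick that absorbs the implicit dependence on $\tu f$ into the coefficient functions. The point is that once we know an IVP of the form under consideration admits a unique solution on $\N$ (which was already noted in the paper, since the equation $\tu f^\prime(x,\tu y)=\tu A(\tu f(x,\tu y),\tu h(x,\tu y),x,\tu y)\cdot \tu f(x,\tu y)+\tu B(\tu f(x,\tu y),\tu h(x,\tu y),x,\tu y)$ with $\tu f(0,\tu y)=\tu G(\tu y)$ uniquely determines $\tu f(x+1,\tu y)$ from $\tu f(x,\tu y)$ by induction on $x$), the evaluations of $\tu A$ and $\tu B$ along this solution depend only on $x$ and $\tu y$.

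Concretely, I would first establish existence and uniqueness of the solution $\tu f$ on $\N$ by the usual induction on $x$, relying on the defining recurrence $\tu f(x+1,\tu y)=\tu f(x,\tu y)+\tu A(\tu f(x,\tu y),\tu h(x,\tu y),x,\tu y)\cdot \tu f(x,\tu y)+\tu B(\tu f(x,\tu y),\tu h(x,\tu y),x,\tu y)$. With $\tu f$ now in hand as a well-defined function, I would introduce
\[
\tilde{\tu A}(x,\tu y) := \tu A(\tu f(x,\tu y),\tu h(x,\tu y),x,\tu y),\qquad \tilde{\tu B}(x,\tu y) := \tu B(\tu f(x,\tu y),\tu h(x,\tu y),x,\tu y),
\]
which are genuine functions of $(x,\tu y)$ only.

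Then $\tu f$ satisfies the linear IVP $\tu f^\prime(x,\tu y)=\tilde{\tu A}(x,\tu y)\cdot \tu f(x,\tu y)+\tilde{\tu B}(x,\tu y)$ with $\tu f(0,\tu y)=\tu G(\tu y)$, which is exactly the form covered by the previous lemma (formula (\ref{soluces})). Applying that lemma and then substituting back the definitions of $\tilde{\tu A}$ and $\tilde{\tu B}$ yields the stated closed form.

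As a sanity check (rather than as the main argument), one may alternatively verify the formula directly by induction on $x$: the base case $x=0$ reduces to $\fallingexp{0}\cdot \tu G(\tu y)=\tu G(\tu y)$ since empty integrals and empty products evaluate to the identity; for the inductive step one differentiates the right-hand side using the derivative rule for falling exponentials together with Lemma~\ref{derivationintegral} for the derivative of an integral with parameters. The boundary term $\tu f(x+1,x)$ produced by Lemma~\ref{derivationintegral} contributes the $\tilde{\tu B}(x,\tu y)$ summand, while differentiating the falling exponentials inside the integrand and in the first term yields the $\tilde{\tu A}(x,\tu y)\cdot \tu f(x,\tu y)$ summand. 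There is no real obstacle here — the only delicate point is making sure the substitution trick is logically valid, i.e.\ that $\tu f$ is defined \emph{before} one replaces $\tu A,\tu B$ by $\tilde{\tu A},\tilde{\tu B}$, so that the reduction to the previous lemma is not circular.
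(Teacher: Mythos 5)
Your proof is correct, but it takes a genuinely different route from the paper's. The paper proves this lemma head-on: it writes down the right-hand side, differentiates it using linearity of derivation, the derivative rule for falling exponentials, and Lemma~\ref{derivationintegral}, checks that it satisfies the same ODE as $\tu f$ (with coefficients evaluated along $\tu f$) and the same initial condition, and concludes by uniqueness of solutions of a discrete IVP over $\N$ --- so your ``sanity check'' paragraph is in fact essentially the paper's entire proof. Your main argument instead defines $\tu f$ first by the recurrence, freezes the coefficients along it, $\tilde{\tu A}(x,\tu y)=\tu A(\tu f(x,\tu y),\tu h(x,\tu y),x,\tu y)$ and $\tilde{\tu B}(x,\tu y)=\tu B(\tu f(x,\tu y),\tu h(x,\tu y),x,\tu y)$, and reduces to the preceding lemma on genuinely linear ODEs, i.e.\ formula~\eqref{soluces}. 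That reduction is mathematically sound and more modular, and you correctly identify the one delicate point (define $\tu f$ before substituting). There is, however, a structural caveat worth flagging: in the paper the preceding lemma is stated \emph{without} an independent proof and is explicitly presented as ``a particular case of what follows,'' i.e.\ of the very lemma you are proving; so if your reduction were inserted as is, the two lemmas would justify each other circularly. The fix is cheap --- the truly linear case admits exactly the same direct verification (or a variation-of-parameters argument) --- but it must be supplied. In short, your approach buys a two-line derivation of the general statement at the cost of first proving the special case; the paper's direct verification proves both at once, with the special case falling out as a corollary.
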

                
\begin{proof}
Denoting the right-hand side by $\tu {rhs}(x,\tu y)$, we  have 
$$\begin{array}{ccl} \bar {\tu {rhs}}^{\prime}(x,\tu y)&=& \tu A(\tu f(x,\tu y),\tu h(x, \tu y), x,\tu y)  \cdot  \left( \fallingexp{\dint{0}{x}{\tu
                             A(\tu f(t,\tu y),\tu h(t, \tu y),  t,\tu y)}{t}} \right)   \cdot \tu G (\tu
                             y) \\
                                     && +
                             \dint{0}{x}{ \left(
		\fallingexp{\dint{u+1}{x}{\tu A(\tu f(t,\tu y), \tu h(t, \tu y),  t,\tu y)}{t}} \right)^{\prime} \cdot
              \tu B(\tu f(u,\tu y), \tu h(u, \tu y), u,\tu y)} {u} \\ 
              && +
                             \left(
		\fallingexp{\dint{x+1}{x+1}{\tu A(\tu f(t,\tu y),\tu h(t, \tu y), t,\tu y)}{t}} \right) \cdot
              \tu B(\tu f(x,\tu y), \tu h(x, \tu y), x,\tu y) \\
              &=& \tu A(\tu f(x,\tu y),\tu h(x, \tu y), x,\tu y)  \cdot  \left( \fallingexp{\dint{0}{x}{\tu
                             A(\tu f(t,\tu y),\tu h(t, \tu y),  t,\tu y)}{t}} \right)   \cdot \tu G (\tu
                             y)  \\ && + \  \tu A(\tu f(x,\tu y), \tu h(x, \tu y), x,\tu y) \cdot \\
                             && \dint{0}{x}{ \left(
		\fallingexp{\dint{u+1}{x}{\tu A(\tu f(t,\tu y),\tu h(t, \tu y), t,\tu y)}{t}} \right)  \tu B(\tu f(u,\tu y), \tu h(u, \tu y), u,\tu y)} {u} \\
                    &&
              + \ \tu B(\tu f(x,\tu y),\tu h(x, \tu y), x,\tu y)
              \\
                            &=&  \tu A(\tu f(x,\tu y), \tu h(x, \tu y), x,\tu y) \cdot \tu {rhs}(x,\tu y) 	+  \tu
	B (\tu f(x,\tu y),\tu h(x, \tu y), x,\tu y)
	              \end{array} 
              $$
              where we have used linearity of derivation and definition of falling exponential for the first term, and derivation of an integral (Lemma \ref{derivationintegral}) providing the other terms to get the first equality, and then the definition of falling exponential.
              This proves the property by unicity of solutions of a discrete ODE, observing that $\bar {\tu {rhs}}(0,\tu y)=\tu G(\tu y)$.
\end{proof}

        \begin{remark} \label{rq:fund}
          Notice that this can still be rewritten  as $$
\tu f(x,\tu y)=\sum_{u=-1}^{x-1}  \left(
\prod_{t=u+1}^{x-1} (1+\tu A(\tu f(t,\tu y), \tu h(t, \tu y),  t,\tu y)) \right) \cdot  \tu B(\tu f(u,\tu y), \tu h(u, \tu y), u,\tu y)
.$$ Again, this can be interpreted as an algorithm unrolling
the computation of $\tu f(x+1,\tu y)$ from the computation of  $\tu
f(x,\tu y), \tu f(x-1,\tu y), \ldots, \tu f(0,\tu y)$ in a dynamic programming way.  The next section will build on that remark through some examples.

Note that it  could be that $\tu A(\tu f(x,\tu y),\tu h(x, \tu y),x,\tu y)=\tu f(x,\tu y)$, i.e. we would be considering $\tu f^{\prime}(x,\tu y) = \tu f(x,\tu y)^{2}$ that would not be expected to be called a linear ODE (but rather a quadratic or polynomial one). Later, when talking about computability and complexity issues, we will forbid such possibilities, in order to really focus on linear ODEs. But at that point, as the previous statements hold in full generality, we stated them as such.
 \end{remark}

\section{Computability and Discrete ODEs}\label{sec:Computability and Discrete ODEs}

We now go to the heart of the subject of the current paper, and we discuss discrete ODEs as a way to program. We will illustrate this with examples in the simpler context of computability, and we will then later go to complexity theory.

Indeed, the computational dimension of calculus of finite differences has been widely stressed in mathematical analysis. However, no  fundamental connection has been established with algorithmic and complexity. 
In this section, we show that several algorithms can actually be
naturally expressed as discrete ODEs.
{
\subsection{Programming with discrete ODEs: First examples}\label{sec:f:programming with ODE}

Notice first that the theory of discrete ODEs also provides very natural alternative
ways to compute various quantities. 
First observe that  discrete ODEs allow to express easily 
search functions:

\paragraph{Searching with discrete ODEs}

As an illustration, suppose that we want to compute  the minimum of a function 
$$\min f : x \mapsto min \{f(y) : 0 \le y \le x\}.$$
This is given by $F(x,x)$ where $F$
is solution of the discrete ODE 
\begin{eqnarray*} F(0,x)  &=&
         f(0); \\
\dderiv{F(t,x)}{t}  &=&
  H(F(t,x),f(x),t,x), 
\end{eqnarray*}
  where $$H(F,f,t,x)= \left\{ \begin{array}{ll} 0 &\mbox{ if $F<f$},\\ f-F &\mbox{ if $F \ge f$}. \end{array}\right.$$
  In integral form, we
have: 
\[
F(x,y) = f(0) + \dint{0}{x}{H(F(t,y),t,y)}{t}.
\]

Conversely such an integral above (equivalently
discrete ODE) can 
always be considered as a (recursive) algorithm:
compute the integral from its definition as a sum.
%
%
On this example, this corresponds basically to compute $F(x,x)$ recursively by \label{ex:page}
%
    $$F(t+1,x) 
                    =
              \logicalif{F(t,x) < f(x)}{F(t,x)}{ f(x)}. 
$$ 
%
%
%
   %
   where $\logicalif{a}{b}{c}$ is $b$ when $a$ is true, $c$ otherwise.
   
\begin{remark}
Note that this algorithm is not polynomial 
 in the length of its argument $x$,
as 
it takes
time $\mathcal{O}(x)$ to compute $\min f$.
Getting to polynomial algorithms will be at the heart of coming
discussions starting from Section \ref{sec:restrict}. 
\end{remark}

The fact that discrete ODEs provide very natural alternative
ways (possibly not efficient) to compute various quantities is very clear when
considering numeric functions such as $\tan$, $\sin$, etc. 

\paragraph{Computing $\tan$ with discrete ODEs, iterative algorithms}




As an illustration, suppose one wants to compute $\tan(x_0)$ for say $x_0=72$.
One way to do it is to observe that 
\begin{equation}\label{eq:tan} \tan(x)^\prime= \tan(1) \cdot 	(1+\tan(x) \tan(x+1)).
\end{equation}
From fundamental theorem of finite calculus 
we can hence write:
\begin{eqnarray}
\label{eq:tran}
  tan(x_0)&=&0 + \dint{0}{x_0}{\tan^{\prime}(x)}{x}  \label{eq:unt}
  \\
&=& 0 + \tan(1) \cdot 
  \dint{0}{x_0}{(1+\tan(x)\tan(x+1))}{x}.  \label{eq:untt}
\end{eqnarray}

Inspired from previous remarks, the  point  is that Equation
\eqref{eq:unt} can be interpreted as an algorithm: it provides a
way to compute $\tan(x_0)$ as an integral (or if you prefer as a
sum).

Thinking about what  this integral means, discrete ODE  \eqref{eq:tan},
also encoded by \eqref{eq:untt}, can also be interpreted as
$$\tan(x+1)-\tan(x)= \tan(1) \cdot  [1 + \tan(x) \tan(x+1)]$$ 
  that is to say
 $$\tan(x+1)= f(\tan(x))$$
where   $f(X)= \frac{X+\tan(1)}{1-\tan(1)X}$. 
Hence, this is suggesting  a way to compute $\tan(72)$ by a method close to
express that $tan(x_0)=f^{[x_0]}(0).$
That is to say Equations \eqref{eq:unt} and \eqref{eq:untt} can be
interpreted as providing a way to compute $\tan(72)$ using an
iterative algorithm: they basically encode some recursive way of
computing $\tan$.












Of course, a similar principle would hold for $\sin$, or $\cos$ using
discrete ODEs for these functions, and for many other functions starting
from expression of their derivative. 

%
%

\begin{remark}
Given $x_0$, (even if we put aside how to deal with involved real
quantities) a point is that computing $\tan(x_0)$ using this method can not be
considered as
polynomial time, as the (usual) convention is that
time complexity is measured in term of the length of $x_0$, and not on
$x_0$. 
\end{remark}

Again, this example is illustrative and is not claimed to be an efficient algorithm.
  Getting to discrete ODEs that would yield to a better complexity is at the
  heart of  the remaining of the article starting from Section \ref{sec:restrict}.   In particular, by playing with change of
 variable so that 
  the integral becomes
  computable in  polynomial time.

Before getting to this efficiency issues, we first consider  functions defined by discrete ODEs under the prism of
computability. It serves as an introductory illustration of  the close
relationship between discrete ODEs and recursive schemata before moving to efficient algorithms and complexity classes characterizations with a finer approach.


\subsection{Computability theory and discrete ODEs}
\label{sec:calc}

Through this section, we will see that the class of functions that can be programmed using discrete ODEs is actually (and precisely) the whole class of computable functions. We  state actually finer results, using various classes of computability theory reviewed in Section \ref{sec:reminder}.
This part  is clearly inspired by ideas from \cite{Cam01,cam:moo:fgc:00}, but adapted here for our framework of discrete
ODEs.
%



\begin{definition}[
 Discrete ODE schemata]  \label{def:dode}
Given $g: \N^p \to \N$ and
  $h: \Z\times \N^{p+1} \to \Z$, we say that $f$ 
is defined by discrete ODE solving from $g$ and $h$, denoted by
$f=\ODE(g,h)$,  if $f: \N^{p+1} \to \Z$
corresponds to the (necessarily unique) solution of  Initial Value
Problem

\begin{equation}\label{eq:cauchy}
\begin{split}
f(0,\tu y)&= g(\tu y ), \\
\dderiv{f(x,\tu y)}{x} &= h(f(x,\tu y),x,\tu  y).
\end{split}
\end{equation}
\end{definition}

\begin{remark}
To be more general, we could take $g:\N^p \to \Z$. However, this would
be of no use in the context of this paper. We could also consider
vectorial equations as in what follows which are very natural in the
context of ODEs, and we would get similar statements. 
\end{remark}

\shorter{Primitive recursion can
be reformulated as a
particular discrete ODE schemata.

\begin{theorem}[A discrete ODE characterization of primitive recursive
  functions]
 The set of primitive recursive functions $\PR$ is the intersection
 with $\N^\N$ of the smallest set of functions that contains the zero
 functions $\zero$, the projection functions $\projection{i}{p}$, the addition and subtraction functions $\plus$ and $\minus$, and that is closed under composition and discrete $\ODE$
 schemata.
\end{theorem}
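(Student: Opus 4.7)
The plan is to prove the two inclusions $\PR \subseteq \mathcal{D}$ and $\mathcal{D} \cap \N^{\N} \subseteq \PR$, where $\mathcal{D}$ denotes the class defined in the statement. The core algebraic idea is that primitive recursion and discrete ODE solving describe the same schema up to the substitution $h \mapsto h - F$: given a primitive recursion with step $f(x+1, \tu y) = h(f(x, \tu y), x, \tu y)$ and initial condition $g$, the function $\tilde h(F, x, \tu y) := h(F, x, \tu y) - F$ yields an ODE $f^{\prime}(x, \tu y) = \tilde h(f(x, \tu y), x, \tu y)$ with the same unique solution, and conversely any discrete ODE can be put in primitive-recursion form by adding $F$ back. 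Since this translation only uses subtraction (and the $\Z$-valued codomain of the $\ODE$ schema accommodates intermediate negative values), the equivalence is purely algebraic.

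For the direction $\PR \subseteq \mathcal{D}$, I would first verify that $\zero$, $\projection{i}{p}$, and $\sucs$ all lie in $\mathcal{D}$. The first two are immediate and the successor requires a short bootstrap; this is the most delicate point of this direction, since every function built from $\zero$, $\projection{i}{p}$, $\plus$, $\minus$ alone vanishes at the all-zero input, so $\sucs$ must come from a clever discrete ODE that produces a non-zero constant via the interaction of the ODE update with a projection-valued derivative (or be added to the base). Once $\sucs$ is secured, closure of $\mathcal{D}$ under primitive recursion follows from the translation above, and an induction on the definition of $\PR$ concludes the inclusion.

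For the reverse inclusion I would argue by structural induction on the construction of $f \in \mathcal{D}$. The base functions are primitive recursive, $\PR$ is closed under composition, and any ODE solution is obtained by primitive recursion in $x$. The genuine technical subtlety is that intermediate functions in $\mathcal{D}$ can take values in $\Z$ while $\PR$ manipulates $\N$-valued functions; I would handle this by a standard pair encoding $f = f^{+} - f^{-}$ with $f^{+}, f^{-} \colon \N^{p+1} \to \N$ primitive recursive, so that an ODE update $f(x+1, \tu y) = f(x, \tu y) + h(f(x, \tu y), x, \tu y)$ unfolds to simultaneous primitive recursions on $f^{+}$ and $f^{-}$ using only addition and limited subtraction. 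Whenever $f$ is actually in $\N^{\N}$, its primitive recursive form is recovered as $f^{+} \minus f^{-}$.

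The main obstacle I anticipate is propagating the pair encoding cleanly through nested ODE schemas: a $\Z$-valued function may be composed with another ODE whose $h$ is itself $\Z$-valued, and the representation of $(f^{+}, f^{-})$ must be preserved under composition and when fed as the derivative of a further ODE, which forces a careful bookkeeping of positive and negative parts at each constructor. A smaller but real issue is the bootstrap of $\sucs$ noted above, which seems to require either a mild strengthening of the stated basis or a non-obvious scalar-ODE construction exploiting the interaction of projections and the ODE schema.
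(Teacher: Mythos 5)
Your two-inclusion strategy is the intended one: the paper gives essentially no proof beyond observing that primitive recursion and the $\ODE$ schema are interconvertible via $\tilde h(F,x,\tu y)=h(F,x,\tu y)-F$ (resp.\ $F+h$ in the other direction), with your pair decomposition $f=f^{+}-f^{-}$ playing exactly the role of the paper's sign--magnitude encoding of $\Z$ by $\{0,1\}\times\N$; that part of your proposal is sound.

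The genuine gap is the point you flag but leave open: there is no ``clever discrete ODE'' that bootstraps $\sucs$, because your vanishing-at-the-all-zero-input invariant is preserved not only by composition but by the $\ODE$ schema itself. Indeed, if $f=\ODE(g,h)$, the all-zero input of $f$ is $(x,\tu y)=(0,\tu 0)$, and $f(0,\tu 0)=g(\tu 0)=0$ whenever $g$ satisfies the invariant --- no property of $h$ is even used, so no interaction of the ODE update with a projection-valued derivative can help. Consequently every function in the class generated by $\zero,\projection{i}{p},\plus,\minus$ under composition and $\ODE$ vanishes at the all-zero input; in particular $\sucs$ and the constant $\mathbf{1}$ lie outside it, the inclusion $\PR\subseteq\mathcal{D}$ fails, and the statement is false as literally written. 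Your parenthetical escape --- adding a basis function --- is therefore not optional but the necessary fix, and it matches the paper's other formulation $\PR=\N^\N\cap[\zero,\projection{i}{p},\sucs;composition,\ODE]$. Note, though, that this latter basis has the dual defect: $\zero$, $\projection{i}{p}$, $\sucs$ are non-negative and non-decreasing in each argument, and both properties also propagate through composition and through $f(x+1,\tu y)=f(x,\tu y)+h(f(x,\tu y),x,\tu y)$, so without $\minus$ one obtains only monotone functions and misses, e.g., $x\mapsto 1\moins x\in\PR$. The basis on which your proof (and the theorem) actually works is $\{\zero,\projection{i}{p},\sucs,\plus,\minus\}$; with it, both of your inclusions go through as you outline them.
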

}

It is clear that primitive recursion schemes can be reformulated as discrete ODE schemata.

\begin{theorem}
  \begin{eqnarray*}
  \PR &=& \N^\N \cap [\zero,\projection{i}{p}, \sucs; composition, ODE].
  \end{eqnarray*}
  \end{theorem}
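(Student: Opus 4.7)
The plan is to prove the equality by establishing both inclusions via structural induction on the construction. For the inclusion $\N^\N \cap [\zero, \projection{i}{p}, \sucs; \text{composition}, \ODE] \subseteq \PR$, the base functions $\zero, \projection{i}{p}, \sucs$ are all primitive recursive, and composition preserves $\PR$-membership. For the $\ODE$ scheme $f = \ODE(g, h)$, the integral form of the initial value problem yields
\[
f(0, \tu y) = g(\tu y), \qquad f(x+1, \tu y) = f(x, \tu y) + h(f(x, \tu y), x, \tu y),
\]
which is a primitive recursion of $f$ from $g$ and from $(v, x, \tu y) \mapsto v + h(v, x, \tu y)$; any $\Z$-valued intermediate can be handled via a standard primitive recursive bijection $\Z \leftrightarrow \N$ (for instance $z \mapsto 2z$ if $z \ge 0$ and $z \mapsto -2z - 1$ otherwise). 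Intersecting with $\N^\N$ lands the final object in $\PR$.

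For the reverse inclusion $\PR \subseteq [\zero, \projection{i}{p}, \sucs; \text{composition}, \ODE]$, I would again induct on the $\PR$ construction. Only the primitive recursion scheme requires work. Given $f(0, \tu y) = g(\tu y)$ and $f(x+1, \tu y) = h(f(x, \tu y), x, \tu y)$ with $g, h$ inductively in the closure, one observes that $f$ is the unique solution of the initial value problem
\[
f(0, \tu y) = g(\tu y), \qquad \dderiv{f(x, \tu y)}{x} = h(f(x, \tu y), x, \tu y) - f(x, \tu y),
\]
so $f = \ODE(g, \tilde h)$ with $\tilde h(v, x, \tu y) := h(v, x, \tu y) - v$.

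The main obstacle is to certify that $\tilde h$ itself lies in the closure: $\tilde h$ may take negative values, yet the base functions $\zero, \projection{i}{p}, \sucs$ are all $\N$-valued and composition preserves $\N$-valuedness. I would first derive a toolkit of $\N$-valued primitives via ODEs with non-negative right-hand sides, e.g.\ addition as $\ODE(\projection{1}{1}, 1)$ with constant $1 = \sucs \circ \zero$, multiplication as $\ODE(\zero, \projection{2}{3})$, exponentiation, factorial, and so on. The delicate step is constructing a $\Z$-valued difference operation, which exploits the fact that the $\ODE$ schema explicitly permits $\Z$-valued outputs even when all inputs are $\N$-valued: one builds it through (possibly vectorial) ODE systems whose combined behavior directly encodes signed quantities. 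Once such a difference is at hand, $\tilde h$ is in the closure and the reduction of primitive recursion to a discrete $\ODE$ is complete.
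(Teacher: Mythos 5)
Your first inclusion is correct and is exactly the intended argument: unrolling $f=\ODE(g,h)$ gives $f(x+1,\tu y)=f(x,\tu y)+h(f(x,\tu y),x,\tu y)$, a primitive recursion from $g$ and $(v,x,\tu y)\mapsto v+h(v,x,\tu y)$, modulo a primitive recursive coding of $\Z$ into $\N$. The genuine gap is in the converse, at precisely the step you flag as delicate and then defer. Your claim that a $\Z$-valued difference can be built ``through (possibly vectorial) ODE systems'' inside $[\zero,\projection{i}{p}, \sucs; composition, \ODE]$ is not merely unproven: no such construction exists. A structural induction shows that every function of this closure is nonnegative and componentwise nondecreasing on nonnegative arguments. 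The basic functions $\zero$, $\projection{i}{p}$, $\sucs$ have both properties; composition preserves them; and if $g$ and $h$ have them, then the solution of $f(0,\tu y)=g(\tu y)$, $f(x+1,\tu y)=f(x,\tu y)+h(f(x,\tu y),x,\tu y)$ never leaves $\N$ (since $h(f(x,\tu y),x,\tu y)\geq 0$ along the trajectory), is nondecreasing in $x$ for the same reason, and is nondecreasing in $\tu y$ by induction on $x$ using the monotonicity of $g$ and $h$. Consequently the closure contains no function taking a negative value on $\N$-inputs --- the $\Z$-valued typing of the schema in Definition~\ref{def:dode} is vacuous in effect --- so neither your $\tilde h(v,x,\tu y)=h(v,x,\tu y)-v$ nor even limited subtraction $\myominus$ (which is decreasing in its second argument) belongs to it. Passing to vectorial systems does not help, since the invariant applies componentwise.

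The same invariant shows more: since $\myominus\in\PR$ is $\N$-valued yet not in the closure, the displayed equality cannot be established by any argument with the statement read literally; the right-hand side is a proper subclass of $\PR$ (roughly, its members are monotone). What is missing is subtraction among the basic functions. That is how the paper handles the analogous characterizations --- Theorem~\ref{th:elem} characterizes $\Elem$ as $\N^\N\cap[\zero,\projection{i}{p}, \sucs,\plus,\minus; composition, \LI]$ --- and with $\minus$ adjoined your proof closes immediately: $\tilde h$ becomes the composition of $\minus$ with $h$ and the projection $\projection{1}{p+2}$, both inclusions go through, and intersecting with $\N^\N$ discards the $\Z$-valued auxiliaries. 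Note finally that the paper offers no proof of this theorem beyond the remark that primitive recursion ``can be reformulated'' as a discrete ODE schema; your write-up has the merit of locating exactly where that remark requires the extra basic function $\minus$ in order to be accurate.
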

  
The restriction to Linear ODEs is very natural, in particular as
this class of ODEs has a highly developed theory for the continuous
setting.  It is very instructive to realize that the class of
functions definable by Linear ODEs is exactly the well-known class of
elementary functions \cite{Kal43} as we will see now. 



\begin{definition}[Linear ODE schemata]  \label{def:lode3} \label{def:lode}
Given a vector $\tu G=(G_i)_{1 \le i \le k}$, 
matrix $\tu A=(A_{i,j})_{1 \le i,j \le k}$,
$\tu B= (B_i)_{1 \le i \le k}$ whose coefficients corresponds to functions
$g_i: \N^p \to \N^k$, and 
$a_{i,j}: \N^{p+1} \to \Z$ and $b_{i,j}: \N^{p+1} \to \Z$
respectively, we say that
$\tu f$ is obtained by linear ODE solving from $G,A$ and $B$, denoted
by $\tu f=\LI(\tu G,\tu A, \tu B)$,
if  $f: \N^{p+1} \to \Z^k$ corresponds to the (necessarily unique) solution of  Initial Value
Problem
\begin{equation}
\label{eq:lincauchy3}
\begin{split}
\tu f(0,\tu y)&= \tu G(\tu y ) \\
\dderiv{\tu f(x,\tu y)}{x} &= \tu A(x,\tu y) \cdot \tu f(x,\tu y) +
\tu B(x,\tu y).
\end{split}
\end{equation}
\end{definition}



Bounded sum and product, two of the very natural operations at the core of the definition of elementary computable functions, easily come as solutions of linear ODE.

\begin{lemma}[Bounded sum and product] \label{def:bsumproduct}
	Let  $g: \N^{p+1} \to \N$.
	 \begin{itemize}
	 	\item 
        Function $f=\BSUMs(g)$ is the unique solution of initial value problem
		\begin{oureqnarrayde}
		 f(0,\tu y)&=&0, \\
			\dderiv{f(x,\tu y)}{x} &=& g(x,\tu  y); \\
			\end{oureqnarrayde}
		 \item 
		Function $f=\BPRODs(g)$  is the unique solution of initial value problem
		\begin{oureqnarrayde}
		                  f(0,\tu y) &=&
                        1,\\
                \dderiv{f(x,\tu y)}{x}  &=&
                                f(x,\tu y)
                                                           \cdot
                                                           (g(x,\tu
                                                           y)-1).  
                  		\end{oureqnarrayde}
		
	 \end{itemize}
	\end{lemma}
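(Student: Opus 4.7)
The plan is to verify, for each of the two items, that the claimed function satisfies both the initial condition and the derivative equation of the corresponding IVP, and then invoke the uniqueness of solutions of discrete IVPs over $\N$ (which was noted earlier in Section \ref{sec:discrete differentiability}, since $f(x+1,\tu y)$ is forced by the recursive unfolding $f(x+1,\tu y)=f(x,\tu y)+h(f(x,\tu y),x,\tu y)$).

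For the bounded sum, I would first check the initial condition: by definition of $\BSUMs$, $f(0,\tu y)=0$. Next, for the derivative, I would compute $\dderiv{f(x,\tu y)}{x}=f(x+1,\tu y)-f(x,\tu y)$. For $x\ge 1$ this equals $\sum_{z<x+1}g(z,\tu y)-\sum_{z<x}g(z,\tu y)=g(x,\tu y)$, and for $x=0$ one has $f(1,\tu y)-f(0,\tu y)=g(0,\tu y)-0=g(0,\tu y)$, so the equation holds uniformly.

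For the bounded product, the same strategy works. The initial condition $f(0,\tu y)=1$ is immediate from the definition of $\BPRODs$. For the derivative, factor $\prod_{z<x}g(z,\tu y)$: for $x\ge 1$,
\[
f(x+1,\tu y)-f(x,\tu y)=\Bigl(\prod_{z<x+1}g(z,\tu y)\Bigr)-\Bigl(\prod_{z<x}g(z,\tu y)\Bigr)=(g(x,\tu y)-1)\cdot f(x,\tu y),
\]
and for $x=0$, $f(1,\tu y)-f(0,\tu y)=g(0,\tu y)-1=(g(0,\tu y)-1)\cdot f(0,\tu y)$ since $f(0,\tu y)=1$. Both the base case and the inductive step match, so $\BPRODs(g)$ satisfies the stated IVP.

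There is no real obstacle here; the only mild subtlety is treating the $x=0$ case separately in each verification, because $\BSUMs$ and $\BPRODs$ are defined piecewise at $0$. Once that is handled, uniqueness of the solution of a discrete IVP of the given form finishes the proof.
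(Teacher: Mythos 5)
Your proof is correct and is exactly the verification the paper leaves implicit: the lemma is stated there without proof as an immediate consequence of the definitions, and your direct check of the initial condition and the difference equation (including the separate $x=0$ case), followed by the uniqueness of solutions of a discrete IVP over $\N$ noted in Section~\ref{sec:discrete differentiability}, is the intended argument. Nothing is missing.
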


One key observation behind the coming characterizations is the following:

\begin{lemma}[Elementary vs Linear ODEs]\label{elemVsLinODE}
  Consider $\tu G,\tu A$ and $\tu B$ as in Definition \ref{def:lode}. Then
   Let  $\tu f=\LI(\tu G,\tu A, \tu B)$. Then ${\tu f}$ is elementary when $\tu G,
    {\tu A}$ and ${\tu B}$ are.
\end{lemma}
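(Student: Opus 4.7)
The plan is to use the explicit closed-form solution for linear ODEs already stated in the paper (the lemma on the solution of linear ODE, rewritten in the sum/product form of Remark~\ref{rq:fun}) and then show that this closed form can be built from elementary functions using only bounded sums, bounded products, composition, and the basic elementary operations. Concretely, one starts from
\[
\tu f(x,\tu y)=\sum_{u=-1}^{x-1}\left(\prod_{t=u+1}^{x-1}(1+\tu A(t,\tu y))\right)\cdot \tu B(u,\tu y),
\]
with the convention $\tu B(-1,\tu y)=\tu G(\tu y)$. The goal is then to recognize this expression as an $\Elem$-composition of the elementary data $\tu G,\tu A,\tu B$.

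First I would handle the componentwise matrix-vector algebra: since $\Elem$ is closed under composition and contains $\plus$, $\myominus$, and multiplication (by the cited lemma after Rose), entries of $1+\tu A(t,\tu y)$ and products of two matrices with elementary entries are again elementary. Iterating this observation inside a $\BPRODs$ is exactly what is needed to produce each entry of $\prod_{t=u+1}^{x-1}(1+\tu A(t,\tu y))$, provided this product is reinterpreted entry-by-entry as a bounded product of elementary integer functions (repeated multiplication of matrices unfolds into bounded sums of bounded products of scalar entries). Multiplying this matrix by the vector $\tu B(u,\tu y)$ and then summing over $u$ with $\BSUM$ completes the construction of each component of $\tu f(x,\tu y)$.

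The one technical obstacle is that $\tu A$, $\tu B$, and hence all intermediate quantities take values in $\Z$, whereas the schemes $\BSUM$ and $\BPROD$ in the definition of $\Elem$ act on $\N$-valued functions. I would resolve this by representing each $\Z$-valued elementary function $\varphi$ by a pair $(\varphi^+,\varphi^-)$ of $\N$-valued elementary functions with $\varphi=\varphi^+-\varphi^-$ (using $\myominus$ to extract positive and negative parts). All ring operations lift to such pairs via the usual formulas ($(p_1,q_1)+(p_2,q_2)=(p_1+p_2,q_1+q_2)$, $(p_1,q_1)\cdot(p_2,q_2)=(p_1 p_2+q_1 q_2,\,p_1 q_2+q_1 p_2)$), so each entry of the matrix products and of the final sum is computed as a pair of $\N$-valued expressions built by composition, $\BSUM$, and $\BPROD$ from elementary data. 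A single final limited subtraction $\myominus$ recovers the $\Z$-valued answer.

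Finally, uniqueness of the solution of the IVP (stated earlier in the paper) guarantees that the elementary function so constructed really equals $\tu f=\LI(\tu G,\tu A,\tu B)$, so $\tu f\in\Elem$. The main conceptual step is the passage from the inductive definition of $\LI$ to the unfolded sum-of-products form; once that is in hand, the rest is a routine verification that elementary closure properties suffice.
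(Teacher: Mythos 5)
Your overall route is the same as the paper's: both unfold $\tu f$ into the explicit sum-of-products closed form (Remark~\ref{rq:fun}) and then argue that this expression lies in $\Elem$. The divergence, and the problem, lies in how you discharge the two steps where real work is required.

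First, the inner product $\prod_{t=u+1}^{x-1}(1+\tu A(t,\tu y))$ has a \emph{variable lower limit}, while the scheme $\BPRODs$ only produces products $\prod_{z<x}g(z,\tu y)$ starting at $0$. You never address this. The paper's proof resolves it by writing the partial product as a quotient of two full bounded products, $\BPRODs(1+a)(x,\tu y)$ divided by $\BPRODs(1+a)(u+1,\tu y)$, and invoking the elementarity of $(x,y)\mapsto\lfloor x/y\rfloor$ (Lemma~\ref{lem:derose}); alternatively one can shift the index, composing $a$ with $t\mapsto t+u+1$ and truncating the upper bound with $\myominus$. Either fix is short, but some fix is needed: saying the product is ``reinterpreted as a bounded product'' is not by itself an application of the scheme.

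Second, and more seriously, your parenthetical claim that ``repeated multiplication of matrices unfolds into bounded sums of bounded products of scalar entries'' does not hold in the naive sense in which you use it. The scheme $\BPRODs$ applies to scalar $\N$-valued functions only; for a product of a \emph{variable} number of non-commuting matrices, the $(i,j)$ entry is a sum over roughly $k^{x}$ index paths, and turning that into legitimate applications of $\BSUM$ and $\BPROD$ requires encoding paths as integers and extracting base-$k$ digits elementarily --- a genuine argument that is absent from your proof. You cannot simply ``iterate the two-matrix observation inside a $\BPRODs$'': iterating a binary operation a variable number of times is a recursion, which is not among the closure operations of $\Elem$. The same structural problem recurs in your $(\varphi^+,\varphi^-)$ device: the pairing formulas you give handle a single binary product, but the bounded product of signed values $\prod_t(p_t-q_t)$ does not split into a $\BPRODs$ of the $p_t$'s and a $\BPRODs$ of the $q_t$'s; one needs, for instance, to take the bounded product of the absolute values $|p_t-q_t|=(p_t\myominus q_t)+(q_t\myominus p_t)$ and separately compute the sign from the parity of the number of negative factors (a bounded sum plus an elementary parity test). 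Note that the paper sidesteps all of this by proving only the scalar case in detail and declaring the vectorial case ``similar''; your proof attempts the general case head-on, which is laudable, but the reduction on which it rests is exactly the step that needs proof, and as stated it would fail.
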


\begin{proof}
We do the proof in the scalar case, writing
$a,b,g$ for $\tu A,\tu B, \tu G$. The
general (vectorial) case follows from similar arguments. 
By Lemma~\ref{def:solutionexplicitedeuxvariables}, it follows that:

\begin{eqnarray*}
  f(x,\tu y) &=& \left( \prod_{t=0} ^{t=x-1} (1+a(t,\tu y)) \right)
                 \cdot g(\tu y) \\
  && +
     b (x-1,\tu y) \\
  &&+ 
  \sum_{u=0}^{x-2} \left( \prod_{t=u+1}^{x-1} (1+a(t,\tu y)) \right)
     \cdot b(u,\tu y).
     \end{eqnarray*}

  Clearly, $ \prod_{t=0} ^{t=x-1} (1+a(t,\tu y))=\BPRODs(1+a(t,\tu y))(x,\tu y)$. Similarly, let

  \begin{eqnarray*}
    p(u,x,\tu y) &=_{def}& \prod_{t=u+1}^{x-1} (1+a(t,\tu y))\\
    &=&\frac{\BPRODs(1+a(t,\tu y))(x,\tu y)}{\BPRODs(1+a(t,\tu y))(u+1,\tu y)}.
\end{eqnarray*}

  As the function $(x,y)\mapsto \lfloor x/y \rfloor$ is elementary from
  Lemma \ref{lem:derose}, we get that $p(u,x,\tu y)$ is elementary. 
As multiplication is elementary, it follows that $$\sum_{u=0}^{x-2}
 p(u,x,\tu y)  b(u,\tu y)  = \BSUMs(p(u,x,\tu y) b(u,\tu y) )(x-2,\tu y)$$ is also
elementary, and  ${f}$ is elementary using closure by
composition and multiplication. 
\end{proof}



The proof of this theorem easily follows.

\begin{theorem}[A discrete ODE characterization of elementary
  functions] \label{th:elem}
  \begin{eqnarray*}
  \Elem &=& \N^\N \cap [\zero,\projection{i}{p}, \sucs,\plus,\minus; composition, \LI].
  \end{eqnarray*}  That is to say, the set of elementary functions $\Elem$ is the intersection
  with $\N^\N$ of the smallest set of functions
  that contains the zero functions $\zero$, the projection functions $\projection{i}{p}$,
  the successor function $\sucs$, addition
  $\plus$, subtraction $\minus$, and that is closed under composition and
  discrete linear ODE schemata 
  $\LI$.
\end{theorem}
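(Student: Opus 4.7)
The proof naturally splits into two inclusions.

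\emph{Direction $\Elem \subseteq \N^\N \cap [\zero, \projection{i}{p}, \sucs, \plus, \minus; \text{composition}, \LI]$.} I would proceed by induction on the structure of an elementary function. All base functions of $\Elem$ lie in the right-hand class: $\zero, \projection{i}{p}, \sucs, \plus$ are already generators, and multiplication arises as a linear ODE with $\tu A = 0$, $\tu B = y$, $\tu G = 0$. The sign function $\signn$ is a linear ODE by Remark \ref{signn}, and combining $\signn$ with $\minus$ and multiplication yields $\myominus$ (for instance by writing $\myominus(x,y) = \BSUMs_{z<x}(\chi(z,y))$ where $\chi(z,y)$ is a characteristic function built from $\signn$ applied to $\N$-valued quantities). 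Closure under composition is inherited directly. Closure under $\BSUM$ and $\BPROD$ is exactly the content of Lemma \ref{def:bsumproduct}, which expresses $\BSUMs(g)$ and $\BPRODs(g)$ as solutions of linear ODEs; the full operations $\BSUM(g) = \BSUMs(g) + g$ and $\BPROD(g) = \BPRODs(g) \cdot g$ then follow by composition with $\plus$ and multiplication.

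\emph{Direction $\N^\N \cap [\ldots, \LI] \subseteq \Elem$.} I would argue by structural induction on the construction of a function $f$ in the right-hand class, showing that each such $f : \N^p \to \Z^d$ admits a representation as a difference $f_+ - f_-$ of elementary $\N$-valued functions (so that whenever $f$ happens to lie in $\N^\N$, one recovers $f = \myominus(f_+,f_-) \in \Elem$). The generators $\zero, \projection{i}{p}, \sucs, \plus, \minus$ trivially admit such elementary representations, and composition preserves the property because $\Elem$ is closed under composition, addition, and multiplication. The crucial step is closure under $\LI$, which is exactly the content of Lemma \ref{elemVsLinODE}: given elementary data $\tu G, \tu A, \tu B$, the explicit closed-form solution provided by Lemma \ref{def:solutionexplicitedeuxvariables} is a combination of bounded sums, bounded products, and integer division $\lfloor \cdot / \cdot \rfloor$ of elementary functions, all of which are known to preserve elementary-ness (Lemmas \ref{lem:derose} and following).

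\emph{Anticipated obstacle.} The main technical subtlety sits in the $(\subseteq)$ direction, namely in reproducing $\myominus$ inside the linear ODE class using only $\zero, \projection{i}{p}, \sucs, \plus, \minus$ and $\LI$. One must be careful that all intermediate arguments fed to $\signn$ stay in $\N$, since the sign function of Remark \ref{signn} is only defined over the naturals. Once $\myominus$ has been realized, the rest of the argument reduces to book-keeping: the $(\supseteq)$ direction is an essentially mechanical induction resting on Lemma \ref{elemVsLinODE}, and the $(\subseteq)$ direction on Lemma \ref{def:bsumproduct}.
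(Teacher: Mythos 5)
Your proposal is correct and takes essentially the same approach as the paper, which also splits the statement into the inclusion $\Elem \subseteq \N^\N\cap[\zero,\projection{i}{p},\sucs,\plus,\minus; composition,\LI]$ obtained from Lemma~\ref{def:bsumproduct}, and the converse inclusion obtained from Lemma~\ref{elemVsLinODE} (itself resting on the explicit solution formula and the elementarity of integer division). The only divergences are bookkeeping that the paper leaves implicit (it handles $\Z$-valued functions through a sign-magnitude encoding, equivalent to your differences $f_+-f_-$, and treats the generator $\myominus$ as immediate); for the one obstacle you flag, a concrete choice is $\chi(z,y)=\signn{z^{\underline{y}}}$, since the falling power $z^{\underline{y}}=\prod_{t<y}(z-t)$ is always non-negative, is positive exactly when $z\ge y$, and is itself the solution of the linear ODE $f(0,z)=1$, $\dderiv{f(y,z)}{y}=(z-y-1)\cdot f(y,z)$, so that $\myominus(x,y)=\BSUMs(\chi)(x,y)$ as you propose.
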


 By adding suitable towers of exponential as basis functions, the above result can be generalized to 
 characterize the
various levels of the \Greg{} hierarchy.

\begin{\propositions}\label{pr:GrzeEtODE} Let $n
	\geq 3 $. 
  \begin{eqnarray*}
	\Gregn &=& \N^\N \cap [\zero,\projection{i}{p}, \sucs, \gE_{n}; composition, \LI].
	\end{eqnarray*}
  \end{\propositions}

%
%
We can also reexpress Kleene's minimization: 



\begin{theorem}[Discrete ODE computability and classical computability are equivalent] 
	A total function $f: \N^p \to \N$ is  total
	recursive iff  there
	exist some functions $h_1,h_2: \N^{p+1} \to \N^2$ in the smallest set of functions
	that contains the zero functions $\zero$, the projection functions $\projection{i}{p}$,
	the successor function $\sucs$, and that is closed under composition and
	discrete linear  $\ODE$ schemata  such that:
	for all $\tu y$, 
	 \begin{itemize}
	 	\item
                  there exists some $T=T(\tu y)$ with
		$h_2(T,\tu y) = 0$;
		 \item
                  $f(\tu y) = h_1 (T,\tu y)$ where $T$ is the smallest such $T$. 
	\end{itemize}
\end{theorem}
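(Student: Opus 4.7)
The plan is to combine Kleene's normal form for total recursive functions (Theorem~\ref{normalform}) with the characterization of elementary functions by discrete linear ODE schemata (Theorem~\ref{th:elem}). The proposed equivalence is essentially a reformulation of ``elementary outer function applied to a safe minimization of an elementary inner function'' inside the language of linear ODEs, and it therefore sits just above Theorem~\ref{th:elem} in the same way that Theorem~\ref{normalform} sits just above the identification of $\Elem$ as a subclass of the total recursive functions.

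For the forward direction, assume $f: \N^p \to \N$ is total recursive. Theorem~\ref{normalform} supplies elementary $g, h$ with $f(\tu y) = g(\SMIN(h)(\tu y))$. By Definition~\ref{def:classiqued}, for every $\tu y$ there exists a witness $T$ with $h(T, \tu y) = 0$, and the minimal such $T$ equals $\SMIN(h)(\tu y)$. Set $h_2(T, \tu y) = h(T, \tu y)$ and let $h_1(T, \tu y) = g(T)$, the latter being $g$ composed with a projection onto the first argument. The two properties required by the statement then hold tautologically, and Theorem~\ref{th:elem} places $g$ and $h$, hence $h_1$ and $h_2$, in the linear ODE class, subject to the caveat discussed in the final paragraph.

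For the backward direction, suppose $h_1$ and $h_2$ belong to the declared class and satisfy the two conditions. Theorem~\ref{th:elem} identifies them as elementary, hence in particular as total recursive functions. The hypothesis that for every $\tu y$ some $T$ with $h_2(T, \tu y) = 0$ exists is exactly the hypothesis needed for safe minimization (Definition~\ref{def:classiqued}) to produce a total recursive function $\tu y \mapsto T(\tu y) = \SMIN(h_2)(\tu y)$. Composition then yields $f(\tu y) = h_1(T(\tu y), \tu y)$ as a composition of total recursive functions, proving totality and computability of $f$.

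The main obstacle, and the only nontrivial point, is verifying that the reduced basis $\{\zero, \projection{i}{p}, \sucs\}$ together with composition and linear ODE schemata generates (at least) $\Elem \cap \N^\N$, so that Theorem~\ref{th:elem} may actually be invoked to supply $h_1$ and $h_2$. Addition is immediate from the linear ODE $f(0, y) = y$, $f'(x, y) = \sucs(\zero(\cdot))$. The sign function $\signn{\cdot}$ is obtained as the scalar linear ODE $f(0) = 0$, $f'(x) = -f(x) + 1$ of Remark~\ref{signn}. Truncated subtraction $\myominus$ can then be obtained by a vectorial linear ODE in which an auxiliary coordinate carries the sign of the current value, so that the main coordinate decreases by one exactly while it remains positive; equivalently, one builds $\myominus$ via bounded sums and products from Lemma~\ref{def:bsumproduct} once $\signn{\cdot}$ is available. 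Once $\plus$ and $\myominus$ are in place, Theorem~\ref{th:elem} closes the argument.
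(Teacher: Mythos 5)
Your overall strategy---combining Kleene's normal form (Theorem~\ref{normalform}) with the linear-ODE characterization of $\Elem$ (Theorem~\ref{th:elem}), and handling the backward direction by computability of the class plus safe minimization (Definition~\ref{def:classiqued})---is exactly the intended argument (the paper states this theorem without proof, immediately after announcing that it ``reexpresses Kleene's minimization''), and your first two paragraphs carry it out correctly. The genuine gap is in your final paragraph, and it cannot be repaired. The class $\mathcal{C}=[\zero,\projection{i}{p},\sucs; composition, \LI]$ satisfies the following invariant, proved by induction on the construction: every component of every function in $\mathcal{C}$ is nonnegative and nondecreasing in each of its arguments. Indeed, the three basis functions have both properties; composition preserves them; and in $\tu f=\LI(\tu G,\tu A,\tu B)$, Definition~\ref{def:lode} takes $\tu G$ to be $\N$-valued while $\tu A,\tu B$ must already belong to $\mathcal{C}$, so all coefficients are nonnegative and nondecreasing, whence $\tu f(x+1,\tu y)=(1+\tu A(x,\tu y))\cdot\tu f(x,\tu y)+\tu B(x,\tu y)$ is nonnegative, nondecreasing in $x$, and (by induction on $x$) nondecreasing in $\tu y$. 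Consequently the constant $-1$ does not belong to $\mathcal{C}$, so the equation $f'=-f+1$ of Remark~\ref{signn} is not an instance of the schema available to you (its coefficient $\tu A=-1$ would itself have to lie in the class); and worse, $\myominus$ and $\signcompn{\cdot}$ are decreasing in an argument, so no vectorial trick with an ``auxiliary sign coordinate'' can ever produce them inside $\mathcal{C}$. Your claim that the reduced basis generates $\Elem\cap\N^\N$ is therefore not merely unproven but false.

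The same invariant shows that no other patch can work, because the statement fails as literally printed: since $h_2\in\mathcal{C}$ is nonnegative and nondecreasing in its first argument, the hypothesis that $h_2(T,\tu y)=0$ for some $T$ forces $h_2(0,\tu y)=0$, so the minimal $T$ is always $0$ and every representable function equals $\tu y\mapsto h_1(0,\tu y)$, which lies in $\mathcal{C}$ and is hence nondecreasing---yet total recursive functions need not be monotone (e.g.\ $\signcompn{y}$). The statement must therefore be read with the same basis as Theorem~\ref{th:elem}, i.e.\ with $\plus$ and $\minus$ adjoined (addition is in fact derivable in $\mathcal{C}$ via $f(0,y)=y$, $\dderiv{f(x,y)}{x}=1$, but subtraction is not); this larger basis is also what the paper uses in its ODE characterization of $\PR$. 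Under that reading, your first two paragraphs already constitute a complete and correct proof, and your final paragraph should simply be deleted rather than fixed.
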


\section{Restricted recursion and integration schemes}
\label{sec:restrict}


As illustrated so far, discrete ODEs are convenient tools to define functions. From their very definition, such schemes come with an evaluation mechanism  that makes their solution function computable. However, if one is interested in realistically computable functions, such as polynomial time ones, we are still lacking ODE schemes that structurally guarantee that their solution can be efficiently computed. We focus on this aspect in the rest of the paper.

\subsection{Programming with discrete ODEs: Going to efficient algorithms}
\label{sec:prog}

The previous examples discussed in Section \ref{sec:f:programming with ODE} were not polynomial. We now want to go to efficiency issues.
To do so, for now, we suppose that composition of functions, constant
and  the following basic functions can be used freely as functions
from $\Z$ to $\Z$:

 \begin{itemize}
 	\item 
          arithmetic operations: $+$, $-$, $\times$;
         \item 
          $\length{x}$ returns the length of $|x|$ written in binary;
	 \item
          $\sign{x}: \Z \to \Z$ (respectively: $\signn{x}: \N \to \Z$)
          that takes value $1$ for $x>0$ and $0$ in the other case; 
 \end{itemize}
From these basic functions, for readability, one may define useful functions as
synonyms: 
 \begin{itemize}
 	\item 
          $\signcomp{x}$ stands for 
          $\signcomp{x}=(1-\sign{x})\times (1-\sign{-x})$: it takes value in $\{0,1\}$ and values $1$ iff
          $x=0$ for $x \in \Z$;
 \item 
  $\signcompn{x}$ stands for 
          $\signcompn{x}=1-\signn{x}$: it takes value in $\{0,1\}$ and values $1$ iff
          $x=0$  for $x \in \N$. 
 \item 
  $\cond{x}{y}{z}$ stands for $\cond{x}{y}{z}=z +
  \signcomp{x}\cdot (y-z)  $ and 
$\condn{x}{y}{z}$ stands for $\condn{x}{y}{z}=z +
\signcompn{x}\cdot (y-z)  $: They value $y$ when $x=0$ and $z$ otherwise.
%
\shorter{
We have for both versions (The point is that the first considers $x \in \Z$ while the second
assumes $x \in \N$): 
$
\cond{x}{y}{z}=\left\{\begin{array}{l}
y \mbox{ if } x=0\\
z \mbox{ otherwise }
                      \end{array}\right.
                  $ 
                  $
\condn{x}{y}{z}=\left\{\begin{array}{l}
y \mbox{ if } x=0\\
z \mbox{ otherwise }
\end{array}\right.
$
}
%
%
  $\cond{x<x'}{y}{z}$ will be a synonym for
  $\cond{\sign{x-x'+1}}{y}{z}$. Similarly, $\cond{x \ge x'}{y}{z}$ will be a synonym for
  $\cond{\sign{x-x'+1}}{z}{y}$ and $\cond{x=x'}{y}{z}$
  will be a synonym for $\cond{1-\signcomp{x-x'}}{y}{z}$.
\end{itemize}

\paragraph{Doing a change of variable}
We illustrate our discussion through an example, before getting to the general theory.
\begin{example}[Computing the integer part and divisions, going to
  Length-ODE] \label{ex:some}
\label{sec:racine}
%
Suppose  that we want to compute $$\lfloor \sqrt{x} \rfloor = \max \{ y \le x : y \cdot y \leq x \}$$ and $$\left\lfloor
\frac{x}{y} \right\rfloor = \max
\{z \le x : z \cdot y \le x \}.$$ It can be done by the following
general method. Let $f,h$ be some functions with $h$ being non decreasing. We compute $\fonction{some}_h$  with $\fonction{some}_h(x)=y$ s.t. $|f(x)-h(y)|$ is minimal.   
When $h(x)=x^2$ and $f(x)=x$, it holds that:
$$
\lfloor \sqrt{x} \rfloor = \textsf{if}(\fonction{some}_h(x)^2 \leq x,\fonction{some}_h(x),\fonction{some}_h(x)-1).
$$
The relation $\fonction{some}_h$ can be computed
(in non-polynomial time)
as a solution of an ODE similar to what we did to compute the minimum of a function.

However, there is a more efficient (polynomial time) way to do it based on what one 
usually does 
with classical ordinary differential equations: 
performing a
change of variable so that the search becomes logarithmic in $x$ through dichotomic search.
Indeed, we can write
$
\fonction{some}_h(x)  = G(\length{x},x)
$ 
for some function $G(t,x)$, that is a solution of

\shorter{defined by:
$
    G(0,x)
                =
          x;$ 
    $G(t+1,x) 
                    =
              \textsf{if}( h(G(t,x)) = f(x), G(t,x),$ 
         $\textsf{if}(h(G(t,x)) > f(x), G(t,x) - 2^{\length{x}-(t+1)},$
       $G(t,x) +2^{\length{x} - (t+1)}))
$ 
 hence, 
 }

 \begin{eqnarray*} G(0,x) &=&
          x;  \\
  \dderiv{G(t,x)}{t}  &=&
E(G(t,x),t,x)
\end{eqnarray*}
where $$E(G,t,x)= \left\{
 \begin{array}{lll}
 2^{\length{x}-t-1}   & \mbox{ whenever } & 
 h(G)>f(x),   \\
0  & \mbox{ whenever } &
h(G)=f(x)  \\
- 2^{\length{x}-t-1} & \mbox{ whenever } & 
h(G)<f(x). \\
 \end{array}
 \right.$$
   %
   \end{example}

The example above is indeed a discrete ODE whose solution is
converging fast (in polynomial time) to what we want.

Reformulating  
what we just did, 
we wrote $\fonction{some}_h(x)  = G(\length{x},x)$
using the solution of the above discrete ODE, i.e. the solution of 
$
G(T,y) = x + \dint{0}{T}{E(G(t,y),t,y)}{t}.
$
This provides a polynomial time algorithm to solve our problems 
 using
a new parameter $t=\length{x}$ logarithmic in $x$. Such techniques
will be at the heart of the coming results.


\shorter{
\begin{remark}
	This kind of constructions invites to write above kind of dynamics in
	the form 
	$$ \dderiv{\fonction{some}_h(x)}{\length{x}} = E(\fonction{some}_h(x),\length{x},x)$$
	
	This corresponds to what we will call a length-ODE. Length-ODE will provide
	ways to express fast computations. 
\end{remark}
}

\paragraph{Non-numeric examples}
Discrete ODEs turn out to be very natural in many other contexts, in particular non numerical ones,
where they would probably not be expected.

\begin{example}[Computing suffixes with discrete ODEs]
%
%
The suffix function, $\suffix(x,y)$ takes as input two integers $x$ and $y$ and outputs  the $\length{y}=t$ least significant bits of the binary decomposition of $x$.
We describe below a way to compute a suffix working over a
parameter $t$, that is logarithmic in $x$. 
Consider the following unusual algorithm that can be interpreted as a
fix-point definition of the function: $\suffix(x,y)=F(\length{x},y)$ where
%
\begin{eqnarray*}
    F(0,x) &=&
              x ; \\
      F(t+1,x)  &=&
                 \cond{\length{F(t,x)} = 1}{F(t,x)}{ 
                                                             F(t,x) - 2^{\length{F(t,x)} -1}}. 
 \end{eqnarray*}
%
%
%
This can be interpreted as a discrete ODE, whose solution is
converging fast again (i.e. in polynomial time) to what we want.
In other words, $\suffix(x,y)= F(\length{x},x)$ using the solution of
the IVP: \begin{eqnarray*} F(0,x)&=&x,\\ \dderiv{F(t,x)}{t}&=&\cond{\length{F(t,x)} = 1}{0}{- 2^{\length{F(t,x)} -1}}.\\
\end{eqnarray*}
 %

\shorter{
or of the 
the form of a length-ODE:
$$ \dderiv{\suffix(x,y)}{\length{x}} =
\cond{\length{\suffix(\length{x},u)} = 1}{0}{-
  2^{\length{\suffix(\length{x},y)} -1}}$$}

\end{example}

\shorter{
After this teaser, the rest of this article aims at discussing which
problems can be solved using discrete ordinary differential equations,
and with which complexity.  Before doing so, we need to review some
basic concepts and results from computation theory that we will be
needed in the rest of this article and that have been obtained at this date.
}

\subsection{Derivation along a function: the concept of $\lengt$-ODE}
\label{ssec:restrict}

In order to talk about complexity instead of simple computability, we need to
add some restrictions on the integration scheme. We  introduce the following variation on the notion of
derivation and consider derivation along some function $\lengt(x, \tu y)$. \shorter{As we
already saw on examples, and we will
see in the general case, this is motivated by the idea of being able to talk easily about
algorithms obtained by considering some suitable changes of variables. We will call this $\lengt$-ODEs. 
}


%
%
\shorter{
\begin{remark} Observe that this is necessary. Indeed, the solution of a polynomial ordinary differential
	equation (ODE) can grow very very fast.

	Indeed:
	\begin{eqnarray*}
		\left(\fallingexp{x}\right)^\prime &=& \fallingexp{x} \\
		\left(\fallingexp{\fallingexp{x}}\right)^\prime &=& \fallingexp{x} 
		\cdot \fallingexp{\fallingexp{x}} \\
		\left(\fallingexp{\fallingexp{\fallingexp{x}}}\right)^\prime &=&
		\fallingexp{x} \cdot
		\fallingexp{\fallingexp{x}} \cdot
		\fallingexp{\fallingexp{\fallingexp{x}}}
		\\
		&\vdots& 
	\end{eqnarray*}
	and so on,
	is solution of degree 2 polynomial ODE
	\begin{eqnarray*}
		y^\prime_1&=& y_1 \\
		y^\prime_2&=& y_1 \cdot y_2 \\
		y^\prime_3&=& y_2 \cdot y_3 \\
		&\vdots&
	\end{eqnarray*}
	with initial condition $$y_1(0)=y_2(0)=y_3(0)=\dots=1.$$
That means that if we consider a too general integration
        scheme, then we get such towers of exponentials. Clearly, such
        a function is not polynomial time computable, as only writing
        its value in binary cannot be done in polynomial time. 
      \end{remark}
   }
%
%



\begin{definition}[$\lengt$-ODE] Let $\lengt:\N^{p+1} \rightarrow \Z$ and $\tu h$ some function. We  write
	\begin{equation}\label{lode}
	\dderivL{\tu f(x,\tu y)}= \dderiv{\tu f(x,\tu y)}{\lengt(x,\tu
          y)} = \tu h(\tu f(x,\tu y),x,\tu y),
	\end{equation}

as a formal synonym for
$$ \tu f(x+1,\tu y)= \tu f(x,\tu y) + (\lengt(x+1,\tu y)-\lengt(x,\tu y)) \cdot
\tu h(\tu f(x,\tu y),x,\tu y).$$

When  $\lengt(x,\tu y)=\length{x}$, the length function, 
 we will call this special case a \emph{length-ODE}
\end{definition}
%
\begin{remark}
This is motivated by the fact that the latter expression is similar to
classical formula for classical continuous ODEs:
$$\frac{\delta f(x,\tu y )}{\delta x} = \frac{\delta
  \lengt (x,\tu y) }{\delta x} \cdot \frac{\delta f(x,\tu
  y)}{\delta \lengt(x, \tu y)}.$$ 
\end{remark}
This will allow us to simulate suitable changes of variables using
this analogy.
We
will talk about $\lengt$-IVP when some initial condition is added. 

\begin{example} It is easily seen that:
	
	\begin{itemize}
	\item function  $f: x  \mapsto
	2^{\length{x} }$  satisfies the equation $\dderivl{f(x)} = \length{x}' \cdot
	f(x)$

	For this, we have basically observed the
		fact that
		\begin{eqnarray*} (2^{\length{x}} ) ' &=& 2^{\length{x+1}} - 2^{\length{x}}  
								  = (2^{\length{x}'} -1)  \cdot
									  2^{\length{x}}  \\ &=& \length{x}' \cdot
									  2^{\length{x}}
		\end{eqnarray*}	

		 where in last line, we have used the fact that $2^{e}-1=e$ for $e \in
		\{0,1\}$. 
		
	\item function $f: x  \mapsto
	2^{\length{x}^2 }$ satisfies the equation $\dderivL{ f(x)}= \left(\length{x}^2\right)' \cdot
	f(x)$

	considering $\lengt(x)=\length{x}^2$.
	\end{itemize}\end{example}

	
	\begin{example} More generally
	$f(x,y)=2^{\length{x}\cdot \length{y}}$ is the solution of the
	following Length-IVP: 
	\begin{eqnarray*}
	f(0, y)&=&2^{\length{y}}  \\
	  \dderivl{f(x, y)} &=& 
							f(x,y)\cdot (2^{\length{y}}-1), 
	\end{eqnarray*}

	
	\noindent using $2^{e \cdot \length{y} }-1=e \cdot (2 ^{\length{y} }
	-1)$ for $e \in \{0,1\}$
	and
	\begin{eqnarray*}
	  \left(2^{\length{x}\cdot \length{y}} \right)' &=&
	\left(2^{\length{x}' \cdot \length{y}} - 1\right) \cdot 
														2^{\length{x}\cdot \length{y}}\\
	  &=& \length{x}' \cdot
	\left(2^{\length{y}}-1 \right) \cdot 2^{\length{x}\cdot \length{y}}.
	\end{eqnarray*}
\end{example}

\shorter{
  \begin{example}[Example \ref{ex:some} continued]
  The trick used in Example \ref{ex:some} can be read as using a new parameter
  $t=\length{x}$ logarithmic in $x$, using relation
  	$$ \dderiv{\fonction{some}_h(x)}{\length{x}} = E(\fonction{some}_h(x),\length{x},x)$$
      \end{example}
      }

%



\shorter{
  \begin{remark}
    More generally, 
$f(x,y)=2^{\length{x}\cdot \length{y}}$ is the solution of the
following Length-IVP: 
\begin{eqnarray*}
f(0, y)&=&2^{\length{y}}  \\
  \dderivl{f(x, y)} &=& 
                        f(x,y)\cdot (2^{\length{y}}-1), 
\end{eqnarray*}


\noindent using $2^{e \cdot \length{y} }-1=e \cdot (2 ^{\length{y} }
-1)$ for $e \in \{0,1\}$
and
\begin{eqnarray*}
  \left(2^{\length{x}\cdot \length{y}} \right)' &=&
\left(2^{\length{x}' \cdot \length{y}} - 1\right) \cdot 
                                                    2^{\length{x}\cdot \length{y}}\\
  &=& \length{x}' \cdot
\left(2^{\length{y}}-1 \right) \cdot 2^{\length{x}\cdot \length{y}} 
\end{eqnarray*}

\end{remark} 
} 

%

\subsection{Computing values for solutions of $\lengt$-ODE}
 
The main result of this part  
illustrates one key property of the $\lengt$-ODE 
scheme 
from a computational point of view: its dependence on the number of distinct values of function $\lengt$. So computing values $f(x)$ of a function $f$ solution of some $\lengt$-ODE system depends on the number of distinct values taken by $\lengt$ between $0$ and $x$.


\begin{definition}[$Jump_\lengt$]
Let $\lengt:\N^{p+1}\rightarrow \Z$ be some function.
Fixing $\tu y\in \N^p$, let
$$
Jump_\lengt(x,\tu y)=\{0 \le i \le x-1 | \lengt(i+1,\tu y) \neq \lengt(i,\tu
y)\}
$$
%
be the set of non-negative integers less than $x$ after which the value of
$\lengt$ changes.
\end{definition}

We also write:
 \begin{itemize}
 	\item 
          $J_{\lengt}=|Jump_\lengt(x,\tu y)|$ for its cardinality; 
	 \item  	
          $\alpha:[0..J_{\lengt}-1]\rightarrow Jump_\lengt(x,\tu y)$ for an
	increasing function enumerating the elements of $Jump_\lengt(x,\tu y)$:  If  $i_0 < i_1 < i_2 <
	\dots < i_{J_{\lengt}-1}$ denote
	all elements of $Jump_\lengt(x,\tu y)$, then
	$\alpha(j)=i_j\in Jump_\lengt(x,\tu y)$.
\end{itemize}

Technically, $\alpha$ should be written $\alpha_{\tu y}$, i.e. depends on $\tu y$. For simplicity of writing, we will write $\alpha$ without putting explicitly this dependency.}



\begin{lemma}
  \label{lem:fundamentalobservation} Let $\tu f: \N^{p+1}\rightarrow \Z^d$ and 
$\lengt:\N^{p+1}\rightarrow \Z$  be some functions.
Assume that \eqref{lode} holds. 
Then $\tu f(x,\tu y)$ is equal to:
	\begin{eqnarray*}
	\tu f(0,\tu y) 
+ \dint{0}{J_{\lengt}}{ 
          {\Delta 
          \lengt(\alpha(u),\tu y)}
        \cdot \tu h(\tu f(\alpha(u),\tu y),\alpha(u),\tu y)}{u}.\end{eqnarray*}
\end{lemma}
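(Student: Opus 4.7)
The plan is to prove this by a direct telescoping argument, combined with the simple observation that the finite-difference sum collapses to its nonzero terms, which are exactly the ones indexed by $Jump_\lengt$.

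First I would unfold the definition of $\lengt$-ODE. By \eqref{lode}, for each $i$ we have
\[
\tu f(i+1,\tu y) - \tu f(i,\tu y) \;=\; \bigl(\lengt(i+1,\tu y)-\lengt(i,\tu y)\bigr)\cdot \tu h(\tu f(i,\tu y),i,\tu y) \;=\; \Delta\lengt(i,\tu y)\cdot \tu h(\tu f(i,\tu y),i,\tu y).
\]
Then, applying the Fundamental Theorem of Finite Calculus (equivalently, telescoping), I get
\[
\tu f(x,\tu y) \;=\; \tu f(0,\tu y) + \sum_{i=0}^{x-1} \Delta\lengt(i,\tu y)\cdot \tu h(\tu f(i,\tu y),i,\tu y).
\]

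The next step is to prune the indices that contribute nothing. By definition, $i\notin Jump_\lengt(x,\tu y)$ iff $\lengt(i+1,\tu y)=\lengt(i,\tu y)$, i.e. iff $\Delta\lengt(i,\tu y)=0$. So every such term in the above sum vanishes, and only indices in $Jump_\lengt(x,\tu y)$ remain. Re-indexing via the enumerating bijection $\alpha:[0..J_\lengt-1]\to Jump_\lengt(x,\tu y)$ yields
\[
\tu f(x,\tu y) \;=\; \tu f(0,\tu y) + \sum_{u=0}^{J_\lengt-1} \Delta\lengt(\alpha(u),\tu y)\cdot \tu h(\tu f(\alpha(u),\tu y),\alpha(u),\tu y),
\]
which by definition of the discrete integral is exactly $\tu f(0,\tu y)+\dint{0}{J_\lengt}{\Delta\lengt(\alpha(u),\tu y)\cdot \tu h(\tu f(\alpha(u),\tu y),\alpha(u),\tu y)}{u}$.

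There is essentially no obstacle here: the computation is mechanical once one notices that summands vanish outside $Jump_\lengt$. The only minor point of care is the edge case $x=0$ (where the sum is empty and $J_\lengt=0$, so both sides equal $\tu f(0,\tu y)$ by our convention $\dint{0}{0}{\cdot}{u}=0$) and checking that the re-indexing by $\alpha$ preserves the order of summation, which is immediate since $\alpha$ is increasing.
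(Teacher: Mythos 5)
Your proof is correct and follows essentially the same route as the paper's: unfold the defining recurrence of the $\lengt$-ODE, telescope from $0$ to $x$, observe that all summands with $i\notin Jump_\lengt(x,\tu y)$ vanish because $\Delta\lengt(i,\tu y)=0$ there, and re-index the surviving terms by the increasing enumeration $\alpha$ to obtain the stated discrete integral. The only cosmetic difference is that the paper phrases the vanishing in terms of $\Delta \tu f(i,\tu y)=0$ before summing, whereas you telescope the explicit right-hand side first and then prune; these are the same argument.
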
 

\begin{proof}
	By definition, we have
	$$\tu f(x+1,\tu y) = \tu f(x,\tu y) + (\lengt(x+1,\tu y)-\lengt(x,\tu
	y)) \cdot \tu h(\tu f(x,\tu y),x,\tu y).$$
	Hence,
	\begin{itemize}
		\item as soon as $i \not\in Jump_\lengt(x,\tu y)$, then $\tu
		f(i+1,\tu y)=\tu f(i,\tu y)$, since
		$\lengt(i+1,\tu y)-\lengt(i,\tu y)=0$. In other words, $\Delta
		\tu f(i,\tu y)=0.$
		\item as soon as $i \in Jump_\lengt(x,\tu y)$, say $i=i_j$,
		then $$\Delta \tu f(i_j,\tu y) = (\lengt(i_j+1
		,\tu y)-\lengt(i_j,\tu y)) \cdot \tu h(\tu f(i_j,\tu y),i_j,\tu y)$$
		I.e.
		$
		\Delta \tu f(i_j,\tu y) =
		\Delta \lengt(i_j,\tu y) \cdot \tu h(\tu f(i_j,\tu y),i_j,\tu y) $.
	\end{itemize}
	
	Now  
	\begin{eqnarray*}
		\tu f(x,\tu y) &=& \tu f(0,\tu y) + \dint{0}{x}{\Delta
			{\tu f(t,\tu y)}}{t}  \\
		&=& \tu f(0,\tu y) +
		\sum_{t=0}^{x-1} \Delta {\tu f(t,\tu y)} \\
		&=& \tu f(0,\tu y) + \sum_{i_j
			\in Jump_\lengt(x,\tu y)} \Delta
		{\tu f(i_j,\tu y)} \\
		&=& \tu f(0,\tu y) +  \sum_{i_j \in Jump_\lengt(x,\tu y)} \Delta \lengt(i_j,\tu y) \cdot
		\tu h(\tu f(i_j,\tu y),i_j,\tu y)  \\
		&=& \tu f(0,\tu y) +  \sum_{j=0}^{J_{\lengt}-1} \Delta \lengt(\alpha(j),\tu y) \cdot
		\tu h(\tu f(\alpha(j),\tu y),\alpha(j),\tu y) \\
		&=&  \tu f(0,\tu y) + \dint{0}{J_{\lengt}}{\Delta
			\lengt(\alpha(u),\tu y) \cdot \tu h (\tu f(\alpha(u),\tu
			y),\alpha(u),\tu y)}{u}
	\end{eqnarray*}
	which corresponds to the expression. 
\end{proof}

\begin{remark}
Note that the above result would still hold with $\lengt$ and $h$
taking their images in $\R$.
\end{remark} 

 The above proof is based on (and illustrates) some fundamental
aspect of $\lengt$-ODE from their definition: for fixed $\tu y$, the value of $\tu f(x,\tu y)$ only
changes when the value of $\lengt(x,\tu y)$ changes.
 Under the previous hypotheses, there is then an alternative view to
understand the integral, by using a change of variable, and by
building a discrete ODE that mimics the computation of the
integral.

\shorter{
Basically, we are using the fact that we can consider some
parameter $t$ corresponding to $\lengt(x,\tu y)$.
\shorter{In the special case where
$\lengt(x, \tu y)$ is length $\length{x}$ (see coming discussion), this
parameter will be logarithmic in $x$.
}
Indeed:
}


%

\begin{lemma}[Alternative view] \label{fundobge}
Let 
$f: \N^{p+1}\rightarrow \Z^d$,
$\lengt:\N^{p+1}\rightarrow \Z$  be some functions and assume that \eqref{lode} holds.
Then $\tu f(x,\tu y)$ is given by 
$\tu f(x,\tu y)=  \tu {\bar F}(J_{\lengt}(x,\tu y),\tu y)$
where $\tu {\bar F}$ is the solution of initial value problem
\begin{eqnarray*}
\tu {\bar F}(0,\tu y)&=& \tu f(0,\tu y), \\
\dderiv{\tu {\bar F}(t,\tu y)}{t} &=&  {\Delta\lengt(\alpha(t),\tu y)}
                               \cdot
\tu h(\tu { \bar F}(t, \tu y),\alpha(t),\tu y).
\end{eqnarray*}

\end{lemma}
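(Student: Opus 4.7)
My plan is to exhibit a natural candidate obtained by sampling $\tu f$ at the points just after each jump of $\lengt$, verify that this candidate satisfies the same initial value problem as $\tu{\bar F}$, and then conclude by uniqueness of solutions to a discrete IVP over $\N$. The key structural fact, already extracted in the proof of Lemma~\ref{lem:fundamentalobservation}, is that $\tu f(\cdot, \tu y)$ is piecewise constant and only changes at the positions $\alpha(0) < \alpha(1) < \cdots < \alpha(J_\lengt - 1)$, since $\tu f(i+1,\tu y) = \tu f(i,\tu y)$ whenever $i \notin Jump_\lengt(x,\tu y)$.

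I would define the candidate by $\phi(0, \tu y) = \tu f(0, \tu y)$ and $\phi(t, \tu y) = \tu f(\alpha(t-1)+1, \tu y)$ for $t \geq 1$. Piecewise constancy between consecutive jumps immediately yields $\phi(t, \tu y) = \tu f(\alpha(t), \tu y)$ for all $0 \leq t \leq J_\lengt - 1$: for $t=0$ because no jump occurs in $[0, \alpha(0)-1]$; for $t \geq 1$ because no jump occurs strictly between $\alpha(t-1)$ and $\alpha(t)$. Applying \eqref{lode} at $x = \alpha(t)$ then gives, for $0 \leq t < J_\lengt$,
\[
\phi(t+1, \tu y) - \phi(t, \tu y) = \tu f(\alpha(t)+1, \tu y) - \tu f(\alpha(t), \tu y) = \Delta \lengt(\alpha(t), \tu y) \cdot \tu h(\phi(t, \tu y), \alpha(t), \tu y),
\]
so $\phi$ satisfies exactly the IVP defining $\tu{\bar F}$. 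By uniqueness of the solution of such a recurrence over $\N$, $\phi$ and $\tu{\bar F}$ agree on $[0, J_\lengt]$.

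To conclude, I would evaluate at $t = J_\lengt(x, \tu y)$. When $J_\lengt \geq 1$, $\tu{\bar F}(J_\lengt, \tu y) = \phi(J_\lengt, \tu y) = \tu f(\alpha(J_\lengt - 1)+1, \tu y) = \tu f(x, \tu y)$, the last equality holding because no jump occurs in $[\alpha(J_\lengt - 1)+1, x-1]$ so $\tu f$ is constant on $[\alpha(J_\lengt - 1)+1, x]$. The edge case $J_\lengt = 0$ is immediate: no jump in $[0,x-1]$ means $\tu f$ is constant on $[0, x]$, hence $\tu{\bar F}(0, \tu y) = \tu f(0, \tu y) = \tu f(x, \tu y)$. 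The main obstacle is essentially careful indexing with $\alpha$ and treatment of the two boundary cases ($J_\lengt = 0$ and $t = 0$ versus $t \geq 1$); conceptually, the lemma just says that a $\lengt$-ODE becomes a plain discrete ODE once reparameterized by its jump index.
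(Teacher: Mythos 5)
Your proof is correct and follows essentially the same route as the paper's: both reparameterize $\tu f$ by the jump index, resting on the two key facts that $\tu f$ is constant between consecutive jumps and that $J_{\lengt}(\alpha(t),\tu y)=t$. The differences are presentational only --- the paper rewrites the integral formula of Lemma~\ref{lem:fundamentalobservation} as an ODE and identifies $\tu f(\alpha(t),\tu y)$ with $\tu {\bar F}(t,\tu y)$ by induction, whereas you verify the recurrence for the sampled function directly and conclude by uniqueness, handling the boundary cases ($J_{\lengt}=0$ and $t=J_{\lengt}$) more explicitly than the paper does.
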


\begin{proof}
If we rewrite the previous integral as an ODE, we get that $\tu f(x,\tu y)=  \tu {\bar F}(J_{\lengt}(x,\tu y),\tu y)$
where $\tu {\bar F}$ is the solution of initial value problem
\begin{eqnarray*}
\tu {\bar F}(0,\tu y)&=& \tu f(0,\tu y), \\
\dderiv{\tu {\bar F}(t,\tu y)}{t} &=&  {\Delta\lengt(\alpha(t),\tu y)}
                               \cdot
\tu h(\tu { f}(\alpha(t), \tu y),\alpha(t),\tu y).
\end{eqnarray*}

But by induction, $\tu { f}(\alpha(t), \tu y)$ can be rewritten as  $\tu {\bar F}(J_{\lengt}(\alpha(t),\tu y),\tu y)=\tu {\bar F}(t,\tu y)$
as $J_{\lengt}(\alpha(t),\tu y)=t$.

\end{proof}

In the special case of a length ODE, that is where $\lengt(x,\tu y)=\length{x}$, it holds that
$J_{\lengt}=J_{\lengt}(x,\tu y)=|Jump_\lengt(x,\tu y)|=\length{x}-1$.
Hence, the preceding result can also be formulated as:

\begin{lemma}[Alternative view, case of Length ODEs] \label{fundob}
Let 
$f: \N^{p+1}\rightarrow \Z^d$,
$\lengt:\N^{p+1}\rightarrow \Z$  be some functions and assume that \eqref{lode} holds considering  $\lengt(x,\tu y)=\length{x}$.
Then $\tu f(x,\tu y)$ is given by 
$\tu f(x,\tu y)= \tu F(\length{x},\tu y)$
where $\tu F$ is the solution of initial value problem
\begin{eqnarray*}
\tu F(1,\tu y)&=& \tu f(0,\tu y), \\
\dderiv{\tu F(t,\tu y)}{t} &=&  
\tu h(\tu F(t, \tu y),2^{t}-1,\tu y).
\end{eqnarray*}

\end{lemma}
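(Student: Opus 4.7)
The plan is to obtain this result as a direct specialization of the preceding Lemma (the general alternative view, \ref{fundobge}) to the case $\lengt(x,\tu y)=\length{x}$, combined with a re-indexing to get the natural parameterization by $\length{x}$ rather than by $J_{\lengt}-1$.

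First I would identify the jump set explicitly. For $\lengt(x,\tu y)=\length{x}$, we have $\length{i+1}\neq\length{i}$ if and only if $i+1$ is a power of $2$, i.e.\ $i=2^{k}-1$ for some $k\geq 1$ (using the convention $\length{0}=1$ already pointed out just after the statement, so that $J_{\lengt}=\length{x}-1$). Hence the enumeration is $\alpha(j)=2^{j+1}-1$ for $j=0,\dots,\length{x}-2$. At each such jump, $\Delta\lengt(\alpha(j),\tu y)=\length{2^{j+1}}-\length{2^{j+1}-1}=(j+2)-(j+1)=1$, so the factor $\Delta\lengt(\alpha(t),\tu y)$ appearing in the general alternative view lemma is constantly equal to $1$ along the enumerated jumps.

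Next I would apply Lemma~\ref{fundobge} to get $\tu f(x,\tu y)=\tu{\bar F}(J_{\lengt}(x,\tu y),\tu y)=\tu{\bar F}(\length{x}-1,\tu y)$, where $\tu{\bar F}$ satisfies $\tu{\bar F}(0,\tu y)=\tu f(0,\tu y)$ and $\dderiv{\tu{\bar F}(t,\tu y)}{t}=\tu h(\tu{\bar F}(t,\tu y),\alpha(t),\tu y)=\tu h(\tu{\bar F}(t,\tu y),2^{t+1}-1,\tu y)$. Then I would perform the shift $\tu F(t,\tu y):=\tu{\bar F}(t-1,\tu y)$. This immediately gives $\tu f(x,\tu y)=\tu F(\length{x},\tu y)$, and the initial condition $\tu F(1,\tu y)=\tu{\bar F}(0,\tu y)=\tu f(0,\tu y)$. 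For the derivative, $\dderiv{\tu F(t,\tu y)}{t}=\tu F(t+1,\tu y)-\tu F(t,\tu y)=\tu{\bar F}(t,\tu y)-\tu{\bar F}(t-1,\tu y)=\dderiv{\tu{\bar F}(s,\tu y)}{s}\big|_{s=t-1}=\tu h(\tu{\bar F}(t-1,\tu y),2^{t}-1,\tu y)=\tu h(\tu F(t,\tu y),2^{t}-1,\tu y)$, which is exactly the claimed equation.

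There is no real obstacle here beyond careful bookkeeping; the only subtlety is the off-by-one coming from the length convention $\length{0}=1$, which is what forces the initial condition to be stated at $t=1$ rather than $t=0$ and produces the argument $2^{t}-1$ (rather than $2^{t+1}-1$) on the right-hand side after the shift. Uniqueness of solutions of the discrete IVP then ensures that the $\tu F$ so defined is the one characterized by the lemma.
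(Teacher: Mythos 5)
Your proof is correct and follows essentially the same route as the paper's: both specialize Lemma~\ref{fundobge} using $J_{\lengt}=\length{x}-1$ and $\alpha(t)=2^{t+1}-1$, then perform the shift $\tu F(t,\tu y)=\tu{\bar F}(t-1,\tu y)$ and compute the discrete derivative to land on the stated IVP. Your version merely spells out the jump-set bookkeeping (and the factor $\Delta\lengt(\alpha(t-1),\tu y)=1$) a bit more explicitly than the paper does.
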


\begin{proof}
Consider $\tu F(t,\tu y) = \tu {\bar F}(t-1,\tu y)$, observing that 
$\dderiv{\tu F(t,\tu y)}{t} = \tu F(t+1,\tu y) - F(t,\tu y)= \tu {\bar F}(t,\tu y) - \tu {\bar F}(t-1,\tu y) = \dderiv{\tu {\bar F}(t-1,\tu y)}{t} 
= {\Delta\lengt(\alpha(t-1),\tu y)}
                               \cdot
\tu h(\tu { \bar F}(t-1, \tu y),\alpha(t-1),\tu y)
= 1 \cdot \tu h(\tu {  F}(t, \tu y),\alpha(t-1),\tu y).
$
Observing that $\alpha(t) = 2^{t+1}-1$, the assertion follows.
\end{proof}

	\begin{example}[Back to function $2^{\length{x}}$]
		To compute function $f: x \mapsto 2^{\length{x}}$, one can also remark that $f(x)=F(\length{x})$ where $F(t)=2^t$ is solution of IVP:  
		%
		$F'(t)= F(t)$, $F(0)=1$. 
		We have used a change of variable $t= \length{x}$. 
		
		\end{example}





\shorter{
\begin{example}
  The previous discussion about the complexity of computing  $x \mapsto
  2^{\length{x}}$ and $x \mapsto 2^{\length{x}^2}$ is a 
  concrete application of all these remarks.
\end{example}
                                     }


\shorter{
\begin{example}
  Let us consider an example, where $\lengt(x)$ is not
  $\length{x}$:  Recall function 
  $f: x \mapsto 2^{ \lfloor \sqrt{x} \rfloor }$ for which we
  established$
  \dderiv{f(x)}{\lengt(x)}
= 
f(x)
$ considering 
%
%
%
%
  where 
  $\lengt(x) = \lfloor \sqrt{x} \rfloor$, 








One may think that the number
$|Jump_\lengt(x)|$  of $\lengt$ 
  is hard to predict, but the point is to
  look at the method we devised to compute $ \lfloor \sqrt{x} \rfloor$
  in Example \ref{sec:racine}: It is basically expressing $\lfloor \sqrt{x}
  \rfloor$ as some function $G$ of $\fonction{some}_h(x)$: We wrote $$\lfloor \sqrt{x}
  \rfloor = G( \fonction{some}_h(x))$$ for some function $G$.
 Consequently, we could
also consider variable $\lengt_2(x)=\fonction{some}_h(x)$, and  see
from expressions that the
number of jumps $|Jump_\lengt(x)|$   of previous $\lengt$ is actually
related to the $|Jump_{\lengt_2}(x)|$  of
this new $\lengt_2(x)$.
We also have
$\dderiv{f(x)}{\lengt_2(x)} =                                                 f(x) $

Observing
that $\fonction{some}_h(x)$ is  in turn
computed in ``time'' $\length{x}$ using the method of Example
\ref{sec:racine},  the number of jumps
for all these $\lengt(x)$ is always polynomial, and we are guaranteed
that all these
expressions lead to fast (polynomial) algorithms.  
\end{example}
}




\shorter{
\begin{remark}
  This method clearly extends to more general functions: Generalizing
  the above reasoning, we can compute fast functions of type $x \mapsto g(
  \lfloor \sqrt{x} \rfloor)$ as soon as we have a fast ODE computing
  $g$. Similarly, $\lfloor \sqrt{x} \rfloor$ can be replaced by
  anything that can be computed fast basically using similar
  techniques.
  \end{remark}
}
 \shorter{
An important and natural case is the special case where $\lengt(x,\tu y)$ is the
usual one variable length function $\lengt(x,\tu y)=\length{x}$.
We will of course write $\dderivl{\tu f(x,\tu y)}
$ in that
case for $\dderivL{\tu f(x,\tu y)}$.
}

\shorter{2
\subsection{Length-ODEs}

An important and natural case is the special case where $\lengt(x,\tu y)$ is the
usual one variable length function $\lengt(x,\tu y)=\length{x}$.
We will of course write $\dderivl{\tu f(x,\tu y)}
$ in that
case for $\dderivL{\tu f(x,\tu y)}$. 

We can adapt the Lemma above to
this special case of a what we will call length-ODE. Namely:

\begin{corollary}[First view]~\label{corollary:fundamental observation}
	Let $\lengt:\N\rightarrow \N$ be defined by $\lengt(x)=\length{x}$ for all integer $x$ and $f$ satisfies the hypothesis of Lemma~\ref{lem:fundamental observation}. Then,

	$$\tu f(x,\tu y) =
\tu	f(0,\tu y) + \dint{0}{\length{x}}{\tu h(\tu f(2^u-1,\tu y),2^u-1,\tu y)}{u}$$

	\noindent Or, equivalently:

	$$\tu f(x,\tu y) = \tu f(0,\tu y) + \sum_{i=0}^{\length{x}-1}
        \tu h(\tu f(2^i-1,\tu y),2^i-1,\tu y)$$
\end{corollary}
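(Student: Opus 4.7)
The plan is to apply Lemma~\ref{lem:fundamentalobservation} directly with the choice $\lengt(x,\tu y) := \length{x}$ (so that the right-hand side does not depend on $\tu y$). The task then reduces to identifying explicitly the jump set $Jump_\lengt(x,\tu y)$, its cardinality $J_\lengt$, the enumeration $\alpha$, and the per-jump increments $\Delta \lengt(\alpha(u),\tu y)$, and substituting these into the integral form provided by that lemma.

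First I would observe that $\length{i+1} \neq \length{i}$ precisely when $i+1$ crosses a power of $2$, i.e., when $i = 2^k - 1$ for some $k \geq 0$. Under the convention $\length{0} = 0$ (which is what makes $\alpha(0) = 0$ in the target formula), this yields
\[
Jump_\lengt(x,\tu y) = \{\,2^k - 1 : k \geq 0,\ 2^k - 1 < x\,\} = \{0,1,3,7,\dots,2^{\length{x}-1}-1\},
\]
so $J_\lengt = \length{x}$ and the monotone enumeration is $\alpha(u) = 2^u - 1$ for $0 \leq u \leq \length{x} - 1$.

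Next I would compute the size of each jump: at $i = 2^u - 1$, the binary length grows by exactly one, so $\Delta \lengt(\alpha(u),\tu y) = \length{2^u} - \length{2^u - 1} = 1$. Plugging $J_\lengt$, $\alpha$, and these unit increments into the general expression supplied by Lemma~\ref{lem:fundamentalobservation} makes the $\Delta \lengt$ factors disappear and gives
\[
\tu f(x,\tu y) = \tu f(0,\tu y) + \dint{0}{\length{x}}{\tu h(\tu f(2^u - 1,\tu y),\, 2^u - 1,\, \tu y)}{u},
\]
which is the integral form claimed; unfolding the definition of $\dint{}{}{}{}$ then yields the explicit sum $\tu f(0,\tu y) + \sum_{i=0}^{\length{x}-1} \tu h(\tu f(2^i - 1,\tu y), 2^i - 1, \tu y)$.

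The only genuinely delicate point is the off-by-one bookkeeping around $\length{0}$ — whether $0$ itself counts as a jump, and correspondingly whether $J_\lengt$ is $\length{x}$ or $\length{x} - 1$ (one sees the alternative normalization in Lemma~\ref{fundob}, where $\alpha(t) = 2^{t+1} - 1$ and the lower endpoint is shifted). Once a single convention for $\length{0}$ is fixed, everything else is a direct substitution into Lemma~\ref{lem:fundamentalobservation}.
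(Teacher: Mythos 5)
Your proof is correct and follows essentially the same route as the paper's: the paper's proof is a one-liner stating that the corollary is an immediate consequence of Lemma~\ref{lem:fundamentalobservation} with $\alpha(i)=2^i-1$, which is exactly your substitution. You simply spell out the details the paper leaves implicit (the jump set $\{2^k-1\}$, the unit increments $\Delta\lengt(\alpha(u),\tu y)=1$, and the $\length{0}$ convention fixing $J_\lengt=\length{x}$), and your remark on the off-by-one normalization relative to Lemma~\ref{fundob} is a fair observation about the paper's conventions rather than a gap in your argument.
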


\begin{proof} Immediate consequence of Lemma~\ref{lem:fundamental observation}. Function $\alpha$ is such that $\alpha(i)=2^i-1$.
\end{proof}


\begin{corollary}[Alternative view]

	Let $\lengt:\N\rightarrow \N$ be defined by $\lengt(x)=\length{x}$ for all
        integer $x$ and $f$ satisfies the hypothesis of
        Lemma~\ref{lem:fundamental observation}. 
Then    $\tu f(x,\tu y)$ is given by $\tu f(x, \tu y) =
F(\length{x},\tu y)$
\noindent where $\tu F$ is the solution of initial value problem


\[	\dderiv{\tu F(t,\tu y)}{t} = \tu h ( \tu F(t,\tu y ),t,\tu y) 
\mbox{ with }	\tu F(0,\tu y)=\tu f(0,\tu y)
\]

%
\end{corollary}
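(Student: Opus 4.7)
The plan is to obtain this corollary as an immediate specialization of the preceding alternative-view result (Lemma~\ref{fundobge}) to the case $\lengt(x,\tu y)=\length{x}$, followed by a change of indexing.

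First I would analyze the jump structure of $\lengt$. Since the binary length of $x$ increases by exactly one precisely when $x+1$ is a power of $2$, we have
$$Jump_\lengt(x,\tu y) = \{\,2^k-1 : 0 \leq k < \length{x}\,\},$$
which is independent of $\tu y$. Hence $J_\lengt(x,\tu y) = \length{x}$ and the enumerating function satisfies $\alpha(j) = 2^j - 1$. The crucial numerical observation is that at each jump point, $\Delta \lengt(\alpha(j),\tu y) = \length{2^j} - \length{2^j-1} = 1$, so the multiplicative weight in Lemma~\ref{fundobge} disappears.

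Next I would plug these data into Lemma~\ref{fundobge}. That lemma gives $\tu f(x,\tu y) = \tu{\bar F}(J_\lengt(x,\tu y),\tu y) = \tu{\bar F}(\length{x},\tu y)$, where $\tu{\bar F}$ satisfies $\tu{\bar F}(0,\tu y) = \tu f(0,\tu y)$ and
$$\dderiv{\tu{\bar F}(t,\tu y)}{t} = \Delta\lengt(\alpha(t),\tu y) \cdot \tu h(\tu{\bar F}(t,\tu y),\alpha(t),\tu y) = \tu h(\tu{\bar F}(t,\tu y),\alpha(t),\tu y),$$
using the step above. Renaming $\tu{\bar F}$ to $\tu F$ yields the stated IVP and the identity $\tu f(x,\tu y) = \tu F(\length{x},\tu y)$.

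The only subtle point, and the one I would flag as the main obstacle, is that the ODE produced by this argument has second argument $\alpha(t) = 2^t - 1$ in $\tu h$, matching Lemma~\ref{fundob} and the preceding First view corollary, whereas the statement as written has the bare variable $t$. Either the statement should be read as having $2^t-1$ in the position argument (the version naturally consistent with the preceding lemmas), or one must verify that it does not matter for the intended use because $\tu h$ is afterwards always composed with a suitable reindexing; in either reading, the proof above applies verbatim, with uniqueness of solutions of discrete IVPs giving the final identification.
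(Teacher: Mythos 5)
Your proof is correct and takes essentially the same route as the paper: the paper's own version of this statement (Lemma~\ref{fundob}) is likewise obtained by specializing the general alternative-view Lemma~\ref{fundobge} to $\lengt(x,\tu y)=\length{x}$, using exactly your two observations that every jump of the length function contributes $\Delta\lengt=1$ and that $\alpha$ enumerates the numbers $2^k-1$. The subtlety you flag is also genuine: the bare $t$ in the statement is a slip, and the paper's Lemma~\ref{fundob} indeed writes $\tu h(\tu F(t,\tu y),2^{t}-1,\tu y)$ on the right-hand side, the only remaining difference being a convention on $\length{0}$ (the paper takes $J_{\lengt}=\length{x}-1$, hence shifts the index via $\tu F(t,\tu y)=\tu{\bar F}(t-1,\tu y)$ and starts at $\tu F(1,\tu y)=\tu f(0,\tu y)$, whereas your convention makes the identification direct, with the initial condition at $t=0$ as in the statement).
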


In other words, for $\lengt(x)=\length{x}$, this offers us also two ways to present a length-ODE for
a function $f(x,\tu y)$: either by considering equation of the type
of~\eqref{lode} or by considering $\tu f(x,\tu y) = \tu
F(\length{x},\tu y)$ where $\tu F$ given by an equation of the form:

\begin{equation}\label{lode2}
\dderiv{\tu F(t,\tu y)}{t} = \tu h(\tu F(t,\tu y),t,\tu y)
\end{equation}
with $\tu F(0,\tu y)=\tu f(0,\tu y)$. As before, the idea is that  $t$ is a parameter logarithmic in $x$, namely
$t=\length{x}$.

}

\newcommand\vectorp[2]{\left(\begin{array}{l}
#1 \\ #2 \\
\end{array} \right)}

\shorter{
\subsection{On linear length-recursion scheme}


The example of function $f(x)$ defined by:

	\[
f(0)=1\mbox{ and } \dderiv{f}{\lengt}(x) = f(x)\cdot f(x).
\]

\noindent illustrates that it is possible to derive fast growing functions by simple length-ODE. This is due to the presence of non linear  terms such as $f(x)^2$ in the right-hand side of an equation. To control the growth of functions defined by ODE one possibility is to restrict the way functions that appear in equations use their argument. While doing so, one challenge is then to design a restriction that is  flexible and powerful  enough to permit a natural and simple description of a rich set of functions, in particular polynomial time computable functions.
}

\shorter{
Our purpose now is to discuss which kind of problems can be solved
efficiently using similar techniques: it  turns out to be exactly all of $\FPtime$.
It will be made clear from the incoming discussion and results. 
}

\subsection{Linear $\lengt$-ODE}~\label{subsec: linear length ODE}

In this section we adapt the concept of linearity to $\lengt$-ODE and prove that the functions that are solutions of the so-called linear $\lengt$-ODE
are intrinsically polynomial time computable when the chosen $\lengt$ function do not change of values often, as it is the case for the length function $\length{x}$.

Let's first consider the following length-ODE:

\begin{eqnarray*}
f(0)&=&2\\
\dderivl{f}(x) &=& f(x)\cdot f(x) - f(x).
\end{eqnarray*}

The unique solution  $f$ of the equation is $f(x)=2^{2^{\length{x}}}$. 
This example illustrates that it is possible to derive fast growing functions by simple length-ODE. This is due to the presence of non linear  terms such as $f(x)^2$ in the right-hand side of an equation. To control the growth of functions defined by ODE one possibility is to restrict the way functions that appear in equations use their argument. While doing so, one challenge is then to design a restriction that is  flexible and powerful  enough to permit a natural and simple description of a rich rest of functions, in particular polynomial time computable functions.

\newcommand\polynomial{ \fonction{sg}-polynomial}

\begin{definition}
A \polynomial{}  expression $P(x_1,...,x_h)$ is an expression built-on
$+,-,\times$ (often denoted $\cdot$) and $\sign{}$ functions over a set of variables/terms $X=\{x_1,...,x_h\}$ and integer constants.
The degree $\deg(x,P)$ of a term $x\in X$ in $P$ is defined inductively as follows:
\begin{itemize}
	\item $\deg(x,x)=1$ and for  $x'\in X\cup \Z$ such that $x'\neq x$, $\deg(x,x')=0$;
	\item $\deg(x,P+Q)=\max \{\deg(x,P),\deg(x,Q)\}$;
\item $\deg(x,P\times Q)=\deg(x,P)+\deg(x,Q)$;
\item $\deg(x,\sign{P})=0$.
\end{itemize}
A \polynomial{}  expression $P$  is \emph{essentially constant} in
$x$ if $\degre{x,P}=0$. 
It is \emph{essentially linear} in
$x$ if $\degre{x,P}=1$ i.e. if there exist \polynomial{}  expression $P_1,P_2$ such that $P= Q_1\cdot x + Q_2$ and $\degre{x,Q_1}=\degre{x,Q_2}=0$. 

A vectorial function (resp. a matrix or a vector) is said to be a \polynomial{} expression if all its coordinates (resp. coefficients) are. 
It is said to be
\emph{essentially constant} (resp. \emph{essentially linear}) if all its coefficients are.
\end{definition}

Compared to the classical notion of degree in a polynomial expression,
here all subterms that are within the scope of a sign function contributes $0$ to the degree.

\begin{example} 
	Let us consider the following \polynomial{} expressions. 
   \begin{itemize}
   \item 
    The expression $P(x,y,z)=x\cdot \sign{(x^2-z)\cdot y} + y^3$
    is essentially linear in $x$, essentially constant in $z$ and not linear in
    $y$. 
     \item 
      The expression
    $P(x,2^{\length{y}},z)=\sign{x^2 - z}\cdot z^2 + 2^{\length{y}}$
    is essentially constant in $x$, essentially linear in
    $2^{\length{y}}$ (but not essentially constant) and not
    essentially linear in $z$. 
     \item 
      The expression:
%
    $ \cond{x}{y}{z}=z + \signcomp{x}\cdot (y-z)= z +
    (1-\sign{x})\cdot (1-\sign{-x})\cdot (y-z) $
    is essentially constant in $x$ and linear in $y$ and $z$.

	\item The following matrix is essentially linear in $z$ and $y$ and constant in $x$.  
 
	\[	A(x,y,z)=
		\begin{pmatrix}
			\sign{x - y} & \sign{x}\cdot y\\
			\sign{z^5 - x^3} & z
		\end{pmatrix}	
	\]
  \end{itemize}

\end{example}

We are now ready to define the following main concept of ODE.

\begin{definition}[linear $\lengt$-ODE]\label{def:linear lengt ODE}
Function $\tu f$ is linear $\lengt$-ODE definable (from $\tu u$, 
$\tu g$ and $\tu h$) if it corresponds to the
solution of $\lengt$-IVP
\begin{equation}\label{SPLode}
\begin{split}
	\tu f(0,\tu y) =& \tu g(\tu y), \\
	\dderivL{\tu f(x,\tu y)}=&   \tu u(\tu f(x,\tu y), \tu h(x,\tu y),
	x,\tu y) 	
\end{split}
\end{equation}
\noindent where $\tu u$ is \textit{essentially linear} in $\tu f(x, \tu y)$. 
When $\lengt(x,\tu y)=\length{x}$, such a system is called linear length-ODE.
\end{definition}

In other words, function $\tu f$ is linear $\lengt$-ODE definable if there exist ${\tu A} ( \tu f(x,\tu y), \tu h (x,\tu y),
x,\tu y)$ and ${\tu B} ( \tu f(x,\tu y), \tu h(x,\tu y),
x,\tu y)$ that are \polynomial{} expressions essentially constant in $\tu f(x,\tu y)$ such that 

\[
	\dderivL{\tu f(x,\tu y)} = {\tu A} ( \tu f(x,\tu y), \tu h(x,\tu y),
	x,\tu y) \cdot
	  \tu f(x,\tu y)
	  +   {\tu B} ( \tu f(x,\tu y), \tu h(x,\tu y),
	  x,\tu y).	
\]

	In all previous reasoning, we considered that a function over the integers is polynomial time
	computable if it is in the length of all its arguments, as this is the
	usual convention. When not
	explicitly stated, this is our convention.  
	As usual, we also say that some vectorial function (respectively:
	matrix) is polynomial time computable if all its components are.
	We need sometimes to consider also polynomial dependency directly
	in some of the variables and not on their length. This happens in the
	next fundamental lemma where we consider linear ODE but derivation on a variable $x$ (and not along a function $\lengt$).
	%
	%
	
	We use the sup norm for the length of vectors and matrices. Hence, given some matrix $\tu
	A=(A_{i,j})_{1 \le i \le n, 1 \le j \le m}$, 
	we set $\length{\tu A}=\max_{i,j}
	\length{A_{i,j}}$.

	

	\begin{lemma}[Fundamental observation] \label{fundamencore}
	Consider the ODE 
	\begin{equation} \label{eq:bc}
	\tu f^\prime(x,\tu y)=  {\tu A} ( \tu f(x,\tu y), \tu h(x,\tu y),
	x,\tu y) \cdot
	  \tu f(x,\tu y)
	  +   {\tu B} ( \tu f(x,\tu y), \tu h(x,\tu y),
	  x,\tu y).
	\end{equation}
	Assume:
	\begin{enumerate}
	\item The initial condition $\tu G(\tu y) = ^{def}
	  \tu f(0, \tu y)$, as well as $\tu h(x,\tu y)$ are polynomial time computable in $x$ and in the length of $\tu y$. 
	  \item ${\tu A} ( \tu f(x,\tu y), \tu h (x,\tu y),
	  x,\tu y)$ and ${\tu B} ( \tu f(x,\tu y), \tu h(x,\tu y),
	  x,\tu y)$ are \polynomial{} expressions essentially constant in $\tu f(x,\tu y)$.
	
	%
	\end{enumerate}
	
	Then, there exists a polynomial $p$ such that $\length{\tu f(x,\tu y)}\leq p(x,\length{\tu y})$ and $\tu f(x, \tu y)$ is polynomial time computable in $x$ and the length
	of $\tu y$. 
	\end{lemma}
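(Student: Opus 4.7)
The plan is to unroll the recurrence $\tu f(t+1,\tu y) = (\tu I + \tu A_t)\cdot \tu f(t,\tu y) + \tu B_t$, where $\tu A_t, \tu B_t$ denote $\tu A, \tu B$ evaluated at $(\tu f(t,\tu y), \tu h(t,\tu y), t, \tu y)$, and to iteratively compute $\tu f(0,\tu y), \tu f(1,\tu y),\ldots,\tu f(x,\tu y)$. Two things have to be controlled: the bit-length of the intermediate values, which must remain polynomial in $x$ and $\length{\tu y}$, and the per-step cost of evaluating $\tu A$ and $\tu B$.

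The main technical ingredient is a structural lemma on $\sign$-polynomial expressions: if $P(z,w_1,\ldots,w_k)$ is essentially constant in $z$, then there is a polynomial $\pi_P$ depending only on the expression $P$ such that $\length{P(z,w_1,\ldots,w_k)} \le \pi_P(\length{w_1},\ldots,\length{w_k})$, \emph{independent} of $\length{z}$. This is proved by induction on the expression tree. Constants and variables other than $z$ are immediate; for $\sign{Q}$ the value lies in $\{0,1\}$ regardless of $Q$, so $\length{\sign{Q}}\le 1$. For $P=Q+R$ or $P=Q\cdot R$, the degree definition forces both $Q$ and $R$ to have degree $0$ in $z$ (nonnegative degrees whose max, resp.\ sum, equals $0$), so the induction hypothesis applies to each factor and the bounds combine via $\length{Q+R}\le \max(\length{Q},\length{R})+1$ and $\length{Q\cdot R}\le \length{Q}+\length{R}$. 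Applied to $\tu A$ and $\tu B$ with $z=\tu f(t,\tu y)$, and using that $\tu h$ is polynomial-time computable so that $\length{\tu h(t,\tu y)}$ is polynomial in $\length{t},\length{\tu y}$, this yields a polynomial $q$ with $\length{\tu A_t},\length{\tu B_t}\le q(x,\length{\tu y})$ for all $0\le t<x$, uniformly in $\tu f(t,\tu y)$.

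A straightforward induction on $t$ using $\length{\tu f(t+1,\tu y)}\le \length{\tu I+\tu A_t}+\length{\tu f(t,\tu y)}+\length{\tu B_t}+\mO(1)$ then produces a polynomial $p$ with $\length{\tu f(t,\tu y)}\le p(x,\length{\tu y})$ for every $0\le t\le x$. For the time bound, each iteration reduces to evaluating the fixed $\sign$-polynomial expressions $\tu A$ and $\tu B$ on arguments $(\tu f(t,\tu y),\tu h(t,\tu y),t,\tu y)$, all of length polynomial in $x$ and $\length{\tu y}$; since $+,-,\times,\sign{}$ all run in polynomial time on such inputs, each step is polynomial time, and so is the whole $x$-step loop. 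The main obstacle I expect is the careful bookkeeping in the structural lemma, in particular making precise that in a product $Q\cdot R$ essentially constant in $z$ both factors must themselves be essentially constant in $z$, so that the inductive bound carries no residual $\length{z}$ dependence through the product; once that is pinned down, everything else is a routine induction.
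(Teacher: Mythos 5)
Your proof is correct, and it rests on the same two pillars as the paper's: iterative evaluation of the ODE, and the observation that essential constancy of $\tu A$ and $\tu B$ in $\tu f$ makes the bit-length of $\tu f$ grow additively, hence polynomially. The route differs in packaging. The paper does not unroll the recurrence directly; it invokes the closed-form solution from Remark~\ref{rq:fund} (following Lemma~\ref{def:solutionexplicitedeuxvariables}), i.e.\ the sum-of-products expression for $\tu f(x,\tu y)$, and evaluates that formula by dynamic programming, assuming inductively that the earlier values $\tu f(u,\tu y)$ are available to instantiate the coefficients. Your direct iteration of $\tu f(t+1,\tu y)=(\tu I+\tu A_t)\cdot \tu f(t,\tu y)+\tu B_t$ is equivalent, slightly more self-contained (no appeal to the solution formula), and cheaper in arithmetic operations. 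Where your write-up is genuinely stronger is the structural lemma: the paper simply asserts that, the coefficients being essentially constant in $\tu f(u,\tu y)$, their size does not depend on $\length{\tu f(u,\tu y)}$; your induction on the expression tree (sign collapses to $\{0,1\}$, and nonnegativity of degrees forces both arguments of a degree-$0$ sum or product to have degree $0$) supplies precisely the justification the paper leaves implicit, and your remark that the product case is the delicate one is on target. Two routine points to tidy up: the expressions also admit subtraction, which your induction should treat like addition, and the matrix-vector length bounds carry dimension-dependent constants, which your $\mO(1)$ absorbs.
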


\begin{proof}
We know by Remark~\ref{rq:fund} following Lemma \ref{def:solutionexplicitedeuxvariables} that we must have:

\[\tu f(x,\tu y)=\sum_{u=-1}^{x-1}  \left(
\prod_{t=u+1}^{x-1} (1+\tu A(\tu f(t,\tu y), \tu h(t, \tu y),  t,\tu y)) \right) \cdot  \tu B(\tu f(u,\tu y), \tu h(u, \tu y), u,\tu y)
.
\]
		
\noindent with the conventions that $\prod_{x}^{x-1} \tu \kappa(x) = 1$ and $\tu B( \cdot , -1,\tu y)=\tu G(\tu y)$. 		

This formula permits to evaluate $\tu f(x, \tu y)$ using a dynamic programming approach  in a number of arithmetic steps that is polynomial in $x$ and $\length{\tu y}$. Indeed: 
For any $-1 \le u \le x$, $\tu A(\tu f(u,\tu y),\tu h(u,\tu y), u,\tu y)$     and 
$\tu B( \tu f(u,\tu y), \tu h(u,\tu y), u,\tu y)$ are matrices whose coefficients are \polynomial{}. So assuming, by induction, that each $\tu f(u,\tu y)$, for $-1 \le u < x$, can be computed in a number of steps polynomial in $u$ and $\length{\tu y}$, so are the coefficients of  $\tu A( \tu f(u,\tu y), \tu h(u,\tu y), u, \tu y)$ and  $\tu B( \tu f(u,\tu y), \tu h(u,\tu y), u, \tu y)$ which involve finitely many arithmetic operations or sign operations from its inputs. Once this is done, computing $\tu f(x, \tu y)$  requires polynomially in $x$ many arithmetic operations: basically, once the values for $\tu A$ and $\tu B$ are known we have to sum up $x+1$ terms, each of them involving at most $x-1$ multiplications. 


We need now to prove that not only the arithmetic complexity is polynomial in $x$ and $\length{\tu y}$, but also the bit complexity. As the bit complexity of a sum, product, etc is polynomial in the size of its arguments, it is sufficient to show  that the growth rate of function $\tu f(x,\tu y)$ can be polynomially dominated. For this, recall that, for any $-1 \le u \le x$, coefficients of $\tu A(\tu f(u,\tu y),\tu h(u,\tu y), u,\tu y)$     and 
$\tu B( \tu f(u,\tu y), \tu h(u,\tu y), u,\tu y)$ are essentially constant in $\tu f(u,\tu y)$. Hence, the size of these coefficients do not depend on $\length{\tu f(u,\tu y)}$. Since, in addition, $\tu h$ is computable in polynomial time in $x$ and $\length{\tu y}$, there exists a polynomial $p_M$ such that:

\[
\max (\length{\tu A(\tu f(u,\tu y),\tu h(u,\tu y), u,\tu y)},\length{\tu B(\tu f(u,\tu y),\tu h(u,\tu y), u,\tu y)} \leq p_M(u, \length{\tu y}). 	
\]

It then holds that,

\[
	\length{\tu f(x+1, \tu y)}\leq p_M(x,\tu y) + \length{\tu f(x, \tu y)} +1 	
\]

It follows from an easy induction that we must have
$\length{\tu f(x, \tu y) } \le \length{G(\tu y)} + (x+1) \cdot p_{M}(x, \length{ \tu y})$ which gives the desired bound on the length of values for function $\tu f$. 
\end{proof}


The previous statements lead to the following:

\begin{lemma}[Intrinsic complexity of linear $\lengt$-ODE]~\label{lem:fundamentalobservationlinearlengthODE}
	Assume that $\tu f$ is the solution of \eqref{SPLode} and that functions $\tu u, \tu g, \tu h, \lengt$ and elements of $Jump_\lengt$ are computable in polynomial time. 
	Then, $\tu f$ 
	is computable in polynomial time.
\end{lemma}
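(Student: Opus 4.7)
The strategy is to use the change of variable from Lemma~\ref{fundobge} to reduce the linear $\lengt$-ODE to an ordinary linear discrete ODE in the new time variable $t$, and then to invoke the Fundamental Observation (Lemma~\ref{fundamencore}) to harvest polynomial-time computability.

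First, I would apply Lemma~\ref{fundobge} to the $\lengt$-IVP \eqref{SPLode}, writing $\tu f(x,\tu y)=\tu{\bar F}(J_\lengt(x,\tu y),\tu y)$, where $\tu{\bar F}$ is the solution of the ordinary discrete IVP
\begin{eqnarray*}
\tu{\bar F}(0,\tu y) &=& \tu g(\tu y),\\
\dderiv{\tu{\bar F}(t,\tu y)}{t} &=& \Delta\lengt(\alpha(t),\tu y)\cdot \tu u(\tu{\bar F}(t,\tu y),\tu h(\alpha(t),\tu y),\alpha(t),\tu y).
\end{eqnarray*}
Since $\tu u$ is essentially linear in its first argument, I can decompose $\tu u(\tu v,w,x,\tu y) = \tu A(\tu v,w,x,\tu y)\cdot \tu v + \tu B(\tu v,w,x,\tu y)$ with $\tu A,\tu B$ \polynomial{} and essentially constant in $\tu v$. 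Setting $\tilde{\tu A}(\tu v,t,\tu y) := \Delta\lengt(\alpha(t),\tu y)\cdot \tu A(\tu v,\tu h(\alpha(t),\tu y),\alpha(t),\tu y)$ and $\tilde{\tu B}$ analogously, the equation for $\tu{\bar F}$ becomes
$$\dderiv{\tu{\bar F}(t,\tu y)}{t} = \tilde{\tu A}(\tu{\bar F}(t,\tu y),t,\tu y)\cdot \tu{\bar F}(t,\tu y) + \tilde{\tu B}(\tu{\bar F}(t,\tu y),t,\tu y).$$

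Second, I would verify that this new system meets the hypotheses of Lemma~\ref{fundamencore} with the role of $x$ now played by $t$. The coefficients $\tilde{\tu A}$ and $\tilde{\tu B}$ are \polynomial{} expressions, essentially constant in $\tu{\bar F}$ (the multiplicative scalar $\Delta\lengt(\alpha(t),\tu y)$ is independent of $\tu{\bar F}$, so it contributes nothing to the degree in $\tu{\bar F}$). Their evaluation involves only arithmetic and sign tests on $\alpha(t)$, $\lengt(\alpha(t),\tu y)$, $\lengt(\alpha(t)+1,\tu y)$, and $\tu h(\alpha(t),\tu y)$, each of which is polynomial-time computable in $\length{t}$ and $\length{\tu y}$ by the stated hypotheses on $\lengt$, $\tu h$, and the elements of $Jump_\lengt$. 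Lemma~\ref{fundamencore} then yields a polynomial $p$ with $\length{\tu{\bar F}(t,\tu y)} \le p(t,\length{\tu y})$ and a procedure computing $\tu{\bar F}(t,\tu y)$ in time polynomial in $t$ and $\length{\tu y}$.

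Finally, I would evaluate at $t=J_\lengt(x,\tu y)$: since this quantity is polynomial-time computable and polynomially bounded in $\length{x},\length{\tu y}$ (which for the canonical length-ODE case amounts to $\length{x}-1$), substituting gives polynomial-time computability of $\tu f(x,\tu y)=\tu{\bar F}(J_\lengt(x,\tu y),\tu y)$ in $\length{x}$ and $\length{\tu y}$, as required. The main obstacle I anticipate is the bookkeeping of the change of variable: namely, checking that substituting the externally computed quantity $\alpha(t)$ in place of $x$ preserves both the \polynomial{} form and the essential constance in $\tu{\bar F}$, so that the hypotheses of Lemma~\ref{fundamencore} apply cleanly. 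Once this is verified, everything else is a straightforward composition of polynomial-time computations.
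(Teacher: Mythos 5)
Your proof is correct and follows essentially the same route as the paper's own argument: apply Lemma~\ref{fundobge} to change variables to $t=J_\lengt(x,\tu y)$, use essential linearity to extract matrices $\tu A,\tu B$ essentially constant in the solution, absorb the factor $\Delta\lengt(\alpha(t),\tu y)$ into rescaled coefficients, and invoke Lemma~\ref{fundamencore} before composing with the polynomial-time computable, polynomially bounded $J_\lengt$. The only difference is notational (your $\tilde{\tu A},\tilde{\tu B}$ are the paper's $\overline{\tu A},\overline{\tu B}$), and your explicit remark that the scalar factor does not raise the degree in $\tu{\bar F}$ is a point the paper leaves implicit.
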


\begin{proof} Let $\tu f$ be a solution of the linear $\lengt$-ODE \eqref{SPLode}

	%
	\noindent where $\tu u$ is \textit{essentially linear} in $\tu f(x, \tu y)$. From Lemma~\ref{fundobge}, $\tu f(x,\tu y)$ can also be given by $\tu f(x,\tu y)=  \tu {\bar F}(J_{\lengt}(x,\tu y),\tu y)$ where $\tu {\bar F}$ is the solution of initial value problem~\footnote{In the statement of Lemma~\ref{fundobge}, functions $\tu h$ can be considered as part of $\tu u$. They are presented seprately here since $\tu u$ will be considered as linear and $\tu h$ plays the role of auxiliary functions that may have been computed before}
	\begin{eqnarray*}
	\tu {\bar F}(0,\tu y)&=& \tu g(\tu y), \\
	\dderiv{\tu {\bar F}(t,\tu y)}{t} &=&  {\Delta\lengt(\alpha(t),\tu y)}
								   \cdot
	\tu u(\tu { \bar F}(t, \tu y),\tu h(\alpha(t),\tu y), \alpha(t),\tu y).
	\end{eqnarray*}

	Functions $\tu u$ are \polynomial{} expressions that are essentially linear in $\tu f(x,\tu y)$. So there exist matrices $\tu A$, $\tu B$ that are essentially constants in $\tu f(t,\tu y)$ such that 
	
	\[
		\dderivL{\tu f(x,\tu y)} = {\tu A} ( \tu f(x,\tu y), \tu h(x,\tu y),
		x,\tu y) \cdot
		  \tu f(x,\tu y)
		  +   {\tu B} ( \tu f(x,\tu y), \tu h(x,\tu y),
		  x,\tu y).		
	\]

	\noindent In other words, it holds
	
	$$
	\tu F’(t,\tu y)= \overline {\tu A} ( \tu F(t,\tu y),
	t,\tu y) \cdot
	\tu F(t,\tu y)
	+   \overline {\tu B} ( \tu F(t,\tu y),
	t,\tu y).
	$$
	by setting 
	\begin{eqnarray*}
		\overline {\tu A} ( \tu F(t,\tu y),
		t,\tu y) &=& \Delta\lengt(\alpha(t),\tu y)\cdot {\tu A} (\tu { \bar F}(t, \tu y),\tu h(\alpha(t),\tu y), \alpha(t),\tu y) \\
		\overline {\tu B} ( \tu F(t,\tu y),
		t,\tu y) &=& \Delta\lengt(\alpha(t),\tu y)\cdot {\tu B} (\tu { \bar F}(t, \tu y),\tu h(\alpha(t),\tu y), \alpha(t),\tu y)	
	\end{eqnarray*}
	
	The
	corresponding matrice $\overline {\tu A}$ and vector $\overline {\tu
		B}$ are essentially constant in $\tu F(t, \tu y)$. Also, 
	functions $\tu g, \tu h$ are computable in polynomial time, more precisely 
	polynomial in  $\length{x}$, hence in $t$, and $\length{y}$. Function $Jump_\lengt$ is polynomial time computable in $\length{x}$ and $\length{y}$. So given $t$, obtaining $\alpha(t)$ is immediate. This guarantees that all
	hypotheses of Lemma \ref{fundamencore} are true. We can then conclude remarking, again, that $t=\lengt(x)$. 	
\end{proof}

We are now ready to state the main result of this subsection as a corollary of the previous results.

\begin{corollary}\label{cor: linear length ode are in ptime}
	Assume that $\tu f$ is the solution of a linear length-ODE from polynomial time computable functions $\tu u, \tu g$ and $\tu h$.
	Then, $\tu f$ 
	is computable in polynomial time.	
\end{corollary}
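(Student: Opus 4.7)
The plan is that this corollary is obtained as a direct instantiation of Lemma~\ref{lem:fundamentalobservationlinearlengthODE} with $\lengt(x,\tu y)=\length{x}$. The hypotheses of the corollary already give us polynomial-time computability of $\tu u$, $\tu g$, and $\tu h$, so what remains is to verify the two extra hypotheses of the lemma that are specific to the choice of $\lengt$: polynomial-time computability of $\lengt$ itself, and polynomial-time computability of the elements of $Jump_\lengt$.

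The first check is immediate: $x \mapsto \length{x}$ is polynomial-time computable in $\length{x}$ by definition of the binary length function. For the second, I would describe $Jump_\lengt$ explicitly. The value of $\length{x}$ jumps between $x$ and $x+1$ exactly when $x+1$ is a power of~$2$, so
\[
Jump_\lengt(x,\tu y)=\{2^k-1 : 1 \le k \le \length{x}-1\},
\]
and consequently $J_\lengt(x,\tu y)=\length{x}-1$ with enumeration $\alpha(t)=2^{t+1}-1$ — this is exactly the relation already recorded in Lemma~\ref{fundob}. Since $\alpha(t)\le x$ whenever $t\le\length{x}-1$, the value $\alpha(t)$ has bit-length at most $\length{x}$ and is clearly computable in time polynomial in $\length{x}$ (a single shift suffices).

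Both extra hypotheses being verified, Lemma~\ref{lem:fundamentalobservationlinearlengthODE} applies and immediately yields that $\tu f$ is polynomial-time computable. I do not anticipate any real obstacle here; the only small subtlety worth highlighting is that ``polynomial time'' for the auxiliary functions must be read as polynomial in the length of the arguments $(x,\tu y)$ of $\tu f$, and not only in $\length{t}$, which is consistent with how Lemma~\ref{fundamencore} is applied inside the proof of Lemma~\ref{lem:fundamentalobservationlinearlengthODE}. Since $\alpha(t)$ is bounded by $x$ and $t$ ranges up to $\length{x}-1$, this interpretation is harmless and the corollary follows.
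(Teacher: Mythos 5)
Your proof is correct and takes essentially the same route the paper intends: the corollary is stated there as a direct instance of Lemma~\ref{lem:fundamentalobservationlinearlengthODE} with $\lengt(x,\tu y)=\length{x}$, and your verification of the two extra hypotheses (polynomial-time computability of $\length{\cdot}$ and of the jump points $\alpha(t)=2^{t+1}-1$, with $J_\lengt=\length{x}-1$ as in Lemma~\ref{fundob}) is exactly what the paper leaves implicit. Your closing remark about measuring polynomial time in $\length{x}$ and $\length{\tu y}$ also matches the bookkeeping carried out inside the paper's proof of that lemma.
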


\shorter{
\begin{lemma}[Fundamental Observation for linear $\lengt$-ODE]~\label{lem:fundamentalobservationlinearlengthODE}
	Assume that $\tu f(x,\tu y)$ is solution of \eqref{SPLode}. 
        Then $\tu f(x, \tu y)$ 
	can be computed in polynomial time 
	under the following conditions:
	\begin{enumerate}
		\item $\tu f(0, \tu y)= \tu g(\tu
                  y)$ is computable in polynomial-time. 
		\item\label{fund obs cond 2}  function $\tu h$  is computable in polynomial
		time. 
		\item \label{fund obs cond 3} Functions $\lengt$ and $Jump_\lengt$ are computable in polynomial time. 
\end{enumerate}
\end{lemma}
}

 \section{A characterization of polynomial time}
\label{sec:A characterization of polynomial time}

The objective of this section is to provide a characterization by ODE schemes of the function computable in polynomial time.
Before proving this result, we briefly describe the computation model that will be used in proofs.

\shorter{
\begin{lemma}[Fundamental Observation for linear $\lengt$-ODE]~\label{lem:fundamentalobservationlinearlengthODE}
	Assume that $\tu f(x,\tu y)$ is solution of \eqref{SPLode}. 
        Then $\tu f(x, \tu y)$ 
	can be computed in polynomial time 
	under the following conditions:
	\begin{enumerate}
		\item $\tu f(0, \tu y)= \tu g(\tu
                  y)$ is computable in polynomial-time. 
		\item\label{fund obs cond 2}  function $\tu h$  is computable in polynomial
		time. 
		\item \label{fund obs cond 3} Functions $\lengt$ and $Jump_\lengt$ are computable in polynomial time. 
\end{enumerate}
\end{lemma}
}

\subsection{Register machines}

%

A register machine program (a.k.a. \textsf{goto} program) is a finite sequence of ordered labeled instructions acting on a finite set of registers of one of the following type:
 \begin{itemize}
 	\item
          increment the $j$th register $R_j$ by the value of $i$th
	register $R_i$ and go the next instruction:
	$
	R_j:=R_j+R_i$;
	 \item 
          decrement the $j$th register $R_j$ by the value of $i$th
	register $R_i$ and go the next instruction: $
	R_j:=R_j-R_i$;
	 \item
          set  the $j$th register $R_j$ to integer $a$, for $a \in
	\{0,1\}$ and go the next instruction: $
	R_j:=a$;
	 \item 
          if register $j$ is greater or equal to $0$, go to instruction $p$ else
          go to next instruction: 
	$
	\mathsf{if}~ R_j \ge  0 \ \textsf{goto} \ p$;
	 \item
          halt the program:
	$
	\mathsf{halt}
	$
\end{itemize}

In the following, since coping with negative numbers on classical
models of computation can be done through simple encodings, we will
restrict ourselves to non-negative numbers. 

\begin{definition} Let  $t:\N \rightarrow \N$.
	A function $f:\N^p\rightarrow \Z$ is computable in time  $t$
	by a register machine $M$ with $k$ registers if:
	 \begin{itemize}
	 	\item 
                  when starting in initial configuration with registers $R_1,\dots, R_{\min(p,k)}$ set to $x_1,\dots,x_{\min(p,k)}$ and  all other registers to $0$ and
		\item starting on the first instruction (of label $0$);
	 \end{itemize}
	%
        machine $M$ ends its computation after at most
	$t(\length{\tu x})$ instructions where $\length{\tu x}=\length{x_1}+\cdots+\length{x_p}$ and with register $R_0$ containing $f(x_1,\dots,x_p)$.
	A function is computable in polynomial time by $M$ if there exists
	$c\in \N$ such that $t(\length{\tu x})\leq \length{\tu x}^c$ for all
	$\tu x=(x_1,...,x_p)$.
\end{definition}

The definition of register machines might look rudimentary however, 
the following is easy (but tedious) to prove for any reasonable encoding of integers by Turing machines.

\begin{theorem}
	A function $f$ from $\N^p\rightarrow \Z$ is computable in polynomial time on Turing machines iff it is computable in polynomial time on register machines.
\end{theorem}

\subsection{A characterization of polynomial time}

The results  of Section~\ref{subsec: linear length ODE}    show that functions defined by linear length-ODE
from functions computable in polynomial time, are indeed polynomial time. We are now ready to prove a kind of reciprocal result. For this, we will introduce a recursion scheme based on solving linear differential equations.

%
%
%
%


\begin{remark}
Since the functions we define take their values in $\N$ and have output
in $\Z$, composition is an issue. Instead of considering restrictions
of these functions with output in $\N$ (which is always possible, even
by syntactically expressible constraints), we simply admit that
composition may not be defined in some cases. In other words, we consider that composition is a partial operator.
\end{remark}



\begin{definition}[Linear Derivation on Length~\footnote{In the conference version of this paper (\cite{DBLP:conf/mfcs/BournezD19}), the class was called $\derivlength$. It seems more appropriate to emphasize on the linearity also in the name of the class}, $\linearderivlength$.]
Let $$\linearderivlength = [\mathbf{0},\mathbf{1},\projection{i}{p}, \length{x}, \plus, \minus, \times, \sign{x} \ ; composition, linear~length~ODE].$$
	That is to say,  $\linearderivlength$ is the smallest subset of  functions,
	 that contains   $\mathbf{0}$, $\mathbf{1}$, projections
         $\projection{i}{p}$,  the length function  $\length{x}$,
         the addition function $x \plus y$, the subtraction function $x \minus y$, the multiplication function $x\times y$ (often denoted $x\cdot y$), the sign function $\sign{x}$
	and closed under composition (when defined)  and linear length-ODE
        scheme. \end{definition}


\begin{remark}
	As our proofs show, the definition of $\linearderivlength$ would remain the same by considering closure under any kind of $\lengt$-ODE with $\lengt$ satisfying the hypothesis of Lemma~\ref{lem:fundamentalobservationlinearlengthODE}.  
\end{remark}



\begin{example}
A number of natural functions are in $\linearderivlength$. \shorter{The following result is immediate by inspection of the example from Section~\ref{sec:programming with ODE} and~\ref{sec:restrict}.}
%
Functions $2^{\length{x}}$, $2^{\length{x}\cdot \length{y}}$, $\cond{x}{y}{z}$, $\suffix(x,y)$, 
	$\lfloor \sqrt{x}\rfloor $, $\lfloor \frac{x}{y}\rfloor$, $2^{\lfloor \sqrt{x}\rfloor}$
	all belong to $\linearderivlength$ by some linear length-ODE left as an exercice.
\end{example}

We can now state the main complexity result of this paper.



\begin{theorem}\label{th:ptime characterization 2}
	$\linearderivlength= \FPtime$
      \end{theorem}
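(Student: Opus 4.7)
The plan splits the theorem into two inclusions. For $\linearderivlength \subseteq \FPtime$, I would proceed by a straightforward structural induction on the definition of $\linearderivlength$: the basic functions $\zero,\un,\projection{i}{p},\length{\cdot},\plus,\minus,\times,\sign{\cdot}$ are all polynomial-time computable, $\FPtime$ is closed under composition, and closure under the linear length-ODE scheme is precisely the content of Corollary~\ref{cor: linear length ode are in ptime}, applied with $\lengt(x,\tu y)=\length{x}$, for which both $\lengt$ and $\mathit{Jump}_\lengt$ are obviously polynomial-time computable.

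For $\FPtime \subseteq \linearderivlength$, I would simulate a polynomial-time register machine by a single linear length-ODE. Fix $f\in\FPtime$ computed by a register machine $M$ with $k$ registers running in time $q(\length{\tu x})$ for some polynomial $q$. First I would verify that the auxiliary ingredients lie in $\linearderivlength$: polynomials in $\length{\tu x}$, the padded exponential $P(\tu x)=2^{q(\length{\tu x})}$ (built by iterated products of quantities such as $2^{\length{\tu x}}$ and $2^{\length{\tu x}\cdot\length{\tu x}}$ using the linear length-ODE tricks already exhibited), and the conditional $\fonction{if}(\cdot,\cdot,\cdot)$. Next I would encode the configuration of $M$ at time $t$ by the vector $\tu C(t,\tu x)=(R_0,R_1,\ldots,R_{k-1},I)$, where the $R_j$ are register contents and $I$ is the current instruction label; the initial vector $\tu C_{\mathrm{init}}(\tu x)$ is assembled from projections and the constant $\zero$. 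For each of the finitely many instruction labels $\ell$, the guard $\chi_\ell(I)=\signcomp{I-\ell}$ is a $\polynomial$ expression essentially constant in $\tu C$ that equals $1$ iff $I=\ell$. The one-step transition of $M$ then takes the form $\tu C(t+1,\tu x)=\tu A(\tu C(t,\tu x),\tu x)\cdot \tu C(t,\tu x)+\tu B(\tu C(t,\tu x),\tu x)$, where $\tu A$ and $\tu B$ are $\chi_\ell(I)$-weighted sums over instructions: $R_j := R_j \pm R_i$ contributes $\pm\chi_\ell(I)$ to the $(j,i)$-entry of $\tu A$; $R_j := a$ contributes $\chi_\ell(I)\cdot a$ to $\tu B$ (and zeros out the $j$-row of $\tu A$); $\mathsf{if}~R_j\ge 0~\mathsf{goto}~p$ updates the $I$-row using $\chi_\ell(I)\cdot\sign{R_j}$ as coefficient; $\mathsf{halt}$ acts as a self-loop on $I$. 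Because every register value appears either in degree-one position outside any $\sign{\cdot}$ or strictly inside a $\sign{\cdot}$, both $\tu A$ and $\tu B$ are essentially constant in $\tu C$. Setting $\tu C(0,\tu x)=\tu C_{\mathrm{init}}(\tu x)$ and $\dderivl{\tu C(t,\tu x)}=(\tu A-\mathrm{Id})\cdot\tu C(t,\tu x)+\tu B$ yields a linear length-ODE in the sense of Definition~\ref{def:linear lengt ODE}. Evaluating its solution at $t=P(\tu x)$ produces $q(\length{\tu x})+1$ simulated transitions of $M$ by Lemma~\ref{fundob}, which suffice to complete the computation of $M$ on $\tu x$; projecting onto the $R_0$ coordinate recovers $f(\tu x)$.

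The main obstacle is maintaining the essentially-linear shape of the transition while faithfully encoding every instruction of $M$: one must systematically rewrite each primitive instruction as an $(\tu A_\ell,\tu B_\ell)$ pair respecting the degree constraints (register values only in degree one or inside $\sign{\cdot}$), and then combine them by a $\chi_\ell(I)$-weighted sum that keeps the overall matrix $\polynomial$ and essentially constant in $\tu C$. Once this syntactic discipline is in place, correctness follows from Lemma~\ref{fundob}, the polynomial-time bound from Corollary~\ref{cor: linear length ode are in ptime}, and every ingredient (the padding $P$, the guards $\chi_\ell$, the matrices $\tu A,\tu B$, and the final projection) belongs to $\linearderivlength$ by construction, so $f\in\linearderivlength$.
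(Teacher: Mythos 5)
Your proposal is correct and follows essentially the same route as the paper: the easy inclusion via Corollary~\ref{cor: linear length ode are in ptime} plus closure of $\FPtime$ under composition, and the converse by simulating a polynomial-time register machine with sign-guarded, essentially linear transition expressions for the registers and instruction counter, clocked by an iterated exponential-of-length padding function and finished by projection onto the output register. Your matrix formulation $(\tu A-\mathrm{Id})\cdot\tu C+\tu B$ with guards $\signcomp{I-\ell}$ and the bound $P(\tu x)=2^{q(\length{\tu x})}$ are only cosmetic repackagings of the paper's per-component case sums and its $B^{(c)}(\length{\tu x})$ bound.
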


\begin{proof}
	The inclusion $\linearderivlength \subseteq \FPtime$ is a consequence of  Corollary~\ref{cor: linear length ode are in ptime} and on the fact that arithmetic operations that are allowed can be computed in polynomial time and that $\FPtime$ is closed under composition of functions.
	
	We now prove that  $\FPtime\subseteq \linearderivlength$.
	Let $f:\N^p\longrightarrow \N$ be computable in polynomial time and $M$ a $k$ registers machine that compute $f$ in time $\length{\tu x}^c$ for some $c\in \N$.
	We first describe the computation of $M$ by simultaneous recursion scheme on length for functions  $R_0(t,\tu x), ..., R_k(t,\tu x)$  and $\inst(t,\tu x)$ that give, respectively, the values of each register and the label of the current instruction at time $\length{t}$.

	We start with an informal description of the characterization. Initializations of the functions are given by:
	$R_0(0,\tu x)=0, R_1(0,\tu x)=x_1$, \dots, $R_p(0,\tu x)=x_p$, $R_{p+1}(0,\tu x)=\cdots = R_k(0,\tu x)=0$ et $\inst(0,\tu x)=0$.
	Let $m\in \N$ be the number of instructions of $M$ and let $l\leq m$. Recall that, for a function $f$, $\dderiv{f}{L}(t,\tu x)$ represents a manner to describe $f(t+1,\tu x)$ from $f(t,\tu x)$ when $L(t+1)=L(t)+1$.  We denote by, $\nextI_l^{I}$, $\nextI_l^{h}$, $h\leq k$, the evolution of the  instruction function and of register $R_h$ after applying instruction $l$ at any such instant $t$. They are defined as follows:
	
	\begin{itemize}
		\item If instruction of label $l$ if of the type $R_j:=R_j+R_i$, then:
		
		\begin{itemize}
			\item $\nextI_l^{I}=1$ since $\inst(t+1,\tu x)=\inst(t,\tu x)+1$
			\item $\nextI_l^{j}=R_i(t,\tu x)$ since $R_j(t+1,\tu x)=R_j(t,\tu x)+R_i(t,\tu x) $
			\item $\nextI_l^{h}=0$ since $R_h(t,\tu x)$ does not change for  $h\neq j$.
		\end{itemize}
		
		\item If instruction of label $l$ if of the type $R_j:=R_j-R_i$, then:
		
		\begin{itemize}
			\item $\nextI_l^{I}=1$ since $\inst(t+1,\tu x)=\inst(t,\tu x)+1$
			\item $\nextI_l^{j}=-R_i(t,\tu x)$ since $R_j(t+1,\tu x)=R_j(t,\tu x) -R_i(t,\tu x)  $
			\item $\nextI_l^{h}=0$ since $R_h(t,\tu x)$ does not change for  $h\neq j$.
		\end{itemize}

		\item If instruction of label $l$ if of the type $R_j:=a$, for
		$a \in \{0,1\}$ then:
		
		\begin{itemize}
			\item $\nextI_l^{I}=1$ since $\inst(t+1,\tu x)=\inst(t,\tu x)+1$
			\item $\nextI_l^{j}=a-R_j(t,\tu x)$ since $R_j(t+1,\tu x)=a  $
			\item $\nextI_l^{h}=0$ since $R_h(t,\tu x)$ does not change for  $h\neq j$.
		\end{itemize}

		\item If instruction of label $l$ if of the type $\mathsf{if}$  $R_j \ge 0 \ \textsf{goto} \ p$, then:
		
		\begin{itemize}
			\item $\nextI_l^{I}=\cond{R_j(t,\tu x) \ge 0}{p-\inst(t,\tu x)}{1}$ since, in case $R_j(t,\tu x)\ge 0$ instruction number goes from $\inst(t,\tu x)$ to $p$.
			\item $\nextI_l^{h}=0$.
		\end{itemize}

		\item If instruction of label $l$ if of the type $\mathbf{Halt}$, then:
		
		\begin{itemize}
			\item $\nextI_l^{I}=0$ since the machine stays in the same instruction when halting
			\item $\nextI_l^{h}=0$.
		\end{itemize}
	\end{itemize}

	The definition of function $\inst$ by derivation on length is now given by (we use a more readable "by case" presentation):

	\[
	\dderivl{\inst}(t,\tu x)= \case
	\left\{\begin{array}{l}
	\inst(t,\tu x)=1 \quad \nextI_1^I \\
	\inst(t,\tu x)=2 \quad \nextI_2^I\\
	\vdots \\
	\inst(t,\tu x)=m \quad \nextI_m^I.\\
	\end{array}
	\right.
	\]
	
	Expanded as an arithmetic expression, this give: 
	
	\[
	\dderivl{\inst}
	(t,\tu x)= \sum_{l=0}^m \big(\prod_{i=0}^{l-1} \sign{\inst(t,\tu x)- i}\big)\cdot \signcomp{\inst(t,\tu x) - l}\cdot \nextI_l^I.
	\]
	
	Note that each $\nextI_l^I$ is an expression in terms of $\inst(t,\tu x)$ and, in some cases, in $\sign{R_j(t,\tu x)}$, too (for a conditional statement).
	Similarly, for each $j\leq k$:
	
	\[
	\dderivl{R_j}
	(t,\tu x)= \sum_{l=0}^m \big(\prod_{i=0}^{l-1} \sign{\inst(t,\tu x)- i}\big)\cdot \signcomp{\inst(t,\tu x)- l}\cdot \nextI_l^j.
	\]

	It is easily seen that, in each of these expressions above, there is
	at most one occurrence of $\inst(t,\tu x)$ and $R_j(t,\tu x)$ that is
	not under the scope of an essentially constant function (i.e. the sign
	function). Hence, the expressions are of the prescribed form i.e. linear.
	
	
	We know  $M$ works in time $\length{\tu x}^c$ for some fixed
	$c\in\N$. Both functions $\length{\tu x}=\length{x_1}+ \ldots
+	\length{x_p}$ and $B(\tu x)=2^{\length{\tu x}\cdot \length{\tu x}}$ are in $\linearderivlength$. It is easily seen that : $\length{\tu x}^c\leq B^{(c)}(\length{\tu x}))$ where $B^{(c)}$ is the $c$-fold composition of function $B$. We can conclude by setting $f(\tu x)=R_0(B^{(c)}(\max(\tu x)),\tu x)$.  
\end{proof}

\shorter{The following normal form theorem can also be obtained }
\shorter{
It is clear by inspecting the proof of Theorem~\ref{th:ptime characterization 2} that polynomial time computations can be characterized by first setting up a differential system using only the basic operation to describe the content of registers over time, then built the polynomial bound (as a function from the input) and finally substitute this bound as maximal value of the time variable. We can formalize this situation below to obtain a normal form of the characterization.
}

%
%


\begin{definition}[Normal linear $\lengt$-ODE (N$\lengt$-ODE)]~\label{def:system of SLL
    ODE} Function $\tu f 
  $ is definable by a normal linear  $\lengt$-ODE if it corresponds to the
  solution of $\lengt$-IVP
\begin{eqnarray*}\label{SPLodeb}
\tu f(0,\tu y)&=&\tu v(\tu y) \\
\dderivL{\tu f(x,\tu y)}&=&   \tu u(\tu f(x,\tu y), x,\tu y), 
\end{eqnarray*}
\noindent where $\tu u$ is \textit{essentially linear} in $\tu f(x,
\tu y)$ and $\tu v$ is either the identity, a projection or a constant function. 

\end{definition}

From the proof of Theorem~\ref{th:ptime characterization 2} the result
below can be easily obtained. It expresses that composition needs to
be used only once as exemplified in the above definition.

\begin{theorem}[Alternative characterization of $\FPtime$] \label{th:pspace}
  A function $\tu f:\N^p\to \Z$ is in $\FPtime$ iff
  $\tu f(\tu y) = \tu g(\length{\tu y}^c,\tu y)$ for some integer $c$ and
  some $\tu g: \N^{p+1}\to \Z^{k}$ solution of a normal linear
  length-ODE  
$$
 	\dderiv{\tu g(x,\tu y)}{\length{x}} =   \tu u(\tu g(x,\tu y), x,\tu y).
  $$ 

      \end{theorem}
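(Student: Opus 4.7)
The plan is to prove both inclusions of the equivalence. The reverse direction is a short consequence of Corollary~\ref{cor: linear length ode are in ptime} and closure of $\FPtime$ under composition: if $\tu f(\tu y) = \tu g(\length{\tu y}^c, \tu y)$ with $\tu g$ a normal linear length-ODE, then $\tu g$ is in $\FPtime$ by the corollary, the outer map $\tu y \mapsto \length{\tu y}^c$ is obviously in $\FPtime$ (it is a constant-length product of the basic length function), and composing gives $\tu f \in \FPtime$.

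For the forward direction, let $\tu f \in \FPtime$ be computed by a $k$-register machine $M$ running in time at most $\length{\tu y}^c$ for some $c$. I would reuse the construction from the proof of Theorem~\ref{th:ptime characterization 2} verbatim: define simultaneously the functions $R_0(t,\tu y), \ldots, R_k(t,\tu y), \inst(t,\tu y)$ by a linear length-ODE, so that at ``length-clock'' $\length{t}$ they give the register contents and instruction pointer of the simulation. The update expressions are the case functions $\dderivl{R_j}(t,\tu y) = \sum_l \bigl(\prod_{i<l} \sign{\inst-i}\bigr)\cdot\signcomp{\inst-l}\cdot\nextI_l^j$, and similarly for $\inst$, where each $\nextI_l^\star$ is either $0$, $\pm 1$, $\pm R_i$, or a single $\sign$-guarded constant.

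Two observations are crucial to match Definition~\ref{def:system of SLL ODE}. First, the initial conditions already have the required shape: unused registers and $\inst$ start at $0$, while the input registers $R_1, \ldots, R_p$ start as the projections $\projection{i}{p}(\tu y)$, so the ``$\tu v$'' of the definition is either a constant or a projection coordinate-wise. Second, since each $\nextI_l^\star$ is either essentially constant or a $\pm$-copy of one state variable, the whole right-hand side is essentially linear in the bundled state $\tu g = (R_0, \ldots, R_k, \inst)$, in the sense of Definition~\ref{def:linear lengt ODE}. Hence $\tu g$ is exactly a normal linear length-ODE.

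It remains to pick the single outer composition. In the original proof of Theorem~\ref{th:ptime characterization 2}, the answer is read from $R_0(B^{(c)}(\max(\tu y)), \tu y)$ where $B^{(c)}$ is $c$-fold composition of $B(x)=2^{\length{x}^2}$; my plan is to absorb that $c$-fold iteration into the single outer function $\length{\tu y}^c$, enlarging the exponent $c$ in the theorem statement if necessary so that the length-ODE clock at argument $\length{\tu y}^c$ has advanced past the halting configuration of $M$. The answer is then read off by projection to the first coordinate of $\tu g(\length{\tu y}^c, \tu y)$. The main obstacle here is exactly this final calibration between the exponent $c$ used in the outer bound and the number of length-ODE updates it realizes relative to the machine's time exponent; once this calibration is in place, the normal form of Definition~\ref{def:system of SLL ODE} is preserved and a single composition suffices.
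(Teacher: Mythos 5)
Your reverse inclusion is correct and matches the paper (Corollary~\ref{cor: linear length ode are in ptime} plus closure of $\FPtime$ under composition), and the first part of your forward direction also follows the paper's route: the register-machine simulation from the proof of Theorem~\ref{th:ptime characterization 2} has initial values that are projections or constants and essentially linear updates, so it is indeed a normal linear length-ODE in the sense of Definition~\ref{def:system of SLL ODE}, and composition is needed only once, at the very end. The genuine gap is your final calibration step, and it is \emph{not} fixable by ``enlarging the exponent $c$''. For a length-ODE the solution advances only when the length of its first argument increases: by Lemma~\ref{fundob}, $\tu g(x,\tu y)=\tu F(\length{x},\tu y)$, where $\tu F$ makes one update per unit increase of $\length{x}$. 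Consequently the number of simulation steps realized by $\tu g(\length{\tu y}^c,\tu y)$ is $\length{\length{\tu y}^c}=\Theta(c\log\length{\tu y})$, while the machine needs up to $\length{\tu y}^{c_0}$ steps; since $c\log\length{\tu y}$ never dominates $\length{\tu y}^{c_0}$, no choice of $c$ makes the clock ``advance past the halting configuration'', and the simulation is never run to completion. (If your reading were provable, every $\FPtime$ function would be computable by $O(\log\length{\tu y})$ arithmetic and sign operations on integers of polynomial length, which your construction certainly does not establish and which is believed false.)

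The missing idea is how to bridge the two scales. In Theorem~\ref{th:ptime characterization 2} the answer is read at $R_0(B^{(c)}(\max(\tu x)),\tu x)$: the evaluation point $B^{(c)}(\max(\tu x))$ is a number whose \emph{length} is at least $\length{\tu x}^c$, and the iterated $B$ appears there only because, inside $\linearderivlength$, such a long number must itself be produced by composing length-ODE definable functions. In the present theorem the bound $\length{\tu y}^c$ is supplied from outside, so it must be read as the \emph{step count} of the recurrence, i.e.\ as the variable $t$ of the alternative view of Lemma~\ref{fundob}; equivalently, one evaluates the length-ODE solution at a point whose length is $\length{\tu y}^c$, such as $2^{\length{\tu y}^c}$, rather than at the number $\length{\tu y}^c$ itself. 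With that reading your simulation finishes the proof, and no $B^{(c)}$ is needed at all: the single outer composition is exactly $\tu y\mapsto\length{\tu y}^c$ fed into the step-counting form of $\tu g$. Your plan of ``absorbing $B^{(c)}$ into $\length{\tu y}^c$'' conflates the evaluation point with its length, and as stated it leaves the forward inclusion unproved.
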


\begin{remark}
  From similar arguments, $\FPtime$ also
  corresponds to the class defined as $\linearderivlength$ but where linear length-ODE
  scheme is replaced by normal linear length-ODE scheme: this forbids a function already defined with some
  ODE scheme to be used into some other ODE scheme.
\end{remark}


%
%
%
%
%

      \section{Discussions and further works}
      \label{sec:extension}
        
        Our aim in this article was to give the basis
      of a presentation of complexity theory based on discrete
      Ordinary Differential Equations and their basic properties. We
      demonstrated the particular role played by affine ordinary
      differential equations in complexity theory, as well as the
      concept of derivation along some particular function
      (i.e. change of variable) to guarantee a low complexity.

Previous ideas used here for $\FPtime$ opens the way to provide a characterization of other
complexity classes: This includes the possibility of characterizing non-deterministic polynomial time, using an existential quantification, or the 
class $\cp{P}_{[0,1]}$ of functions computable in polynomial time over the
reals in the sense of computable analysis, or more general classes of
classical discrete complexity theory   such as 
$\FPspace$.       For the clarity of the current
      exposition, as this would require to introduce other types of schemata of
      ODEs, we leave this characterization (and improvement of the
      corresponding schemata to ``the most natural and powerful form'') for future work, but we
      believe the current article basically provides the key  types of
      arguments 
      to conceive that this is indeed possible. 

More generally, it is also very instructive to revisit classical algorithmic under
this viewpoint, and for example one may realize that even inside class
$\Ptime$, the Master Theorem (see e.g. \cite[Theorem
4.1]{Cormen20093rd} for a formal statement) can be basically read as a result
on (the growth of) a particular class of discrete time length
ODEs. Several recursive algorithms can then be reexpressed as particular
discrete ODEs of specific type.

\shorter{ by considering random access machines (RAM) instead of register
      machines (see Appendix \ref{sec:ram} for definitions) with
      specific instructions sets.
      Depending on the set of basic
      operations allowed in the RAM model, polynomial time computation
      relates to very different complexity classes as witnessed by the
      following statements (see formal statement and proof of Theorem \ref{theorem:ram for
        pspace2} in appendix): 
%
	\begin{enumerate}
		\item A function $f:\N^k\to \Z$ is computable in polynomial time, i.e. is in $\FPtime$, iff it is computable in polynomial time on a $\{+,-\}$-RAM with unit cost.
		\item A function $f:\N^k\to \Z$ is computable in $\FPspace$ iff it is computable in polynomial time on a $\{+,-, \times, \div\}$-RAM with unit cost.
	\end{enumerate}

Second item follows from the following arguments:  It as been
proved in \cite{Galota:2005bo}, that a function $f$ is in $\FPspace$
iff it is the difference of two functions $f_1,f_2:\N\to \N$ in
$\cPspace$, the class of functions that counts the number of accepting
computations of a non deterministic polynomial space Turing machine.
It follows from the result from \cite{Bertoni:1981fs}, that a
function is computable in polynomial time on a
$\{+,\moins, \times, \div\}$-RAM if and only if it belongs to
$\cPspace$. 

Using random access machines (RAM) instead of register
      machines,  $\FPspace$ can be shown to correspond to functions of
      type $f(\tu y) = g_1(h(\tu y), \tu y)$ where $\tu g$ is defined
      as a specific class of polynomial length-ODE with substitutions,
      and conversely. 
}

\shorter{
\section{A characterization of polynomial space}
\label{sec:A characterization of polynomial space}

\olivier{Des modifs}

\olivier{Note cryptique pour moi: on veut couvrir le cas: 
  \begin{eqnarray*} 
   \frac{ \partial \Search}{\partial \length{t'}} (t',t,\tu x) &=& \tu F(
          \Search (t',t,\tu x), \tu {State} (t', \tu x), A(t,\tu x),
                                                                   t',
                                                                   \tu
                                                                   x)
    \\
 \frac{ \partial \tu{State} }{\partial \length{t}} (t,\tu x) &=& \tu G(
          \tu{State} (t, \tu x),  \Search (t, t, \tu x),
                                                                 t, \tu x)
    \\
    \end{eqnarray*}
}

\begin{definition}[System of Polynomial Length-ODE with variable substitution]
  Functions $\tu f = f_1,...,f_p$ with, for each $i\leq p$,
    $f_i:\N^{k_i+1}\rightarrow \Z$ 
  are definable by a System of
        Polynomial $\lengthnotation$-ODE  with variable substitution
        ($\spl$-ODE) if they are  solutions of a system of equations
        of the following form, for each $i\leq p$
%
        \olivier{Je remplace:
	\begin{equation}\label{SPLodeb}
	\dderiv{f_i(x,\tu y_i)}{\length{x}}=  P_i(f_{i_1}(x,\tu s_{i}^{i_1}), \dots, f_{i_q}(x,\tu s_{i}^{i_q}), x,\tu y)
      \end{equation}
      par:}
    	\begin{equation}\label{SPLodeb}
          \dderiv{f_i(x,\tu y_i, \tu y)}{\length{x}}=  P_i(f_i(x,\tu
          y_i,\tu y), 
          f_{i_1}(x,\tu s_{i}^{i_1},\tu y), \dots, f_{i_q}(x,\tu
          s_{i}^{i_q},\tu y), x,\tu y)
      \end{equation}
\olivier{Je remplace: 
  \noindent with $f_i(0, \tu y_i) = g_i(\tu y_i)$ and where,
  par:}
 \noindent with $f_i(0, \tu y_i,\tu y ) = g_i(\tu y_i, \tu y )$ and where,
	\begin{itemize}
        \item
           \olivier{Je remplace:
             $i_1,...,i_q\in \{1,...,p\}$
             par}
           $i_1,...,i_q\in \{1,...,i-1,i+1, \dots, p\}$
		\item  $g_i, P_i$ are \polynomial{} expressions
		\item for each $\alpha\in \{i_1,...,i_q\}$, $\tu
                  s_{i}^{\alpha}\in \{x,\tu y_i\}^{k_{\alpha}}$ i.e.
                  $\tu s_{i}^{\alpha}$ is a substitution of variables
                  for function $f_{\alpha}$ by variables for function
                  $f_i$.
                \end{itemize}
      \end{definition}

     \olivier{Faut-il adapter la notation? Pas nécessairement, mais peut-être?.}


\olivier{Enlevé: pas super pertinent.

	\begin{example}
	The following system is a polynomial length-ODE :
		\[
	f(0)=1\mbox{ and } \dderiv{f}{\length{x}}
        = f(x)\cdot f(x).
	\]

      \end{example}

    }

	\begin{definition}[$\spl$]
		A function $f:\N^k\to \Z$ is in $\spl$ if there exists $g_1,...,g_p$ with $g_i:\N^{k_i+1}\to \Z$, for $i=1,...,p$ (with $k_1=k$) and $h:\N^k\to \N$ in $\sll$ (or, equivalently $\FPtime$) such that:
		\begin{itemize}
			\item $g_1,...,g_p$ are solutions of a $\spl$-ODE
   	
			\item for all $\tu y\in \N^k$, $f(\tu y) = g_1(h(\tu y), \tu y)$.
		\end{itemize}
	\end{definition}

\arnaud{Ancienne version : 

\begin{definition}[System of Polynomial $L$-ODE with substitution (SPL-ODE)]
	Functions $\tu f = f_1,...,f_p:\N^{k+1}\rightarrow \Z$  are definable by a System of Polynomial $L$-ODE  with polynomial substitution if they are  solutions of a system of equations of the following form, for each $i\leq p$:

	\begin{equation}\label{SPLodeb}
	\dderiv{f_i(x,\tu y)}{L} =  P_i(f_1(x,\tu s_{i1}(x,\tu y)), \dots, f_p(x,\tu s_{ip}(x,\tu y)), x,\tu y)
	\end{equation}

	\noindent with $f_i(0, \tu y) = g_i(\tu y)$ and where, for each $i\leq p$

	\begin{itemize}
	\item  $g_i, P_i$ are \polynomial{} expressions
		\item for each $j\leq p$, $\tu s_{ij}(x,\tu y)\in  \N[x,\tu y]^k$ i.e. is a vector of polynomials in $x$, $\tu y$ with integer coefficients.
	\end{itemize}
	\end{definition}

We denote such a system above by:

\begin{equation}\label{SPLodebb}
\dderiv{\tu f(x, \tu y)}{L} =  \tu P(\tu f(x,\tu s(x,\tu y)), x,\tu y)
\end{equation}

\begin{example}
The following system is a polynomial length-ODE :
	\[
f(0)=1\mbox{ and } \dderiv{f}{\lengt}(x) = f(x)\cdot f(x).
\]

\end{example}

Since composition between function will not be allowed, or only in a very restricted form, to capture polynomial space, polynomial substitution has been introduced to gain in flexibility. However,  simple variable substitution is enough as it  will be shown in proofs.

%
%
%

%

\begin{definition}[$\spl$]
	A function $f:\N^k\to \Z$ is in $\spl$ if there exists $g_1,...,g_p:\N^{k+1}\to \Z$ and $h:\N^k\to \N$ in $\sll$ (or, equivalently $\FPtime$) such that:
	\begin{itemize}
		\item $g_1,...,g_p$ are solutions of a SP$\lengt$-ODE
		\[
		\dderiv{\tu g(x, \tu y)}{L} =  \tu P(\tu g(x,\tu s(x, \tu y)), x,\tu y)
		\]

		\item for all $\tu y\in \N^k$, $f(\tu y) = g_1(h(\tu y), \tu y)$.
	\end{itemize}
\end{definition}

%

}

\begin{theorem}\label{th:car of pspace}  A function is in $\FPspace$ if and only if it belongs to $\spl$.
\end{theorem}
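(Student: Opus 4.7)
The plan is to leverage the known characterizations of $\FPspace$ via random access machines. Two external ingredients are central. First, Galota's theorem \cite{Galota:2005bo} states that $f \in \FPspace$ iff $f = f_1 - f_2$ for some $f_1, f_2 \in \cPspace$. Second, Bertoni et al.\ \cite{Bertoni:1981fs} show that $\cPspace$ equals the class of functions computable in polynomial time on a $\{+,-,\times,\div\}$-RAM under unit-cost measure. Combining these, $f \in \FPspace$ iff $f$ is a difference of two functions computable in polynomial time on such a RAM. The whole proof then reduces to showing that $\spl$ captures precisely polynomial-time $\{+,-,\times,\div\}$-RAM computation (up to differences, which are subsumed by the fact that SPL-ODE values already live in $\Z$).

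For the inclusion $\spl \subseteq \FPspace$, I would take $f(\tu y) = g_1(h(\tu y), \tu y)$ with $h \in \FPtime$ and $g_1, \ldots, g_p$ solutions of a polynomial length-ODE with variable substitution. By the alternative view of length-ODEs (Lemma~\ref{fundobge}), evaluating $g_i$ at $(h(\tu y), \tu y)$ amounts to running $\length{h(\tu y)} = O(\mathrm{poly}(\length{\tu y}))$ iterations of the polynomial right-hand sides starting from $\FPtime$-computable initial values. Each iteration uses a bounded number of $+,-,\times$ operations on the current vector of values, while each variable substitution $f_j(x, \tu s_i^j, \tu y)$ corresponds to reading a stored value at a computed index, which is a constant number of RAM operations. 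Thus the whole system is simulated in polynomial time on a $\{+,-,\times\}$-RAM under unit cost, which lies in $\cPspace$ by Bertoni; with a sign decomposition this gives $\FPspace$ by Galota.

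For the converse $\FPspace \subseteq \spl$, I would start from $f = f_1 - f_2$ with $f_1, f_2 \in \cPspace$, and fix $\{+,-,\times,\div\}$-RAMs $M_1, M_2$ computing $f_1, f_2$ in time $\length{\tu y}^c$. I would then simulate such a RAM by an SPL-ODE in the spirit of the proof of Theorem~\ref{th:ptime characterization 2}: introduce the instruction pointer function $\inst(t, \tu y)$ and register content functions $R_j(t, i, \tu y)$ (where the auxiliary argument $i$ indexes memory cells for indirect access), and express their discrete length-derivatives as case analyses over the instruction label, using sign-based conditionals exactly as in the $\FPtime$ proof. The two new phenomena compared to that proof are: (i) right-hand sides become genuinely polynomial (no longer essentially linear) in order to encode instructions $R_j := R_j \times R_i$ and $R_j := R_j \div R_i$; (ii) indirect addressing $R_j := R_{R_i}$ is encoded via the variable substitution mechanism of SPL-ODEs, substituting the value of one register as the index argument of another. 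The final output is $f(\tu y) = R_{\text{out}}(h(\tu y), 0, \tu y)$ where $h(\tu y) = 2^{\length{\tu y}^c} \in \FPtime$ supplies a time bound larger than $\length{\tu y}^c$, matching the definition of $\spl$.

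The main obstacle is direction (ii): faithfully encoding both integer division and indirect addressing as SPL-ODEs whose right-hand sides fit the formal template of the definition. Division is not a primitive of the scheme and must be synthesized from $+,-,\times,\mathrm{sg}$ as a (polynomial) expression -- one natural route is to encode the quotient by a length-ODE built as an auxiliary subroutine and then treat it as an already-defined component of the system. For indirect addressing, one must verify that restricting substitutions to $\tu s_i^j \in \{x, \tu y_i\}^{k_j}$ is expressive enough: the trick is to use the register-value functions themselves as index arguments by passing a register value through the $\tu y_i$ slot, which is precisely the intended semantics of variable substitution in the definition of SPL-ODE. Beyond these two points, the rest of the simulation (instruction dispatch, halting, extraction of the output register) mirrors the FPTIME case and raises no new difficulty.
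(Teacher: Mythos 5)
Your overall architecture coincides with the paper's: both directions rest on the same two external results (the decomposition of $\FPspace$ functions as differences of $\cPspace$ functions from \cite{Galota:2005bo}, and the identification of $\cPspace$ with unit-cost polynomial time on $\{+,-,\times,\div\}$-RAMs from \cite{Bertoni:1981fs}), and your argument for $\spl \subseteq \FPspace$ --- dynamic programming over the polynomially many (time, substituted-argument) configurations, then evaluation of the resulting polynomial-size arithmetic expression on a unit-cost RAM with multiplication --- is essentially the paper's proof of that inclusion. Your worry about synthesizing the division instruction is also not the real issue; the paper is equally brief there (it defines the quotient as a max and leaves the formalization to the reader, in the spirit of the dichotomic-search examples).

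The converse direction $\FPspace \subseteq \spl$, however, has a genuine gap, located exactly at the point you flag as ``to be verified''. The definition of an $\spl$-ODE requires $\tu s_{i}^{\alpha}\in \{x,\tu y_i\}^{k_{\alpha}}$: the arguments at which another component $f_\alpha$ may be invoked must be \emph{variables} of the equation being written, not values of functions of the system. Your encoding of indirect addressing needs the memory function evaluated at a computed address, i.e.\ $R(t, A(t,\tu y), \tu y)$, and ``passing a register value through the $\tu y_i$ slot'' is precisely what the scheme forbids --- this is a misreading of the definition, not its intended semantics. Nor can the selection be simulated inside a \polynomial{} expression over legally substituted terms: the address space of a RAM with multiplication is exponential in $\length{\tu y}$, so no fixed expression can pick out the right cell. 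The paper circumvents this with a different encoding: it records only the two accumulators $A(t,\tu x)$, $B(t,\tu x)$ and the instruction counter $\inst(t,\tu x)$ as functions of time, and recovers the content of the cell addressed at time $t$ by a ``search the past'' function $\Search(t',t,\tu x)$, defined by a length-ODE in $t'$ with $t$ as a parameter, which scans the computation history for the last instant at which a write instruction stored $B(\cdot,\tu x)$ at the address $A(t,\tu x)$ (returning $0$ if there is none); the accumulator update then only needs the diagonal value $\Search(t,t,\tu x)$, and substituting the variable $t$ for the variable $t'$ \emph{is} a legal variable substitution. This history-lookup idea is the crux of the theorem (and the very reason the substitution mechanism appears in the definition of $\spl$); without it, or an equivalent device, the simulation of indirect addressing cannot be expressed in the formalism, so your proposed proof of this inclusion does not go through as written.
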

}

\vspace{-0.2cm}

\section*{Acknowledgements}

We would like to thank Sabrina Ouazzani for many scientific
discussions about the results in this article. 


\bibliographystyle{siamplain}


\end{document}